\DeclareSymbolFont{AMSb}{U}{msb}{m}{n}
\DeclareMathSymbol{\N}{\mathbin}{AMSb}{"4E}
\DeclareMathSymbol{\Z}{\mathbin}{AMSb}{"5A}
\DeclareMathSymbol{\R}{\mathbin}{AMSb}{"52}
\DeclareMathSymbol{\Q}{\mathbin}{AMSb}{"51}
\DeclareMathSymbol{\erert}{\mathbin}{AMSb}{"50}
\DeclareMathSymbol{\I}{\mathbin}{AMSb}{"49}
\DeclareMathSymbol{\C}{\mathbin}{AMSb}{"43}
\newcommand{\E}{\mathbb{E}}
\newcommand{\mynote}[2]{{\textcolor{#1}{ #2}}}
\newcommand{\red}[1]{\mynote{red}{#1}}
\definecolor{gray}{gray}{0.4}
\newcommand{\remove}[1]{}
\newtheorem{theorem}{Theorem}[section]
\newtheorem{lemma}[theorem]{Lemma}
\newtheorem{definition}[theorem]{Definition}
\newtheorem{remark}[theorem]{Remark}
\newtheorem{claim}[theorem]{Claim}
\newtheorem{observation}[theorem]{Observation}
\newtheorem{question}[theorem]{Question}
\newcommand{\AAA}{\mathcal A}
\newcommand{\BBB}{\mathcal B}
\newcommand{\oracle}{\mathcal{E}}
\newcommand{\DDD}{\mathcal D}
\newcommand{\eps}{\varepsilon}
\newcommand{\Gen}{{\rm Gen}}
\newcommand{\Enc}{{\rm Enc}}
\newcommand{\Dec}{{\rm Dec}}
\newcommand{\PRG}{\operatorname{\rm PRG}}
\newcommand{\SAMP}{\operatorname{\rm SAMP}}
\newcommand{\polylog}{\mathop{\rm polylog}}
\newcommand{\poly}{\mathop{\rm poly}}
\newcommand{\negl}{\mathop{\rm negl}}
\def\E{\operatorname*{\mathbb{E}}}
\def\Q{\operatorname*{\mathbb{Q}}}
\def\poly{\mathop{\rm{poly}}\nolimits}
\newcommand{\cM}{\mathcal{M}}
\newcommand{\accgame}{\mathsf{Acc}}
\newcommand{\thickhline}{%
    \noalign {\ifnum 0=`}\fi \hrule height 1pt
    \futurelet \reserved@a \@xhline
}
\newcolumntype{"}{@{\hskip\tabcolsep\vrule width 1pt\hskip\tabcolsep}}
\newlength{\fboxhsep}
\newlength{\fboxvsep}
\newlength{\fboxtoprule}
\newlength{\fboxbottomrule}
\newlength{\fboxleftrule}
\newlength{\fboxrightrule}
\def\@frameb@xother#1{%
  \@tempdima\fboxtoprule
  \advance\@tempdima\fboxvsep
  \advance\@tempdima\dp\@tempboxa
  \hbox{%
    \lower\@tempdima\hbox{%
      \vbox{%
        \hrule\@height\fboxtoprule
        \hbox{%
          \vrule\@width\fboxleftrule
          #1%
          \vbox{%
            \vskip\fboxvsep
            \box\@tempboxa
            \vskip\fboxvsep}%
          #1%
          \vrule\@width\fboxrightrule}%
        \hrule\@height\fboxbottomrule}%
    }%
  }%
}
\long\def\fboxother#1{%
  \leavevmode
  \setbox\@tempboxa\hbox{%
    \color@begingroup
    \kern\fboxhsep{#1}\kern\fboxhsep
    \color@endgroup}%
  \@frameb@xother\relax}
\begin{document}

\begin{titlepage}

\title{Separating Adaptive Streaming from Oblivious Streaming}
\author{
Haim Kaplan\thanks{Tel Aviv University and Google Research. \texttt{haimk@tau.ac.il}. Partially supported by the Israel Science Foundation (grant 1595/19), the German-Israeli Foundation (grant 1367/2017), and by the Blavatnik Family Foundation.}
\and
Yishay Mansour\thanks{Tel Aviv University and Google Research. \texttt{mansour.yishay@gmail.com}. This project has received funding from the European Research Council (ERC) under the European Union's Horizon 2020 research and innovation program (grant agreement No. 882396), and by the Israel Science Foundation (grant number 993/17).}
\and
Kobbi Nissim\thanks{Georgetown University. \texttt{kobbi.nissim@georgetown.edu}. Supported by NSF grant No. 1565387 TWC: Large: Collaborative: Computing Over Distributed Sensitive Data.}
\and
Uri Stemmer\thanks{Ben-Gurion University and Google Research. \texttt{u@uri.co.il}. Partially Supported by the Israel Science Foundation (grant 1871/19) and by the Cyber Security Research Center at Ben-Gurion University of the Negev.}
}

\date{February 17, 2021}
\maketitle

\begin{abstract}
We present a streaming problem for which every adversarially-robust streaming algorithm must use polynomial space, while there exists a classical (oblivious) streaming algorithm that uses only polylogarithmic space. This is the first separation between oblivious streaming and adversarially-robust streaming, and resolves one of the central open questions in adversarial robust streaming.
\end{abstract}

\end{titlepage}

\section{Introduction}
Consider a scenario in which data items are being generated one by one, e.g., IP traffic monitoring or web searches. 
Generally speaking, streaming algorithms aim to process such data streams while using only a limited amount of memory, significantly smaller than what is needed to store the entire data stream. Streaming algorithms have become a central and crucial tool for the analysis of massive datasets.

A typical assumption when designing and analyzing streaming algorithms is that the entire stream is {\em fixed} in advance (and is just provided to the streaming algorithm one item at a time), or at least that the choice of the items in the stream is {\em independent} of the internal state (and coin tosses) of the streaming algorithm. %
We refer to this setting as the {\em oblivious} setting. 
Recently, there has been a growing interest in streaming algorithms that maintain utility even when the choice of stream items depends on previous answers given by the streaming algorithm, and can hence depend on the internal state of the algorithm~\cite{MironovNS11,GHRSW12,GHSWW12,AhnGM12,AhnGM12b,HardtW13,BenEliezerY19,BenEliezerJWY20,HassidimKMMS20,WoodruffZhou20}. Such streaming algorithms are said to be {\em adversarially robust}.

Hardt and Woodruff~\cite{HardtW13} presented a negative result showing that, generally speaking, {\em linear} algorithms cannot be adversarially robust.\footnote{A streaming algorithm is {\em linear} if for some (possibly randomized) matrix $A$, its output depends only on $A$ and $Af$, where $f$ is the {\em frequency vector} of the stream.} This result does not rule out non-linear algorithms. Indeed, strong positive results were shown by~\cite{BenEliezerJWY20,HassidimKMMS20,WoodruffZhou20} who constructed (non-linear) adversarially robust algorithms for many problems of interest, with small overhead compared to the oblivious setting. 
This includes problems such as estimating frequency moments, counting the number of distinct elements in the stream, identifying heavy-hitters in the stream, estimating the median of the stream, entropy estimation, and more.
The strong positive results of~\cite{BenEliezerJWY20,HassidimKMMS20,WoodruffZhou20} raise the possibility that adversarial robustness can come ``for free'' in terms of the additional costs to memory, compared to what is needed in the oblivious setting. 

\begin{question}
Is there a gap between adversarial streaming and oblivious streaming? %
\end{question}

We provide a positive answer to this question. Specifically, we present a streaming problem for which every adversarially-robust streaming algorithm must use polynomial space, while there exists an oblivious streaming algorithm that uses only polylogarithmic space. 

\subsection{Streaming against adaptive adversaries}
Before describing our new results, we define our setting more precisely.
A stream of length $m$ over a domain $X$ consists of a sequence of updates $x_1,\dots,x_m\in X$. For $i\in[m]$ we write $\vec{x}_i=(x_1,\dots,x_i)$ to denote the first $i$ updates of the stream. 
Let $g:X^*\rightarrow\R$ be a function (for example, $g$ might count the number of distinct elements in the stream).
At every time step $i$, after obtaining the next element in the stream $x_i$, our goal is to output an approximation for $g(\vec{x}_i)$. 
Throughout the paper we use $\alpha$ for the approximation parameter and $\beta$ for the confidence parameter.

The adversarial streaming model, in various forms, was considered by~\cite{MironovNS11,GHRSW12,GHSWW12,AhnGM12,AhnGM12b,HardtW13,BenEliezerY19,BenEliezerJWY20,HassidimKMMS20,WoodruffZhou20}. We give here the formulation presented by Ben-Eliezer et al.~\cite{BenEliezerJWY20}. 
The adversarial setting is modeled by a two-player game between a (randomized) \texttt{StreamingAlgorithm} and an \texttt{Adversary}. At the beginning, we fix a function $g$. Then the game proceeds in rounds, where in the $i$th round:

\begin{enumerate}
	\item The \texttt{Adversary} chooses an update $x_i\in X$ for the stream, which can depend, in particular, on all previous stream updates and outputs of \texttt{StreamingAlgorithm}.
	\item The \texttt{StreamingAlgorithm} processes the new update $x_i$ and outputs its current response $z_i$.
\end{enumerate}

The goal of the \texttt{Adversary} is to make the \texttt{StreamingAlgorithm} output an incorrect response $z_i$ at some point $i$ in the stream, that is $z_i\notin(1\pm\alpha)\cdot g(\vec{x}_i)$. For example, in the distinct elements problem, the adversary's goal is that at some step $i$, the estimate $z_i$ will fail to be a $(1+\alpha)$-approximation of the true current number of distinct elements.

\subsection{Our results}
Loosely speaking, we show a reduction from the problem of adaptive data analysis (ADA) to the problem of adversarial streaming. Our results then follow from known impossibility results in the adaptive data analysis literature. In the ADA problem, given a sample $S$ containing $n$ independent samples from some unknown distribution $\DDD$ over a domain $X$, the goal is to provide answers to a sequence of adaptively chosen queries w.r.t.\ $\DDD$. 
Importantly, the answers must be accurate w.r.t.\ the (unknown) underlying distribution $\DDD$; not just w.r.t.\ the empirical sample $S$. In more detail, in the ADA problem, on every time step $i$ we get a query $q_i:X\rightarrow\{0,1\}$, and we need to respond with an answer $a_i$ that approximates $q_i(\DDD)\triangleq\E_{x\sim\DDD}[q_i(x)]$. Observe that if all of the queries were fixed before the sample $S$ is drawn, then we could simply answer each query $q_i$ with its empirical average $q_i(S)\triangleq\frac{1}{n}\sum_{x\in S}q_i(x)$. Indeed, by the Hoeffding bound, in such a case these answers provide good approximations to the true answers $q_i(\DDD)$. Furthermore, the number of queries $\ell$ that we can support can be exponential in the sample size $n$.
However, this argument breaks completely when the queries are chosen adaptively based on previous answers given by the mechanism, and the problem becomes much more complex. While, information-theoretically, it is still possible to answer an exponential number of queries, it is known that every {\em computationally efficient} mechanism cannot answer more than $n^2$ adaptive queries using a sample of size $n$.

We show that the ADA problem can be phrased as a streaming problem, where the first $n$ elements in the stream are interpreted as ``data points'' and later elements in the stream are interpreted as ``queries''. In order to apply existing impossibility results for the ADA problem, we must overcome the following two main challenges.

\paragraph{Challenge 1 and its resolution.} The difficulty in the ADA problem is to maintain accuracy w.r.t.\ the unknown underlying distribution (and not just w.r.t.\ the input sample, which is easy). In the streaming setting, however, there is no underlying distribution, and we cannot require a streaming algorithm to be accurate w.r.t.\ such a distribution.
Instead, we require the streaming algorithm to give accurate answers only w.r.t.\ the input sample (i.e., w.r.t.\ the dataset defined by the first $n$ elements in the stream). This makes our streaming problem well-defined. We then show that if these $n$ elements are sampled i.i.d.\ from some underlying distribution, then we can use {\em compression arguments} to show that if the streaming algorithm has small space complexity, and if its answers are accurate w.r.t.\ the empirical sample, then its answers must in fact be accurate also w.r.t.\ this underlying distribution. In other words, even though we only require the streaming algorithm to give accurate answers w.r.t.\ the empirical sample, we show that if it uses small space complexity then its answers must {\em generalize} to the underlying distribution. This allows us to formulate a link to the ADA problem. We remark that, in the actual construction, we need to introduce several technical modifications in order to make sure that the resulting streaming problem can be solved with small space complexity in the oblivious setting.

\paragraph{Challenge 2 and its resolution.} The impossibility results we mentioned for the ADA problem only hold for {\em computationally efficient} mechanisms, where we aim for an information-theoretic separation. While there exist information-theoretic impossibility results for the ADA problem, they are too weak to give a meaningful result in our context. We therefore cannot apply existing negative results for the ADA problem to our setting as is. Informally, the reason that the negative results for the ADA problem only hold for computationally efficient mechanisms is that their constructions rely on the existence of an efficient {\em encryption scheme} whose security holds under computational assumptions. We observe that, in our setting, we can replace this encryption scheme with a different scheme with information-theoretic security against adversaries with {\em bounded storage} capabilities. Indeed, in our setting, the ``adversary'' for this encryption scheme will be the streaming algorithm, whose storage capabilities are bounded.

\medskip
We obtain the following theorem.

\begin{theorem}\label{thm:mainIntro}
For every $w$, there exists a streaming problem over domain of size $\poly(w)$ and stream length $O(w^5)$ that requires at least $w$ space to be solved in the adversarial setting to within constant accuracy (small enough), but can be solved in the oblivious setting using space $O(\log^2(w))$.
\end{theorem}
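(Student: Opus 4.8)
The plan is to realize the ADA-to-streaming reduction sketched above, with all parameters tuned so that the oblivious solver fits in $O(\log^2 w)$ bits while the adversarial bound is $w$. Concretely, I would let the domain $X$, of size $\poly(w)$, consist of ``setup'' symbols and ``query'' symbols. The first $n=\poly(w)$ stream elements form a setup phase encoding a sample $S=(s_1,\dots,s_n)$ of points together with a long public random string $R$ of length $\poly(w)$ (spread over $\poly(w)$ setup symbols); every later element is a query symbol encoding a function $q:X\to\{0,1\}$ taken from the succinct query family used in the ADA hardness result, scaled so that one query fits in a single symbol of an $O(\log w)$-bit domain. The function $g$ to be tracked is defined so that on a query symbol $q$ the correct response is the empirical mean $\frac1n\sum_{j}q(s_j)$ (responses during the setup phase are unconstrained, so the problem stays well defined). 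Setting the stream length to $m=O(w^5)$ leaves ample room for the $n$ setup symbols and for the $\poly(w)$ adaptive queries the attack will issue.

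\emph{Oblivious upper bound.} In the oblivious model the stream, hence the query sequence, is fixed independently of the algorithm's coins, so it is enough to keep --- by reservoir sampling over the first $n$ elements --- a uniformly random sub-sample $T\subseteq S$ of size $O(\log(m)/\alpha^2)=O(\log w)$, and to answer each query $q$ with $\frac1{|T|}\sum_{s\in T}q(s)$. Hoeffding's inequality together with a union bound over the (fixed) queries then gives that every answer is within $\alpha$ of $\frac1n\sum_j q(s_j)$ with probability $1-\beta$. The retained state is $O(\log w)$ symbols of $O(\log w)$ bits each, i.e.\ $O(\log^2 w)$ bits; the query family and the encoding of $R$ are chosen so that answering an honest query never requires more than this sub-sample, so the oblivious solver never has to remember $R$.

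\emph{Adversarial lower bound.} Suppose, toward a contradiction, that some streaming algorithm $\mathcal A$ solves the problem against an adaptive adversary using space $s<w$. I would have the adversary draw $S\sim\DDD^n$ for a suitable source distribution $\DDD$ over $X$ (and $R$ uniform), feed the corresponding setup phase, and then run the ADA reconstruction attack, treating $\mathcal A$'s response to each query symbol as the analyst's answer. Two ingredients make the attack go through against $\mathcal A$ even though $\mathcal A$ may be computationally unbounded. \emph{(i) Generalization from bounded space} (a compression argument): during the query phase $\mathcal A$ depends on $S$ essentially only through $\mathcal A$'s $s$-bit state at the end of the setup phase, so the transcript of answers carries at most about $s$ bits of (max-)information about $S$; a description-length / max-information generalization bound then shows that answers that are $\alpha$-accurate for the empirical means are, with high probability over $S\sim\DDD^n$, also $O(\alpha)$-accurate for $q(\DDD)\triangleq\E_{x\sim\DDD}[q(x)]$, so $\mathcal A$ is in effect a valid ADA mechanism for $\DDD$. \emph{(ii) Information-theoretic security against $\mathcal A$}: the hard ADA instances conceal the attack's tracing queries inside ciphertexts, and I would instantiate the encryption in the bounded-storage model, using the string $R$ from the setup phase as its public randomness, so that a space-$s$ algorithm with $s<w\ll|R|$ cannot distinguish tracing queries from honest ones (it has not retained the relevant part of $R$). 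Hence $\mathcal A$ cannot evade the attack, and the same contradiction that rules out \emph{efficient} ADA mechanisms --- namely that no mechanism with a sample of size $n$ can keep generalizing over the $\poly(w)$ adaptive queries posed --- now rules out $\mathcal A$ regardless of its running time. Therefore every adversarially-robust algorithm for the problem needs at least $w$ space.

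\emph{Main obstacle.} The delicate point is the simultaneous co-design of the query encoding and the bounded-storage encryption: honest queries must be answerable within the $O(\log^2 w)$ budget (so the oblivious bound holds and the problem is well defined), while the adaptive tracing queries must \emph{not} be answerable by any space-$w$ algorithm (so the lower bound holds) --- and this has to be reconciled with the generalization bound, which wants $n$ large (so that $s<w$ already forces $o(1)$ generalization error over all $\poly(w)$ queries), with the oblivious budget, which caps the state at $O(\log^2 w)$ bits, and with keeping the honest stream length at $O(w^5)$ and the domain at $\poly(w)$. Balancing all of these constraints at once is what forces the technical modifications alluded to above.
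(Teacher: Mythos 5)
Your high-level plan --- reduce from the ADA problem, use a compression/transcript-compressibility argument to lift empirical accuracy to distributional accuracy, and replace the computational encryption in the ADA hardness instance with bounded-storage-model encryption so that the bound is information-theoretic against any space-$w$ algorithm --- is exactly the paper's strategy, including the final parameter choices ($n=\Theta(w)$, domain $\poly(w)$, stream length $\Theta(w^5)$). So the two nontrivial ideas are in place.

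There is, however, a concrete gap in your encoding of the stream, and it is precisely where the paper has to be careful. You propose broadcasting the BSM public string $R$ during the setup phase and then fitting each query into a single $O(\log w)$-bit symbol, claiming ``the oblivious solver never has to remember $R$.'' That cannot be made to work as stated. First, a query must specify (an encryption of) $q(p)$ for each of the $2^d=\Theta(w)$ domain points, because the streaming algorithm must be able to compute the empirical mean over an arbitrary sub-multiset of those points; this is already $\Omega(w)$ bits, so a query cannot possibly fit in one $O(\log w)$-bit symbol. Second, with the keys fixed at setup time and the public string $R$ also fixed at setup time, the BSM pad $\PRG(R,k_p)$ is the \emph{same} for every query, so the pad would be reused across $\poly(w)$ queries and the hiding property collapses. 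To get a fresh pad per query you must refresh the public randomness per query, and the oblivious algorithm (which caches $O(\log w)$ sampled $(p,k_p)$ pairs) then needs those fresh public bits at the moment it answers, not during setup; if $R$ were all in the setup phase it would need to retain $\Omega(\ell\cdot t)=\poly(w)$ bits of it, blowing the $O(\log^2 w)$ budget. The paper resolves exactly this tension by placing a \emph{fresh} PRG public stream $\Gamma_p$ of length $a=\Theta(w^2)$ inside the query bulk for each $p$, immediately before the encrypted bit $\sigma_p$, so that the oblivious algorithm feeds $\Gamma_p$ on the fly to the $O(\log w)$ PRG instances it maintains (each using only $O(\log(1/\varepsilon))$ workspace, thanks to the locally-computable extractor of Theorem~\ref{thm:Vadhan}), and never stores $\Gamma_p$. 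This is why a single query spans $(a+1)\cdot 2^d=\Theta(w^3)$ stream symbols, and why $m=\Theta(w^5)$ decomposes as $\Theta(w^2)$ queries times $\Theta(w^3)$ symbols each. Your encoding has to be reworked along these lines.

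A secondary, smaller omission: the information-theoretic ADA lower bound you invoke (Steinke--Ullman) applies to \emph{natural} mechanisms --- those whose answers depend on the query only through its values on the sample. The raw \texttt{AnswerQueries}-style mechanism built from $\AAA$ is not literally natural. The paper patches this by constructing \texttt{AnswerQueriesOTP}, which replaces the pads for $p\notin P$ with truly uniform bits (a genuine one-time pad), making the mechanism natural by construction, and then uses a hybrid argument over the $2^d$ points together with BSM security to show the two mechanisms produce statistically close transcripts. Your write-up gestures at ``cannot distinguish tracing queries from honest ones'' but does not carry out this step; without it you cannot invoke Theorem~\ref{thm:adaNegative}.
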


\subsubsection{Optimality of our results in terms of the flip-number} 
The previous works of \cite{BenEliezerJWY20,HassidimKMMS20,WoodruffZhou20} stated their positive results in terms of the following definition.

\begin{definition}[Flip number \cite{BenEliezerJWY20}]
Let $g$ be a function defining a streaming problem. 
The {\em $(\alpha,m)$-flip number} of $g$, denoted as $\lambda$, is the maximal number of times that the value of the value of $g$ can change (increase or decrease) by a factor of $(1+\alpha)$ during a stream of length $m$.
\end{definition}

The works of~\cite{BenEliezerJWY20,WoodruffZhou20,HassidimKMMS20} presented general frameworks form transforming an oblivious streaming algorithm $\AAA$ into an adversarially robust streaming algorithm $\BBB$ with space complexity (roughly) $\sqrt{\lambda}\cdot{\rm Space}(\AAA)$. That is, the results of~\cite{BenEliezerJWY20,WoodruffZhou20,HassidimKMMS20} showed that, generally, adversarial robustness requires space blowup at most (roughly) $\sqrt{\lambda}$ compared to the oblivious setting. For the streaming problem we present (see Theorem~\ref{thm:mainIntro}) it holds that the flip-number is $O(w^2)$. That is, for every $w$, we present a streaming problem with flip-number $\lambda=O(w^2)$, that requires at least $w=\Omega(\sqrt{\lambda})$ space to be solved in the adversarial setting to within constant accuracy (small enough), but can be solved in the oblivious setting using space $O(\log^2(w))$. This means that, in terms of the dependency of the space complexity in the flip-number, our results are nearly tight. In particular, in terms of $\lambda$, our results show that a blowup of $\tilde{\Omega}(\sqrt{\lambda})$ to the space complexity is generally unavoidable in the adversarial setting.

\subsubsection{A reduction from adaptive data analysis to adversarial streaming}
Informally, we consider the following streaming problem, which we call the Streaming Adaptive Data Analysis (SADA) problem. On every time step $i\in[m]$ we get an update $x_i \in X$. 
We interpret the first $n$ updates in the stream $x_1,\dots,x_n$ as ``data points'', defining a multiset $S=\{x_1,\dots,x_n\}$. This multiset does not change after time $n$.

The next updates in the stream (starting from time $i=n+1$) define ``queries'' $q:X\rightarrow\{0,1\}$ that should be evaluated by the streaming algorithm on the multiset $S$. That is, for every such query $q$, the streaming algorithm should respond with an approximation of $q(S)=\frac{1}{n}\sum_{x\in S}q(x)$. A technical issue here is that every such query is represented using $|X|$ bits, and hence, cannot be specified using a single update in the stream (which only consists of $\log|X|$ bits). Therefore, every query is specified using $|X|$ updates in the stream. Specifically, starting from time $i=n+1$, every bulk of $|X|$ updates defines a query $q:X\rightarrow\{0,1\}$. At the end of every such bulk, the goal of the streaming algorithm is to output (an approximation for) the average of $q$ on the multiset $S$. On other time steps, the streaming algorithm should output 0.

As we mentioned, we use {\em compression arguments} to show that if the streaming algorithm is capable of accurately approximating the average of every such query on the multiset $S$, and if it uses small space, then when the ``data points'' (i.e., the elements in the first $n$ updates) are sampled i.i.d.\ from some distribution $\DDD$ on $X$, then the answers given by the streaming algorithm must in fact be accurate also w.r.t.\ the expectation of these queries on $\DDD$. This means that the existence of streaming algorithm for the SADA problem implies the existence of an algorithm for the adaptive data analysis (ADA) problem, with related parameters. Applying known impossibility results for the ADA problem, this results in a contradiction. However, as we mentioned, the impossibility results we need for the ADA problem only hold for {\em computationally efficient} mechanisms. Therefore, the construction outlined here only rules out {\em computationally efficient} adversarially-robust streaming algorithms for the SADA problem. To get an information-theoretic separation, we modify the definition of the SADA problem and rely on cryptographic techniques from the {\em bounded storage} model. 

\begin{remark}
In Section~\ref{sec:SADA2} we outline a variant of the SADA problem, which is more ``natural'' in some sense, but for which we can only show a computational separation.
\end{remark}

\section{Preliminaries}

Our results rely on tools and techniques from the literature on adaptive data analysis, computational learning (in particular {\em compression}), and cryptography (in particular {\em pseudorandom generators} and {\em encryption schemes}). We now introduce the needed preliminaries.

\subsection{Adaptive data analysis}

A \emph{statistical query} over a domain $X$ is specified by a predicate $q:X\rightarrow\{0,1\}$. The value of a query $q$ on a distribution $\DDD$ over $X$ is $q(\DDD)=\E_{x\sim\DDD}[q(x)]$.
Given a database $S\in X^n$ and a query $q$, we denote the empirical average of $q$ on $S$ as $q(S)=\frac{1}{n}\sum_{x\in S}{q(x)}$.

In the adaptive data analysis (ADA) problem, the goal is to design a \emph{mechanism} $\cM$ that answers queries on $\DDD$ using only i.i.d.\ samples from it. Our focus is the case where the queries are chosen adaptively and adversarially.
Specifically, $\cM$ is a stateful algorithm that holds a collection of samples $(x_1,\dots,x_n)$, takes a statistical query $q$ as input, and returns an answer $z$. We require that when $x_1,\dots,x_n$ are independent samples from $\DDD$, then the answer $z$ is close to $q(\DDD)$. Moreover we require that this condition holds for every query in an adaptively chosen sequence $q_1,\dots, q_{\ell}$. Formally, we define an accuracy game $\accgame_{n, \ell,\cM,\mathbb{A}}$ between a mechanism $\cM$ and a stateful \emph{adversary} $\mathbb{A}$ in Algorithm~\ref{fig:accgame1}.

\begin{definition}\label{def:accuratemechanism}
A mechanism $\cM$ is \emph{$(\alpha,\beta)$-statistically-accurate for $\ell$ adaptively chosen statistical queries given $n$ samples} if for every adversary $\mathbb{A}$ and every distribution $\DDD$,
\begin{equation}
\Pr_{\substack{S\sim\DDD^n\\ \accgame_{n, \ell,\cM, \mathbb{A}}(S)}}\left[
\max_{ i \in [\ell]}\left| q_i(\DDD)-z_i \right|\leq\alpha
\right]\geq1-\beta.
\label{eq:accuratemechanism}
\end{equation}
\end{definition}

\begin{remark}
Without loss of generality, in order to show that a mechanism $\cM$ is $(\alpha,\beta)$-statistically-accurate (as per Definition~\ref{def:accuratemechanism}), it suffices to consider only {\em deterministic} adversaries $\mathbb{A}$. Indeed, given a randomized adversary $\mathbb{A}$, if requirement~(\ref{eq:accuratemechanism}) holds for every fixture of its random coins, then it also holds when the coins are random.
\end{remark}

We use a similar definition for empirical accuracy:

\begin{definition}\label{def:EmpAccurate}
A mechanism $\cM$ is \emph{$(\alpha,\beta)$-empirically accurate for $\ell$ adaptively chosen statistical queries given a database of size $n$} if for every adversary $\mathbb{A}$ and every database $S$ of size $n$,
\begin{equation*}
\Pr_{\substack{ \accgame_{n, \ell,\cM, \mathbb{A}}(S)}}\left[
\max_{ i \in [\ell]}\left| q_i(S)-z_i \right|\leq\alpha
\right]\geq1-\beta.
\end{equation*}
\end{definition}

\begin{algorithm*}[t!]
\caption{The Accuracy Game $\accgame_{n, \ell,\cM,\mathbb{A}}$.}\label{fig:accgame1}

{\bf Input:} A database $S\in X^n$.

\begin{enumerate}[leftmargin=15pt,rightmargin=10pt,itemsep=1pt,topsep=3pt]

\item The database $S$ is given to $\cM$.

\item For $i=1$ to $\ell$, 

\begin{enumerate}
	\item The adversary $\mathbb{A}$ chooses a statistical query $q_i$.
	\item The mechanism $\cM$ gets $q_i$ and outputs an answer $z_i$.
	\item The adversary $\mathbb{A}$ gets $z_i$.
\end{enumerate}

\item Output the transcript $(q_1,z_1,\dots,q_{\ell},z_{\ell})$.

\end{enumerate}
\end{algorithm*}

\subsection{Transcript compressibility}

An important notion that allows us to argue about the utility guarantees of an algorithm that answers adaptively chosen queries is {\em transcript compressibility}, defined as follows.

\begin{definition}[\cite{DworkFHPRR-nips-2015}]
A mechanism $\cM$ is {\em transcript compressible} to $b(n,\ell)$ bits if for every deterministic adversary $\mathbb{A}$ there is a set of transcripts $H_{\mathbb{A}}$ of size $\left|H_{\mathbb{A}}\right|\leq2^{b(n,\ell)}$ such that for every dataset $S\in X^n$ we have
$$
\Pr\left[ \accgame_{n, \ell,\cM, \mathbb{A}}(S)\in H_{\mathbb{A}} \right]=1.
$$
\end{definition}

The following theorem shows that, with high probability, for every query generated throughout the interaction with a transcript compressible mechanism it holds that its empirical average is close to its expectation.

\begin{theorem}[\cite{DworkFHPRR-nips-2015}]\label{thm:transcriptCompression}
Let $\cM$ be transcript compressible to $b(n,\ell)$ bits, and let $\beta>0$. Then, for every adversary $\mathbb{A}$ and for every distribution $\DDD$ it holds that
$$
\Pr_{\substack{S\sim\DDD^n\\ \accgame_{n, \ell,\cM, \mathbb{A}}(S)}}\left[ \exists i \text{ such that } \left|q_i(S)-q_i(\DDD)\right|>\alpha  \right]\leq\beta,
$$
where
$$
\alpha=O\left( \sqrt{\frac{b(n,\ell)+\ln(\ell/\beta)}{n}}  \right).
$$
\end{theorem}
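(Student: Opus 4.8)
The plan is to use transcript compressibility to replace the \emph{adaptively} chosen sequence of queries by a single \emph{fixed} (and small) collection of queries, over which an ordinary union bound applies. First I would fix a deterministic adversary $\mathbb{A}$ and a distribution $\DDD$ (a randomized adversary will be reduced to this case at the end by conditioning on its coins). Since $\cM$ is transcript compressible to $b(n,\ell)$ bits, there is a set $H_{\mathbb{A}}$ of transcripts with $|H_{\mathbb{A}}|\le 2^{b(n,\ell)}$ such that $\accgame_{n,\ell,\cM,\mathbb{A}}(S)\in H_{\mathbb{A}}$ with probability $1$, for \emph{every} dataset $S\in X^n$. The key point is that $H_{\mathbb{A}}$ is quantified before the sample is drawn: it is a single set depending only on $\cM$ and $\mathbb{A}$, not on $S$.

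Next I would extract from $H_{\mathbb{A}}$ the set $Q$ of all statistical queries that appear as some coordinate $q_i$ of some transcript $(q_1,z_1,\dots,q_\ell,z_\ell)\in H_{\mathbb{A}}$. Since each transcript contains exactly $\ell$ queries, $|Q|\le \ell\cdot 2^{b(n,\ell)}$, and $Q$ is a fixed set independent of $S$. Moreover, in any execution of the accuracy game the transcript lies in $H_{\mathbb{A}}$ with probability $1$, so every query $q_i$ that is actually generated belongs to $Q$; hence the bad event $\{\exists i\colon |q_i(S)-q_i(\DDD)|>\alpha\}$ is contained in $\{\exists q\in Q\colon |q(S)-q(\DDD)|>\alpha\}$, and it suffices to bound the probability of the latter.

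Finally I would apply a standard concentration-plus-union-bound argument to the \emph{fixed} class $Q$. For each $q\in Q$, since $S\sim\DDD^n$ consists of $n$ i.i.d.\ samples and $q(x)\in\{0,1\}$, the Hoeffding bound gives $\Pr_{S\sim\DDD^n}\!\left[|q(S)-q(\DDD)|>\alpha\right]\le 2\exp(-2\alpha^2 n)$. Taking a union bound over the at most $\ell\cdot 2^{b(n,\ell)}$ queries in $Q$,
\[
\Pr_{S\sim\DDD^n}\!\left[\exists q\in Q\colon |q(S)-q(\DDD)|>\alpha\right]\ \le\ \ell\cdot 2^{b(n,\ell)+1}\cdot\exp(-2\alpha^2 n).
\]
Setting the right-hand side equal to $\beta$ and solving for $\alpha$ yields $\alpha=\sqrt{\frac{(b(n,\ell)+1)\ln 2+\ln\ell+\ln(1/\beta)}{2n}}=O\!\left(\sqrt{\frac{b(n,\ell)+\ln(\ell/\beta)}{n}}\right)$, which is the claimed bound. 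For a randomized adversary $\mathbb{A}$, I would apply the above to each fixing of $\mathbb{A}$'s random coins (for which $\mathbb{A}$ is deterministic and the compression set exists) and average over the coins.

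I do not expect a serious obstacle here: the one real idea is the first step, namely observing that compressibility caps the number of \emph{distinct} queries that can ever be asked across all runs, thereby turning the adaptive problem into a non-adaptive one; after that the argument is textbook Hoeffding and union bounds. The only care needed is routine: tracking the constants hidden in the $O(\cdot)$, and the reduction of randomized adversaries to deterministic ones.
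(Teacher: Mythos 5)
Your proof is correct, but note that the paper does not prove this theorem at all: it is stated with a citation to Dwork, Feldman, Hardt, Pitassi, Reingold, and Roth (the \texttt{DworkFHPRR-nips-2015} reference), and the proof lives there. Your argument is exactly the standard one from that source: transcript compressibility pins down a fixed, $S$-independent set of at most $\ell\cdot 2^{b(n,\ell)}$ queries that can ever occur, and then Hoeffding plus a union bound over that fixed set gives the stated $\alpha = O\bigl(\sqrt{(b(n,\ell)+\ln(\ell/\beta))/n}\bigr)$; the reduction to deterministic adversaries by conditioning on coins is also the right way to close the loop. So there is nothing to flag beyond the observation that you have reproduced the cited proof rather than compared against one in this paper.
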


\subsection{Pseudorandom generators in the bounded storage model}\label{sec:BSM}

Our results rely on the existence of pseudorandom generators providing information-theoretic security against adversaries with bounded storage capabilities. This security requirement is called the {\em bounded storage model}. This model was introduced by Maurer~\cite{maurer1992conditionally}, and has generated many interesting results, e.g.,~\cite{maurer1992conditionally,cachin1997unconditional,aumann1999information,aumann2002everlasting,ding2002hyper,dziembowski2004optimal,lu2004encryption,harnik2006everlasting}. We give here the formulation presented by Vadhan~\cite{vadhan2004}.

The bounded storage model utilizes a short seed $K\in\{0,1\}^b$ (unknown to the adversary) and a long stream of public random bits $X_1,X_2,\dots$ (known to all parties). A {\em bounded storage model (BSM) pseudorandom generator} is a function
$\PRG:\{0,1\}^a\times\{0,1\}^b\rightarrow\{0,1\}^c$, typically with $b,c\ll a$. Such a scheme is to be used as follows. 
Initially, two (honest) parties share a seed $K\in\{0,1\}^b$ (unknown to the adversary). At time $t\in[T]$, the next $a$ bits of the public stream $(X_{(t-1)a},\dots,X_{ta})$ are broadcast. The adversary is allowed to listen to this stream, however, it cannot store all of it as it has bounded storage capabilities. The honest parties apply $\PRG(\cdot,K)$ to this stream obtain $c$ pseudorandom bits, denoted as $Y_t\in\{0,1\}^c$.

We now formally define security for a BSM pseudorandom generator. Let $\beta a$ be the bound on the storage of the adversary $\AAA$, and denote by $S_t\in\{0,1\}^{\beta a}$ the state of the adversary at time $t$. We consider the adversary's ability to distinguish two experiments --- the ``real'' one, in which the pseudorandom generator is used, and an ``ideal'' one, in which truly random bits are used. Let $\AAA$ be an arbitrary function representing the way the adversary updates its storage and attempts to distinguish the two experiments at the end.

\paragraph{Real Experiment:}
\begin{itemize}
	\item Let $X=(X_1,X_2,\dots,X_{Ta})$ be a sequence of uniformly random bits, let $K\leftarrow\{0,1\}^b$ be the key, and let the adversary's initial state by $S_0=0^{\beta a}$. 
	\item For $t=1,\dots,T$:\\ -- Let $Y_t=\PRG\left( X_{(t-1)a+1},\dots,X_{ta}, K \right)\in\{0,1\}^c$ be the pseudorandom bits.\\ -- Let $S_t=\AAA\left( Y_1,\dots,Y_{t-1}, S_{t-1},  X_{(t-1)a+1},\dots,X_{ta} \right)\in\{0,1\}^{\beta a}$ be the adversary's new state.
	\item Output $\AAA\left( Y_1,\dots,Y_T,  S_T,  K  \right)$
\end{itemize}

\paragraph{Ideal Experiment:}
\begin{itemize}
	\item Let $X=(X_1,X_2,\dots,X_{Ta})$ be a sequence of uniformly random bits, let $K\leftarrow\{0,1\}^b$ be the key, and let the adversary's initial state by $S_0=0^{\beta a}$. 
	\item For $t=1,\dots,T$:\\ -- Let $Y_t\leftarrow\{0,1\}^c$ be truly random bits.\\ -- Let $S_t=\AAA\left( Y_1,\dots,Y_{t-1}, S_{t-1},  X_{(t-1)a+1},\dots,X_{ta} \right)\in\{0,1\}^{\beta a}$ be the adversary's new state.
	\item Output $\AAA\left( Y_1,\dots,Y_T,  S_T,  K  \right)\in\{0,1\}$.
\end{itemize}

Note that at each time step we give the adversary access to all the past $Y_i$'s ``for free'' (i.e. with no cost in the storage bound), and in the last time step, we give the adversary the adversary the key $K$.

\begin{definition}[\cite{vadhan2004}]
We call $\PRG:\{0,1\}^a\times\{0,1\}^b\rightarrow\{0,1\}^c$ an {\em $\varepsilon$-secure BSM pseudorandom generator} for storage rate $\beta$ if for every adversary $\AAA$ with storage bound $\beta a$, and every $T\in\N$, the adversary $\AAA$ distinguishes between the
real and ideal experiments with advantage at most $T\varepsilon$. That is, 
$$
\left|
\Pr_{\rm real}\left[ \AAA\left( Y_1,\dots,Y_T,  S_T,  K  \right)=1  \right]
-
\Pr_{\rm ideal}\left[ \AAA\left( Y_1,\dots,Y_T,  S_T,  K  \right)=1  \right]
\right|\leq T\cdot\varepsilon
$$
\end{definition}

\begin{remark}
No constraint is put on the computational power of the adversary except for the storage bound of $\beta a$ (as captured by $S_t\in\{0,1\}^{\beta a}$). This means that the distributions of
$(Y_1,\dots,Y_T,  S_T,  K)$ in the real and ideal experiments are actually close in a statistical sense -- they must have statistical difference at most $T\cdot\varepsilon$.
\end{remark}

We will use the following result of Vadhan~\cite{vadhan2004}. We remark that this is only a special case of the results of Vadhan, and refer the reader to~\cite{vadhan2004} for a more detailed account.

\begin{theorem}[\cite{vadhan2004}]\label{thm:Vadhan}
For every $a\in\N$, every $\varepsilon>\exp\left( -a/2^{O(\log^{*}a)} \right)$, and every $c\leq a/4$, there is a BSM pseudorandom generator $\PRG:\{0,1\}^a \times\{0,1\}^b\rightarrow\{0,1\}^c$ such that
\begin{enumerate}
	\item $\PRG$ is $\varepsilon$-secure for storage rate $\beta\leq1/2$.
	\item $\PRG$ has key length $b=O(\log(a/\varepsilon))$.
	\item For every key $K$, $\PRG(\cdot,K)$ reads at most $t=O(c+\log(1/\varepsilon))$ bits from the public stream (nonadaptively).
	\item $\PRG$ is computable in time $\poly(t,b)$ and uses workspace $\poly(\log t, \log b)$ in addition to the
$t$ bits read from the public stream and the key of length $b$.
\end{enumerate}
\end{theorem}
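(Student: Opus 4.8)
\textbf{Proof plan for Theorem~\ref{thm:mainIntro}.}

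The plan is to instantiate the reduction from adaptive data analysis (ADA) to adversarial streaming that was outlined in the introduction, and then to combine it with the known computational hardness of ADA together with Vadhan's BSM pseudorandom generator (Theorem~\ref{thm:Vadhan}) to upgrade the computational separation into an information-theoretic one. Concretely, I would first define the SADA streaming problem carefully: on a stream over a domain $X$ of size $\poly(w)$, the first $n$ updates encode a dataset $S\in X^n$, and each subsequent bulk of $|X|$ updates encodes a statistical query $q:X\to\zo$, after which the algorithm must output an approximation of $q(S)$ (and $0$ on intermediate steps). To make the oblivious upper bound possible, the query encoding should not be the naive bit-vector (which would be trivially memorizable); instead the ``query'' should be delivered as a PRG seed together with a long public-randomness block, so that an oblivious algorithm — which sees the stream fixed in advance, hence can subsample the dataset — can answer each query with a short random subsample of $S$, giving space $O(\log^2 w)$. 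I would choose parameters so that the number of supported queries is $\ell = O(w)$ (forcing stream length $O(w^5)$ given $|X|=\poly(w)$ and $n=\poly(w)$), and the target accuracy is a small enough constant $\alpha$.

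For the adversarial lower bound, suppose toward a contradiction that some adversarially-robust streaming algorithm $\cM$ solves SADA using space less than $w$. I would then argue that $\cM$, viewed as a mechanism for answering adaptively chosen statistical queries, is transcript compressible: since its internal state has at most $w$ bits, at each of the $\ell$ query-steps the transcript is determined by the state, so the set of reachable transcripts has size $2^{O(w\ell)}$, i.e. $\cM$ is transcript compressible to $b(n,\ell)=O(w\ell)$ bits. By Theorem~\ref{thm:transcriptCompression}, choosing $n$ large enough ($n = \tilde{\Theta}(w\ell/\alpha^2)$, which is still $\poly(w)$) forces every query's empirical answer $q_i(S)$ to be within $\alpha$ of $q_i(\DDD)$ with high probability when $S\sim\DDD^n$. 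Combined with $\cM$'s empirical accuracy, this makes $\cM$ an $(O(\alpha),\beta)$-statistically-accurate mechanism for $\ell = \omega(n^2)$... wait — here is the catch, and the main obstacle: to get a contradiction from the known \emph{computational} impossibility of ADA (which rules out efficient mechanisms answering more than $\approx n^2$ queries), the ADA-adversary we build must itself be efficient, yet that adversary internally runs an encryption/decryption scheme. The resolution, following Challenge~2 in the introduction, is to replace the computational encryption scheme used in the ADA lower bound by a BSM encryption scheme built from the PRG of Theorem~\ref{thm:Vadhan}: the only party whose resources are constrained is the streaming algorithm itself (storage $< w \le \beta a$), and against storage-bounded adversaries the BSM PRG gives \emph{information-theoretic} security, so no computational assumption is needed and the ADA-adversary may be computationally unbounded.

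The key steps, in order, are therefore: (1) formalize the SADA problem with the PRG-based query encoding and fix all parameters ($|X|=\poly(w)$, $n=\poly(w)$, $\ell=O(w)$, $m=O(w^5)$, $\alpha$ a small constant); (2) give the oblivious algorithm — exploit that the stream is fixed in advance to subsample $S$ and use Theorem~\ref{thm:Vadhan} to recover the $i$-th query from its seed and the public block within space $O(\log^2 w)$, and prove correctness by a Hoeffding/union bound over the $\ell$ queries; (3) for the lower bound, convert a hypothetical $<w$-space robust algorithm into a transcript-compressible mechanism, apply Theorem~\ref{thm:transcriptCompression} to get statistical accuracy, and feed it into the ADA lower bound where the cryptographic primitive is instantiated in the bounded storage model via Theorem~\ref{thm:Vadhan}, obtaining an efficient-\emph{enough}-for-the-attack but unbounded ADA-adversary that breaks the mechanism, a contradiction. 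I expect step (3), specifically the careful re-implementation of the ADA hardness construction with BSM encryption — verifying that the streaming algorithm's storage bound is exactly the ``adversary storage bound'' required by the PRG's security guarantee, and that the public-randomness blocks fit inside the stream without blowing up its length beyond $O(w^5)$ — to be the main obstacle; the oblivious upper bound and the transcript-compression argument are comparatively routine given the tools already stated.
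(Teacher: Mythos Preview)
First, a bookkeeping issue: the statement you were asked to address is Theorem~\ref{thm:Vadhan}, which is a result quoted from Vadhan's paper and is not proved here at all; the paper uses it as a black box. Your write-up is instead a plan for Theorem~\ref{thm:mainIntro} (equivalently Theorem~\ref{thm:main}). I will evaluate it as such.

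Your overall architecture matches the paper's, but there are two genuine gaps.

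\textbf{The transcript-compression bound is too weak, and this breaks the parameters.} You claim the mechanism is transcript compressible to $b(n,\ell)=O(w\ell)$ bits, reasoning that the $w$-bit state is recorded at each of the $\ell$ query steps. The paper's key observation (Claim~\ref{claim:compress}) is much sharper: after fixing all auxiliary randomness ($\vec{\Gamma},\vec{k},r_1,r_2$) and the deterministic adversary, the \emph{entire} transcript is determined by the $w$-bit state of $\AAA$ at the end of Step~3 (i.e.\ right after the $n$ data updates). So the compression is to $w$ bits, not $O(w\ell)$. This matters decisively: with your bound, Theorem~\ref{thm:transcriptCompression} forces $n=\tilde\Theta(w\ell)$, and then the requirement $\ell=\omega(n^2)$ becomes $\ell\gg w^2\ell^2$, which is impossible. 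With the paper's bound you get $n=\Theta(w)$ and can take $\ell=\Theta(n^2)=\Theta(w^2)$, exactly what Theorem~\ref{thm:adaNegative} needs. Your choice $\ell=O(w)$ is not enough queries to contradict the ADA lower bound in any case.

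\textbf{The role of the BSM PRG is misidentified.} You describe Step~(3) as ``feed it into the ADA lower bound where the cryptographic primitive is instantiated in the bounded storage model.'' That is not how the paper proceeds. The ADA impossibility the paper invokes (Theorem~\ref{thm:adaNegative}) is for \emph{natural} mechanisms and is unconditional; no encryption appears in it. The obstacle is that \texttt{AnswerQueries} is not natural, because $\AAA$ sees the bit $q(p)$ even for $p\notin P$. The fix is to replace those bits by uniformly random bits (one-time pad), giving the natural algorithm \texttt{AnswerQueriesOTP}, and then to argue via a hybrid (Lemma~\ref{lem:TVdistance}) that this change is statistically undetectable by $\AAA$. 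It is \emph{here} that the BSM PRG enters: the bit $Y\oplus q(p)$ fed to $\AAA$ is indistinguishable from random precisely because $\PRG$ is secure against adversaries with storage $O(w+\ell+b\cdot 2^d)$, and the hybrid distinguisher can be implemented within that storage. Your plan omits the ``make it natural'' step entirely and places the PRG in the wrong location; without Lemma~\ref{lem:natural} and Lemma~\ref{lem:TVdistance} you cannot invoke Theorem~\ref{thm:adaNegative}.

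The oblivious upper bound sketch and the high-level reduction idea are fine; the two points above are where the plan would fail if carried out as written.
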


\section{The Streaming Adaptive Data Analysis (SADA) Problem}\label{sec:SADA}

In this section we introduce a streaming problem, which we call the Streaming Adaptive Data Analysis (SADA) problem, for which we show a strong positive result in the oblivious setting and a strong negative result in the adversarial setting.

Let $X=\{0,1\}^d\times\{0,1\}^b$ be a data domain, let $\gamma\geq 0$ be a fixed constant,
and let $\PRG:\{0,1\}^a\times\{0,1\}^b\rightarrow\{0,1\}^c$ be a BSM pseudorandom generator, where $c=1$. We consider the following streaming problem. On every time step $i\in[m]$ we get an update $x_i=(p_i,k_i)\in X$. 
We interpret the first $n$ updates in the stream $x_1,\dots,x_{n}$ as pairs of ``data points'' and their corresponding ``keys''. Formally, we denote by $S$ the multiset containing the pairs $x_1,\dots,x_{n}$. For technical reasons, the multiset $S$ also contains $\frac{\gamma n}{1-\gamma}$ copies of some arbitrary element $\bot$. This multiset does not change after time $n$.

Starting from time $j=n+1$, each bulk of $(a+1)\cdot 2^{d}$ updates re-defines a ``function'' (or a ``query'') that should be evaluated by the streaming algorithm on the multiset $S$. This function is defined as follows.

\begin{enumerate}
	\item For $p\in\{0,1\}^d$ (in lexicographic order) do
	\begin{enumerate}
		\item Let $x^1,\dots,x^{a}\in X$ denote the next $a$ updates, and let $\Gamma\in\{0,1\}^a$ be the bitstring containing the first bit of every such update.
		\item Let $x$ denote the next update, and let $\sigma$ denote its first bit.
		\item For every $k\in\{0,1\}^b$, let $Y_k=\PRG(\Gamma,k)$ and define $f(p,k)=\sigma\oplus Y_k$.
	\end{enumerate}
	\item Also set $f(\bot)=1$.
\end{enumerate}

This defines a function $f:\{0,1\}^d \times \{0,1\}^b\rightarrow\{0,1\}$. 

\begin{definition}[The $(a,b,d,m,n,\gamma)$-SADA Problem]
At the end of every such bulk, defining a function $f$, the goal of the streaming algorithm is to output (an approximation for) the average of $f$ on the multiset $S$. On other time steps, the streaming algorithm should output 0.
\end{definition}

\begin{remark}
In the definition above, $m$ is the total number of updates (i.e., the length of the stream), $n$ the number of updates that we consider as ``date points'', $\gamma$ is a small constant, and $a,b,d$ are the parameters defining the domain (and the $\PRG$).
\end{remark}

\section{An Oblivious Algorithm for the SADA Problem}\label{sec:SADAObl}

In the oblivious setting, we can easily construct a streaming algorithm for the SADA problem using {\em sampling}. Specifically, throughout the first phase of the execution (during the first $n$ time steps) we maintain a small representative sample from the ``data items'' (and their corresponding ``keys'') from the stream. In the second phase of the execution we use this sample in order to answer the given queries. Consider Algorithm \texttt{ObliviousSADA}, specified in Algorithm~\ref{alg:ObliviousSADA}. We now analyze its utility guarantees.

\begin{algorithm*}[t!]
\caption{\bf \texttt{ObliviousSADA}}\label{alg:ObliviousSADA}

{\bf Setting:} On every time step we obtain the next update, which is an element of $X=\{0,1\}^d\times\{0,1\}^b$.

{\bf Algorithm used:} A sampling algorithm $\SAMP$ that operates on a stream of elements from the domain $X$ and maintains a representative sample.

\begin{enumerate}[leftmargin=15pt,rightmargin=10pt,itemsep=1pt,topsep=3pt]

\item Instantiate algorithm $\SAMP$.

\item REPEAT $n$ times
\begin{enumerate}
	\item Obtain the next update in the stream $x=(p,k)$.
	\item Output $0$.
	\item Feed the update $x$ to $\SAMP$.
\end{enumerate}

\item Feed (one by one) $\frac{\gamma n}{1-\gamma}$ copies of $\bot$ to $\SAMP$.

\item Let $D$ denote the sample produced by algorithm $\SAMP$.

\item REPEAT (each iteration of this loop spans over $2^d (a+1)$ updates that define a query) %

\begin{enumerate}
	\item Let $v$ denote the multiplicity of $\bot$ in $D$, and set $F=\frac{v}{|D|}$.
	
	\item For every $p\in\{0,1\}^d$ in lexicographic order do (inner loop) %
	\begin{enumerate}
		\item Denote $K_p = \{ k : (p,k)\in D \}$. That is, $K_p$ is the set of all keys $k$ such that $(p,k)$ appears in the sample $D$.
		\item REPEAT $a$ times
		\begin{itemize}
			\item Obtain the next update $x$
			\item For every $k\in K_p$, feed the first bit of $x$ to $\PRG(\cdot,k)$. %
			\item Output 0.
		\end{itemize}
		\item For every $k\in K_p$, obtain a bit $Y_k$ from $\PRG(\cdot,k)$.

		\item Obtain the next update and let $\sigma$ be its first bit (and output 0).
		
		\item For every $k\in K_p$ such that $\sigma\oplus Y_k=1$: Let $v_{(p,k)}$ denote the multiplicity of $(p,k)$ in $D$, and set $F\leftarrow F + \frac{v_{(p,k)}}{|D|}$.
	\end{enumerate}
	
	\item Output $F$.
	
\end{enumerate}

\end{enumerate}
\end{algorithm*}

\begin{theorem}
Algorithm \texttt{ObliviousSADA} is $(\alpha,\beta)$-accurate for the SADA problem in the oblivious setting.
\end{theorem}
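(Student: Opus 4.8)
The plan is to first establish the exact identity that, on each query bulk of the stream, the value $F$ output by \texttt{ObliviousSADA} equals $f(D)$, the empirical average of the query function $f$ on the sample $D$ maintained by $\SAMP$; and then to use obliviousness, together with Hoeffding's inequality and a union bound, to show that $f(D)$ is within $\alpha$ of $f(S)$ simultaneously for all (at most $\ell:=(m-n)/((a+1)2^d)$) queries in the stream, except with probability $\beta$.

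For the first step I would trace the inner loop. After Step~4 the algorithm holds a sample $D$, a multiset drawn from $S$ (which, thanks to Step~3, consists of $x_1,\dots,x_n$ together with the $\tfrac{\gamma n}{1-\gamma}$ copies of $\bot$). At the start of a query iteration, $F$ is set to $v/|D|$, the multiplicity-fraction of $\bot$ in $D$; since $f(\bot)=1$, this is exactly the contribution of the $\bot$-points to $f(D)$. Then, for each $p\in\{0,1\}^d$, the algorithm streams the first bits of the next $a$ updates into $\PRG(\cdot,k)$ for every $k\in K_p$. Because $c=1$ and, by \thmref{Vadhan}, $\PRG(\cdot,k)$ reads its input nonadaptively (from fixed positions, only $t=O(\log(1/\varepsilon))$ of the $a$ bits), this reproduces exactly $Y_k=\PRG(\Gamma,k)\in\{0,1\}$, where $\Gamma$ is the string of those $a$ first bits. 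After reading the next update's first bit $\sigma$, the algorithm adds $v_{(p,k)}/|D|$ to $F$ precisely for those $k\in K_p$ with $\sigma\oplus Y_k=f(p,k)=1$. Summing over $p$ and $k$, and noting that pairs $(p,k)\notin D$ have multiplicity $0$ and hence contribute nothing, we obtain $F=\frac{1}{|D|}\sum_{x\in D}f(x)=f(D)$. Moreover, on every time step that is not the end of a bulk the algorithm outputs $0$, as the problem requires.

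For the second step I would set up the uniform-convergence argument. In the oblivious setting the whole stream — hence the induced sequence of query functions $f_1,\dots,f_\ell$ — is fixed independently of the algorithm's randomness, so the sample $D$ that $\SAMP$ draws from $S$ is independent of $f_1,\dots,f_\ell$. Using the size-$s$ sample guarantee of $\SAMP$, for each fixed $j\in[\ell]$ the quantity $f_j(D)$ is an average of $s$ bounded samples from $S$ with mean $f_j(S)$, so Hoeffding's inequality gives $\Pr[\,|f_j(D)-f_j(S)|>\alpha\,]\le 2e^{-2s\alpha^2}$ (this bound holds whether $\SAMP$ samples with or without replacement). A union bound over $j$ yields $\Pr[\,\exists j:\ |f_j(D)-f_j(S)|>\alpha\,]\le 2\ell e^{-2s\alpha^2}\le\beta$, which holds once the sample size satisfies $s\ge\frac{1}{2\alpha^2}\ln(2\ell/\beta)$. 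On the complement of this event, the identity $F=f(D)$ from the first step shows that every query is answered to within additive error $\alpha$; together with the correct output $0$ on non-query steps, this is exactly $(\alpha,\beta)$-accuracy for the SADA problem.

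I expect the only delicate part to be the first step: one must carefully check that the bookkeeping in the inner loop of \texttt{ObliviousSADA} — the initialization by the $\bot$-fraction, the per-$(p,k)$ handling of multiplicities $v_{(p,k)}$, and especially the interaction with the $\PRG$ interface (feeding bits in, a single output bit since $c=1$, and nonadaptive reads so that it does not matter that we hand it all $a$ bits rather than just the $t$ it uses) — exactly reconstructs $f$ on each sampled point. Once $F=f(D)$ is nailed down, the rest is a routine concentration-plus-union-bound argument; its validity rests entirely on obliviousness, i.e.\ on the queries being chosen independently of $D$ — precisely the property that fails, and must fail, in the adversarial setting.
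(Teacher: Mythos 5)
Your plan matches the paper's: establish the exact identity $F = f(D)$ for the answer at the end of each bulk, then use obliviousness so that the sample $D$ is independent of the fixed sequence of queries, and apply concentration plus a union bound. Your treatment of the first step (tracing the inner loop, the $\bot$-initialization, the multiplicities $v_{(p,k)}$, and the $\PRG$ interface with $c=1$ and nonadaptive reads) is in fact more careful than the paper's, which simply asserts that the outputs equal the empirical averages on $D$.

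The one substantive divergence is the accuracy notion. The paper's streaming accuracy requirement is multiplicative, $z\in(1\pm\alpha)\cdot f(S)$, and its proof accordingly applies a Chernoff bound for relative error, yielding $|D|\ge\Omega\!\left(\frac{1}{\alpha^2\gamma}\ln\frac{m}{\beta}\right)$. The factor $\frac{1}{\gamma}$ enters because $f(S)\ge\gamma$ always --- this is exactly what the $\bot$-padding of $S$ is for, since $f(\bot)=1$ bounds the true answer away from zero and makes multiplicative approximation feasible. Your Hoeffding step instead targets additive error $\alpha$ and yields $s\ge\Omega\!\left(\frac{1}{\alpha^2}\ln\frac{\ell}{\beta}\right)$, which does not by itself give the multiplicative guarantee. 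To recover it you would either need to aim for additive error $\alpha\gamma$ (Hoeffding then requires $s\ge\Omega\!\left(\frac{1}{\alpha^2\gamma^2}\ln\frac{\ell}{\beta}\right)$, a factor $\frac{1}{\gamma}$ worse than the paper's Chernoff bound) or switch to a relative-error Chernoff bound as the paper does. For constant $\gamma$ this is asymptotically harmless, but the missing $\gamma$-dependence means your argument, as written, does not explain why the construction pads $S$ with $\bot$ elements in the first place.
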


\begin{proof}
Fix the stream $\vec{x}_m=(x_1,\dots,x_m)$.
We will assume that \texttt{ObliviousSADA} is executed with a sampling algorithm $\SAMP$ that returns a sample $D$ containing $|D|$ elements, sampled uniformly and independently from $S=(x_1,\dots,x_n,\bot,\dots,\bot)$. This can be achieved, e.g., using Reservoir Sampling~\cite{Vitter85}. As the stream is fixed (and it is of length $m$), there are at most $m$ different queries that are specified throughout the execution. By the Chernoff bound, assuming that $|D|\geq\Omega\left( \frac{1}{\alpha^2\gamma}\ln(\frac{m}{\beta}) \right)$, with probability at least $1-\beta$, for every query $f$ throughout the execution we have that $f(D)\in(1\pm\alpha)\cdot f(S)$. The theorem now follows by observing that the answers given by algorithm \texttt{ObliviousSADA} are exactly the empirical average of the corresponding queries on $D$.
\end{proof}

\begin{observation}\label{obs:obliviousSpace}
For constant $\alpha,\beta,\gamma$, using the pseudorandom generator from Theorem~\ref{thm:Vadhan}, algorithm \texttt{ObliviousSADA} uses space $O\left( \left( \log(\frac{1}{\eps}) + b+d\right)\cdot\log(m) \right)$.
\end{observation}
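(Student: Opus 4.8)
The plan is to walk through the algorithm line by line and tally the storage needed by each component, then take the maximum (up to constants) over the phases of the execution. The dominant cost will come from storing the sample $D$ produced by $\SAMP$, so the first step is to bound $|D|$ and the bit-length of each element of $D$.

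\medskip
\noindent\textbf{Step 1: bounding the sample size.} From the proof of the preceding theorem, correctness requires $|D| \geq \Omega\!\left(\frac{1}{\alpha^2\gamma}\ln(m/\beta)\right)$, and we may take $|D|$ to be exactly of this order. For constant $\alpha,\beta,\gamma$ this is $O(\log m)$ elements. Each element of $D$ is a pair $(p,k)\in\{0,1\}^d\times\{0,1\}^b$ (or the symbol $\bot$), hence uses $O(b+d)$ bits. So storing $D$ costs $O((b+d)\cdot\log m)$ bits. Storing the multiplicities $v$ and $v_{(p,k)}$ and the running value $F$ costs at most $O(\log|D|) = O(\log\log m)$ additional bits each, which is absorbed.

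\medskip
\noindent\textbf{Step 2: bounding the cost of running the PRG.} During the inner loop, for the current prefix $p$ the algorithm feeds the stream bits to $\PRG(\cdot,k)$ for every $k\in K_p$, where $K_p = \{k : (p,k)\in D\}$, and hence $|K_p|\leq |D| = O(\log m)$. By Theorem~\ref{thm:Vadhan}, each invocation $\PRG(\cdot,k)$ reads its input bits nonadaptively and uses workspace $\poly(\log t,\log b)$ beyond the $t = O(c+\log(1/\eps)) = O(\log(1/\eps))$ bits it reads (recall $c=1$) plus the key of length $b$; crucially, since the reads are nonadaptive and we only need one output bit $Y_k$ per block, we do not need to buffer the $a$ incoming updates --- we can process each update as it arrives, keeping only the partial PRG computation state for each $k\in K_p$. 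This costs $O(|K_p|\cdot(b + \log(1/\eps) + \poly(\log\log(\cdot)))) = O((b+\log(1/\eps))\cdot\log m)$ bits, which matches (and is absorbed into) the claimed bound. Combining Steps 1 and 2, the total space is $O\!\left((\log(1/\eps) + b + d)\cdot\log m\right)$.

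\medskip
\noindent\textbf{Main obstacle.} The only subtle point --- and the step that warrants care --- is Step 2: one must be sure that the algorithm never needs to store a full bulk of $a$ updates at once. This is exactly what the nonadaptivity clause of Theorem~\ref{thm:Vadhan} buys us: since $\PRG(\cdot,k)$ reads only $t = O(\log(1/\eps))$ predetermined positions out of the $a$-bit block, the algorithm can, for each $k\in K_p$, simply grab those $t$ bits as they stream by and discard the rest, never holding more than $O(t)$ bits of the block at a time (on top of the per-key key material). With $a$ thus kept out of the space bound, the tally in Steps 1--2 gives the stated $O((\log(1/\eps)+b+d)\log m)$, completing the proof of the observation.
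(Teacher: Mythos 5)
Your proposal is correct and follows the same approach as the paper: tally the space for the sample $D$ ($O(\log m)$ elements of $b+d$ bits each) plus the space for the $|D|$ parallel PRG instances ($O(\log(1/\eps))$ bits each, via item 4 of Theorem~\ref{thm:Vadhan}). The paper's proof is terser — it does not spell out the point you highlight as the ``main obstacle,'' namely that the PRG's nonadaptive, $t$-bit access to the public stream is what lets the algorithm avoid buffering a full $a$-bit block — but that point is correct and is indeed what makes the paper's one-line accounting of the PRG's memory footprint valid.
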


\begin{proof}
The algorithm maintains a sample $D$ containing $O(\log m)$ elements, where each element is represented using $b+d$ bits. In addition, the pseudorandom generator uses $O( \log(\frac{1}{\eps}) )$ bits of memory, and the algorithm instantiates at most $|D|=O(\log m)$ copies of it.
\end{proof}

\section{An Impossibility Result for Adaptive Streaming}\label{sec:negative}

Suppose that there is an adversarially robust streaming algorithm $\AAA$ for the SADA problem. We use $\AAA$ to construct an algorithm that gets a sample $P$ containing $n$ points in $\{0,1\}^d$, and answers adaptively chosen queries $q:\{0,1\}^d\rightarrow\{0,1\}$. Consider Algorithm \texttt{AnswerQueries}, specified in Algorithm~\ref{alg:AnswerQueries}.

\begin{algorithm*}[t!]
\caption{\bf \texttt{AnswerQueries}}\label{alg:AnswerQueries}

{\bf Input:} A database $P\in(\{0,1\}^d)^n$ containing $n$ elements from $\{0,1\}^d$.

{\bf Setting:} On every time step we get a query $q:\{0,1\}^d\rightarrow\{0,1\}$.

{\bf Algorithm used:} An adversarially robust streaming algorithm $\AAA$ for the SADA problem with $(\alpha,\beta)$-accuracy for streams of length $m$. We abstract the coin tosses of $\AAA$ using {\em two} random strings, $r_1$ and $r_2$, of possibly unbounded length. Initially, we execute $\AAA$ with access to $r_1$, meaning that every time it tosses a coin it gets the next bit in $r_1$. At some point, we switch the random string to $r_2$, and henceforth $\AAA$ gets its coin tosses from $r_2$.

{\bf Algorithm used:} BSM pseudorandom generator $\PRG:\{0,1\}^a\times\{0,1\}^b\rightarrow\{0,1\}$, as in the definition of the SADA problem.

\begin{enumerate}[leftmargin=15pt,rightmargin=10pt,itemsep=1pt,topsep=3pt]

\item For every $p\in\{0,1\}^d$ sample $k_p\in\{0,1\}^b$ uniformly.

\item Sample $r_1\in\{0,1\}^{\nu}$ uniformly, and instantiate algorithm $\AAA$ with read-once access to bits of $r_1$. Here $\nu$ bounds the number of coin flips made by $\AAA$.

\item For every $p\in P$, feed the update $(p,k_p)$ to $\AAA$.

\item Sample $r_2\in\{0,1\}^{\nu}$ uniformly, and switch the read-once access of $\AAA$ to $r_2$. (The switch from $r_1$ to $r_2$ is done for convenience, so that after Step~3 we do not need to ``remember'' the position for the next coin from $r_1$.)

\item REPEAT $\ell \triangleq\frac{m-n}{(a+1)\cdot 2^d}$ times 

\begin{enumerate}
	\item Obtain the next query $q:\{0,1\}^d\rightarrow\{0,1\}$.
	
	\item For every $p\in\{0,1\}^d$ do
	
	\begin{enumerate}
		\item Sample $\Gamma\in\{0,1\}^a$ uniformly.
		\item Feed $a$ updates (one by one) to $\AAA$ s.t.\ the concatenation of their first bits is $\Gamma$.
		\item Let $Y=\PRG(\Gamma,k_p)$.
		\item Feed to $\AAA$ an update whose first bit is $Y\oplus q(p)$.
	\end{enumerate}
	
	\item Obtain an answer $z$ from $\AAA$.
	\item Output $z$.
	
\end{enumerate}

\end{enumerate}
\end{algorithm*}

By construction, assuming that $\AAA$ is accurate for the SADA problem, we get that \texttt{AnswerQueries} is empirically-accurate (w.r.t.\ its input database $P$). Formally, 

\begin{claim}\label{claim:empiricalAccuracy}
If $\AAA$ is $(\alpha,\beta)$-accurate for the SADA problem, then $\texttt{AnswerQueries}$ is $\left(\frac{\alpha}{1-\gamma},\beta\right)$-empirically-accurate for $\frac{m-n}{(a+1)\cdot2^d}$ adaptively chosen statistical queries given a database of size $n$.
\end{claim}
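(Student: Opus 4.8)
This is the ``easy'' half of the reduction: it is essentially a simulation argument, and no property of the $\PRG$ beyond its being a fixed function will be used (its bounded-storage security enters only later, when passing from empirical accuracy to accuracy w.r.t.\ the underlying distribution). The plan is to fix a database $P\in(\{0,1\}^d)^n$ and, invoking the remark after Definition~\ref{def:accuratemechanism}, a deterministic query-generating adversary $\mathbb{A}$, and then to build a streaming adversary $\mathbb{B}$ for the SADA problem so that the SADA accuracy game $\accgame$ between $\AAA$ and $\mathbb{B}$ reproduces, step for step, the execution of \texttt{AnswerQueries}$(P)$ against $\mathbb{A}$. The claimed empirical accuracy then follows by pushing the $(\alpha,\beta)$-accuracy of $\AAA$ through this identification.

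\textbf{The streaming adversary.} I would have $\mathbb{B}$ sample keys $k_p\in\{0,1\}^b$ for all $p\in\{0,1\}^d$ and feed the $n$ updates $\{(p,k_p):p\in P\}$; this pins down the SADA multiset as $S=\{(p,k_p):p\in P\}$ together with the $\frac{\gamma n}{1-\gamma}$ mandatory copies of $\bot$, so that $|S|=\frac{n}{1-\gamma}$. Then $\mathbb{B}$ runs $\mathbb{A}$; for each query $q_i$ it produces the bulk of $(a+1)2^d$ updates exactly as in Step~5b of \texttt{AnswerQueries} --- for each $p$, a fresh uniform $\Gamma_p\in\{0,1\}^a$ followed by an update whose first bit is $\PRG(\Gamma_p,k_p)\oplus q_i(p)$ --- reads off $\AAA$'s answer $z_i$, and returns the affinely corrected value $(z_i-\gamma)/(1-\gamma)$ to $\mathbb{A}$ (which is also what \texttt{AnswerQueries} reports for $q_i$).

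\textbf{The identity that makes it work.} Let $f_i$ be the SADA function the $i$-th bulk defines. By the SADA construction, for each $p$ and every $k$ we have $f_i(p,k)=\sigma_p\oplus\PRG(\Gamma_p,k)$ with $\sigma_p=\PRG(\Gamma_p,k_p)\oplus q_i(p)$; taking $k=k_p$ the $\PRG$ terms cancel identically, giving $f_i(p,k_p)=q_i(p)$, and by definition $f_i(\bot)=1$. Hence, evaluating $f_i$ only at the points that actually occur in $S$,
$$
f_i(S)=\frac{1-\gamma}{n}\left(\sum_{p\in P}q_i(p)+\tfrac{\gamma n}{1-\gamma}\right)=(1-\gamma)\,q_i(P)+\gamma .
$$
Since $\AAA$ is $(\alpha,\beta)$-accurate for the SADA problem, with probability at least $1-\beta$ over the game against $\mathbb{B}$ every $z_i$ approximates $f_i(S)$ to within $\alpha$ (using $f_i(S)\le1$ to convert a multiplicative guarantee, if that is the form); combining with the displayed identity, $(z_i-\gamma)/(1-\gamma)$ is within $\tfrac{\alpha}{1-\gamma}$ of $q_i(P)$ simultaneously for all $i\in[\ell]$, which is exactly $\bigl(\tfrac{\alpha}{1-\gamma},\beta\bigr)$-empirical accuracy.

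\textbf{What needs care.} The only non-automatic point is the claim that the transcript $(q_1,z_1,\dots,q_\ell,z_\ell)$ has the same law in the two processes. The $k_p$'s and $\Gamma_p$'s are sampled identically, $\mathbb{A}$ is driven identically, and the SADA target on the non-bulk-ending steps is $0$, which \texttt{AnswerQueries} trivially matches; the one thing to spell out is the switch of $\AAA$'s coin tape from $r_1$ to $r_2$ after the data phase. Since $r_1$ and $r_2$ are independent uniform strings, each long enough to cover all of $\AAA$'s read-once coin flips, supplying coins from $r_1$ during the first $n$ updates and from $r_2$ thereafter is distributed exactly like supplying a single uniform coin string, so the switch is harmless for this claim (it matters only for the later compression argument, where one does not want $\AAA$'s post-data state to encode a pointer into $r_1$). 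I do not expect any real obstacle here: the load-bearing fact is the exact cancellation $\PRG(\Gamma_p,k_p)\oplus\PRG(\Gamma_p,k_p)=0$, which is purely syntactic; everything else is bookkeeping --- matching bulk structure to the SADA query definition, and remembering to count the $\bot$-padding when computing $|S|$.
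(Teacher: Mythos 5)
Your proof is correct and takes essentially the same route as the paper's (very terse) proof sketch: the load-bearing fact is the exact cancellation $f_i(p,k_p)=\PRG(\Gamma_p,k_p)\oplus q_i(p)\oplus\PRG(\Gamma_p,k_p)=q_i(p)$, from which $f_i(S)=(1-\gamma)q_i(P)+\gamma$ and the $\tfrac{\alpha}{1-\gamma}$ bound follow; and the remark about concatenating prefixes of two independent uniform tapes $r_1,r_2$ being distributed as a single uniform tape is the right justification for why the coin switch is irrelevant here.

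One small point worth flagging: you parenthetically assert that \texttt{AnswerQueries} itself reports $(z-\gamma)/(1-\gamma)$, but Algorithm~\ref{alg:AnswerQueries} as printed outputs $z$ directly in Step~5d. Your instinct is nonetheless right --- without the affine correction one only gets $|z-q_i(P)|\le\alpha+\gamma(1-q_i(P))$, which can be as large as roughly $\gamma$ even for tiny $\alpha$, so the stated $\tfrac{\alpha}{1-\gamma}$ bound really does require un-mixing the $\bot$-padding, exactly as you do. So your write-up is internally consistent and proves the claim as stated; the discrepancy is with the printed algorithm, not with your argument.
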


\begin{proof}[Proof sketch]
Let $q$ denote the query given at some iteration, and let $f$ denote the corresponding function specified to algorithm $\AAA$ during this iteration. The claim follows from the fact that, by construction, for every $(p,k)$ we have that $f(p,k)=q(p)$. 
Specifically, w.h.p., the answers given by $\AAA$ are $\alpha$-accurate w.r.t.\ $P\cup\{\bot,\dots,\bot\}$, and hence, $\frac{\alpha}{1-\gamma}$-accurate w.r.t.\ $P$.
\end{proof}

We now show that algorithm \texttt{AnswerQueries} is transcript-compressible. To that end, for every choice of $\vec{\Gamma},\vec{k},r_1,r_2$ for the strings $\Gamma$, the keys $k$, and the random bitstrings $r_1,r_2$ used throughout the execution, let us denote by $\texttt{AnswerQueries}_{\vec{\Gamma},\vec{k},r_1,r_2}$ algorithm \texttt{AnswerQueries} after fixing these elements.

\begin{claim}\label{claim:compress}
If algorithm $\AAA$ uses space at most $w$, then, for every $\vec{\Gamma},\vec{k},r_1,r_2$, we have that algorithm $\texttt{AnswerQueries}_{\vec{\Gamma},\vec{k},r_1,r_2}$ is transcript-compressible to $w$ bits.
\end{claim}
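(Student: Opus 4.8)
The plan is to exhibit, for each fixed choice of $\vec{\Gamma},\vec{k},r_1,r_2$, a small set of transcripts that captures all possible outputs of $\texttt{AnswerQueries}_{\vec{\Gamma},\vec{k},r_1,r_2}$ against an arbitrary deterministic adversary $\mathbb{A}$. The key observation is that, once $\vec{\Gamma},\vec{k},r_1,r_2$ are fixed, the only source of variation in the interaction is the sequence of queries $q_1,\dots,q_\ell$ chosen by $\mathbb{A}$, and these queries are themselves determined by the answers $z_1,\dots,z_\ell$ that $\texttt{AnswerQueries}$ produces (since $\mathbb{A}$ is deterministic and the input database $P$ is what we are trying to abstract away). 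So it suffices to bound the number of possible answer-sequences, and for that we track the internal state of $\AAA$.

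First I would set up the following simulation-style argument. Fix $\mathbb{A}$ and fix $\vec{\Gamma},\vec{k},r_1,r_2$. I claim the entire transcript is a deterministic function of the memory state of $\AAA$ at the moment just after Step~4 of $\texttt{AnswerQueries}$ — i.e., right after all $n$ data updates $(p,k_p)$ have been fed in and the random tape has been switched to $r_2$. Call this state $\theta$; it lies in $\{0,1\}^{\le w}$ since $\AAA$ uses space at most $w$. To see the claim: from $\theta$ onward, $\AAA$'s coins come from the fixed string $r_2$, the keys $\vec{k}$ and strings $\vec{\Gamma}$ are fixed, and the only external inputs are the queries $q_i$; but $q_1$ is $\mathbb{A}$'s first move which depends on nothing yet (deterministic $\mathbb{A}$ with no prior answers), so $q_1$ is fixed; feeding the corresponding $2^d(a+1)$ updates into $\AAA$ (whose first bits are determined by $\vec{\Gamma}$, $\vec{k}$, and $q_1$ via $Y\oplus q_1(p)$) deterministically evolves $\theta$ and yields $z_1$; then $q_2 = \mathbb{A}(z_1)$ is determined, and so on inductively. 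Hence the map $\theta \mapsto (q_1,z_1,\dots,q_\ell,z_\ell)$ is well-defined.

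Then I would define $H_{\mathbb{A}} \triangleq \{\,\text{transcript}(\theta) : \theta \in \{0,1\}^{\le w}\,\}$, the image of this map over all possible post-Step-4 states, ranging also over the choices of $\vec{\Gamma},\vec{k},r_1,r_2$ if one wants a single set — but actually the claim is stated per fixed $\vec{\Gamma},\vec{k},r_1,r_2$, so $H_{\mathbb{A}}$ depends on those and has size at most $|\{0,1\}^{\le w}| \le 2^{w+1}$. To get exactly $2^w$ one observes $\AAA$'s state is a string of length \emph{at most} $w$, and a minor padding/encoding convention (or absorbing the $+1$) gives the bound as stated in the paper's convention; I would not belabor this. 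Finally, for any database $P \in X^n$, running $\texttt{AnswerQueries}_{\vec{\Gamma},\vec{k},r_1,r_2}$ on $P$ produces \emph{some} post-Step-4 state $\theta_P$, and by the determinism argument the resulting transcript equals $\text{transcript}(\theta_P) \in H_{\mathbb{A}}$, so $\Pr[\accgame_{n,\ell,\cM,\mathbb{A}}(P) \in H_{\mathbb{A}}] = 1$.

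The main obstacle — really the only subtle point — is making precise that the post-Step-4 state $\theta$ genuinely screens off the database $P$ from everything that follows, so that the transcript is a function of $\theta$ alone and not of $P$ directly. This requires care that (i) $\AAA$ receives no further input derived from $P$ after Step~4 (true by inspection of $\texttt{AnswerQueries}$: Steps~5 onward feed only query-derived and $\vec{\Gamma},\vec{k}$-derived updates), and (ii) $\AAA$'s coin tosses after the switch depend only on the fixed $r_2$, not on the position reached in $r_1$ (this is exactly why Step~4 performs the switch — so no "pointer into $r_1$" needs to be remembered as part of the relevant state). Once these two points are nailed down, the counting is immediate. I would also note in passing that this is the standard "state is a compression bottleneck" argument in the style of \cite{DworkFHPRR-nips-2015}, adapted to the streaming setting where $w$ is the space bound.
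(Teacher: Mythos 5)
Your proposal is correct and follows essentially the same route as the paper: once the adversary is taken to be deterministic and $\vec{\Gamma},\vec{k},r_1,r_2$ are fixed, the entire transcript is a deterministic function of the at-most-$w$-bit state of $\AAA$ at the end of the data-feeding phase, giving a set $H_{\mathbb{A}}$ of size at most $2^w$. The paper states this in one sentence and relegates the role of the $r_1\to r_2$ switch to a separate remark; you spell out the same inductive determinism argument and the same reason the switch matters.
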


\begin{proof}[Proof sketch]
Assuming that the adversary who generates the queries $q$ is deterministic (which is without loss of generality) we get that the entire transcript is determined by the state of algorithm $\AAA$ at the end of Step 3.
\end{proof}

\begin{remark}
The ``switch'' from $r_1$ to $r_2$ is convenient in the proof of Claim~\ref{claim:compress}. Otherwise, in order to describe the state of the algorithm after Step~3 we need to specify both the internal state of $\AAA$ and the position for the next coin from $r_1$.
\end{remark}

Combining Claims~\ref{claim:empiricalAccuracy} (empirical accuracy), and~\ref{claim:compress} (transcript-compression), we get the following lemma.

\begin{lemma}\label{lem:streamingAccuracy}
Suppose that $\AAA$ is $(\alpha,\beta)$-accurate for the SADA problem for streams of length $m$ using memory $w$. Then for every $\beta'>0$, algorithm \texttt{AnswerQueries} is $\left(\frac{\alpha}{1-\gamma}+\alpha',\beta+\beta'\right)$-statistically-accurate for $\ell=\frac{m-n}{(a+1)\cdot2^d}$ queries, where
$$
\alpha'=O\left( \sqrt{\frac{w+\ln(\frac{\ell}{\beta'})}{n}}\right).
$$
\end{lemma}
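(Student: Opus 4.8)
The plan is to combine the two claims already established with the generalization property of transcript-compressible mechanisms, i.e.\ Theorem~\ref{thm:transcriptCompression}. First I would fix a deterministic adversary $\mathbb{A}$ that generates the queries $q_1,\dots,q_\ell$ and a distribution $\DDD$ over $\{0,1\}^d$, and draw the input database $P\sim\DDD^n$. The execution of \texttt{AnswerQueries} on $P$ is governed by the auxiliary randomness $\vec{\Gamma},\vec{k},r_1,r_2$; the point of Claim~\ref{claim:compress} is that, once this randomness is fixed, the resulting mechanism $\texttt{AnswerQueries}_{\vec{\Gamma},\vec{k},r_1,r_2}$ is transcript-compressible to $w$ bits. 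So I would condition on an arbitrary fixture of $\vec{\Gamma},\vec{k},r_1,r_2$ and work with this fixed, transcript-compressible mechanism throughout; at the very end the bounds hold for every fixture and hence in expectation over the randomness as well.

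Next I would run two events in parallel over the draw of $P$ and the execution of the accuracy game. By Claim~\ref{claim:empiricalAccuracy}, except with probability $\beta$, the answers $z_i$ satisfy $|q_i(P)-z_i|\le \frac{\alpha}{1-\gamma}$ for all $i\in[\ell]$ (empirical accuracy w.r.t.\ $P$). By Theorem~\ref{thm:transcriptCompression} applied to the fixed mechanism $\texttt{AnswerQueries}_{\vec{\Gamma},\vec{k},r_1,r_2}$ with compression bound $b(n,\ell)=w$ and failure parameter $\beta'$, except with probability $\beta'$ we have $|q_i(P)-q_i(\DDD)|\le \alpha'$ for all $i$, where $\alpha'=O\!\left(\sqrt{\frac{w+\ln(\ell/\beta')}{n}}\right)$. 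A union bound puts us outside both bad events with probability at least $1-\beta-\beta'$, and on that event the triangle inequality gives $|q_i(\DDD)-z_i|\le |q_i(\DDD)-q_i(P)| + |q_i(P)-z_i| \le \alpha' + \frac{\alpha}{1-\gamma}$ simultaneously for all $i\in[\ell]$. This is exactly the statistical-accuracy guarantee claimed, with $\ell=\frac{m-n}{(a+1)\cdot 2^d}$.

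The one point that needs a little care, and which I expect to be the main (mild) obstacle, is the interplay of the randomness: Claim~\ref{claim:compress} only gives transcript compressibility after fixing $\vec{\Gamma},\vec{k},r_1,r_2$, whereas Theorem~\ref{thm:transcriptCompression} wants to be applied to a fixed mechanism, and Definition~\ref{def:accuratemechanism} quantifies over all adversaries and distributions. The clean way to handle this is to observe that both the empirical-accuracy event (Claim~\ref{claim:empiricalAccuracy}) and the generalization event (Theorem~\ref{thm:transcriptCompression}) hold with the stated probability for \emph{every} fixture of $\vec{\Gamma},\vec{k},r_1,r_2$ — for the former because Claim~\ref{claim:empiricalAccuracy} is about the randomized \texttt{AnswerQueries} and in particular holds conditioned on any realization of its internal coins, and for the latter because we apply it to the genuinely fixed mechanism $\texttt{AnswerQueries}_{\vec{\Gamma},\vec{k},r_1,r_2}$. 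Averaging the (conditional) success probability $\ge 1-\beta-\beta'$ over $\vec{\Gamma},\vec{k},r_1,r_2$ then yields the unconditional bound. One should also note that the adversary in Theorem~\ref{thm:transcriptCompression} is the query-generating adversary $\mathbb{A}$ (taken deterministic without loss of generality, by the remark after Definition~\ref{def:accuratemechanism}), which is consistent with the adversary in Claim~\ref{claim:empiricalAccuracy}, so the two events are about the same interaction and the union bound is legitimate. Finally, setting $b(n,\ell)=w$ and reading off $\alpha$ from Theorem~\ref{thm:transcriptCompression} gives the displayed formula for $\alpha'$, completing the proof.
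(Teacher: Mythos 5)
Your overall decomposition matches the paper's: empirical accuracy (Claim~\ref{claim:empiricalAccuracy}) plus generalization via transcript compression (Claim~\ref{claim:compress} and Theorem~\ref{thm:transcriptCompression}), combined with the triangle inequality and a union bound. However, there is a flaw in the paragraph you flag as the "mild obstacle."

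You assert that Claim~\ref{claim:empiricalAccuracy} "holds conditioned on any realization of its internal coins," and you then union-bound conditionally on each fixture of $\vec{\Gamma},\vec{k},r_1,r_2$ and average at the end. This is incorrect for the $r_1,r_2$ component. The probability $\beta$ in the hypothesis "$\AAA$ is $(\alpha,\beta)$-accurate" is taken precisely over $\AAA$'s internal coins, which are $r_1,r_2$ in the abstraction of Algorithm~\ref{alg:AnswerQueries}. Once you condition on a particular $(r_1,r_2)$, the event "$\AAA$ gives an $\alpha$-accurate answer at every step" is deterministic (for a deterministic adversary): it either holds or it doesn't, and for a $\beta$-fraction of $(r_1,r_2)$ it does not. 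A probability bound on a randomized algorithm does not survive conditioning on the algorithm's own randomness. (Conditioning on $\vec{\Gamma},\vec{k}$ \emph{is} fine, since those determine the stream and $\AAA$'s accuracy guarantee is worst-case over streams — but $r_1,r_2$ are different.)

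The paper sidesteps this cleanly by never conditioning Claim~\ref{claim:empiricalAccuracy}: it gets $\Pr[\exists i:\,|q_i(S)-z_i|>\alpha/(1-\gamma)]\le\beta$ unconditionally, derives the compression bound conditionally on every fixture of $\vec{\Gamma},\vec{k},r_1,r_2$ (this \emph{is} legitimate, since Theorem~\ref{thm:transcriptCompression} is applied to the genuinely fixed mechanism $\texttt{AnswerQueries}_{\vec{\Gamma},\vec{k},r_1,r_2}$), averages that conditional bound to obtain the unconditional $\Pr[\exists i:\,|q_i(S)-q_i(\DDD)|>\alpha']\le\beta'$, and only then applies the union bound to the two unconditional probabilities. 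Your conclusion and parameters are correct, and your argument becomes correct if you replace the "condition throughout, then average at the end" logic with this decomposition; as written, though, the step asserting conditional validity of Claim~\ref{claim:empiricalAccuracy} is a genuine error.
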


\begin{proof}
Fix a distribution $\DDD$ over $\{0,1\}^d$ and fix an adversary $\mathbb{A}$ that generates the queries $q_i$. Consider the execution of the accuracy game $\accgame$ (given in Algorithm~\ref{fig:accgame1}). By Claim~\ref{claim:empiricalAccuracy}, 
$$
\Pr_{\substack{S\sim\DDD^n\\ \accgame_{n, \ell,\texttt{AnswerQueries}, \mathbb{A}}(S)}}\left[ \exists i \text{ such that } \left|q_i(S)-z_i\right|>\frac{\alpha}{1-\gamma}  \right]\leq\beta,
$$
where the $z_i$'s denote the answers given by the algorithm. In addition, by Claim~\ref{claim:compress} and Theorem~\ref{thm:transcriptCompression}, for every fixing of $\vec{\Gamma},\vec{k},r_1,r_2$ we have that
\begin{equation}
\Pr_{\substack{S\sim\DDD^n\\ \accgame_{n, \ell,\texttt{AnswerQueries}, \mathbb{A}}(S)}}\Big[ \exists i \text{ such that } \left|q_i(S)-q_i(\DDD)\right|>\alpha'  \Big|  \vec{\Gamma},\vec{k},r_1,r_2  \Big]\leq\beta',
\label{eq:transcriptCompression}
\end{equation}
where
$$
\alpha'=O\left( \sqrt{\frac{w+\ln(\ell/\beta')}{n}}  \right).
$$
Since Inequality~(\ref{eq:transcriptCompression}) holds for {\em every} fixing of $\vec{\Gamma},\vec{k},r_1,r_2$, it also holds when sampling them. Therefore, by the triangle inequality and the union bound,
\begin{align*}
&\Pr_{\substack{S\sim\DDD^n\\ \accgame_{n, \ell,\texttt{AnswerQueries}, \mathbb{A}}(S)}}\left[ \exists i \text{ such that } \left|z_i-q_i(\DDD)\right|>\frac{\alpha}{1-\gamma} + \alpha'  \right]\\
&\quad \leq \Pr_{\substack{S\sim\DDD^n\\ \accgame_{n, \ell,\texttt{AnswerQueries}, \mathbb{A}}(S)}}\left[ \exists i \text{ such that } \left|q_i(S)-z_i\right|>\frac{\alpha}{1-\gamma} \text{ or } \left|q_i(S)-q_i(\DDD)\right|>\alpha' \right]\leq\beta+\beta'.
\end{align*}
\end{proof}

To obtain a contradiction, we rely on the following impossibility result for the ADA problem. 
Consider an algorithm $\cM$ for the ADA problem that gets an input sample $P=(p_1,\dots,p_n)$ and answers (adaptively chosen) queries $q$.
The impossibility result we use states that if $\cM$ computes the answer to every given query $q$ only as a function of the value of $q$ on points from $P$ (i.e., only as a function of $q(p_1),\dots,q(p_n)$), then, in general, $\cM$ cannot answer more than $n^2$ adaptively chosen queries. An algorithm $\cM$ satisfying this restriction is called a {\em natural} mechanism. Formally,

\begin{definition}[\cite{HardtU14}]
An algorithm that takes a sample $P$ and answers queries $q$ is {\em natural} if for every input sample $P$ and every two queries $q$ and $q'$ such that $q(p) = q'(p)$ for all $p\in P$, the answers $z$ and $z'$ that the algorithm gives on queries $q$ and $q'$, respectively, are identical if the algorithm is deterministic and identically distributed if the algorithm is randomized. If the algorithm is stateful, then this condition should hold when the algorithm is in any of its possible states.
\end{definition}

We will use the following negative result of Steinke and Ullman~\cite{SU15} (see also~\cite{HardtU14, NissimSSSU18}).

\begin{theorem}[\cite{SU15}]\label{thm:adaNegative}
There exists a constant $c>0$ such that there is no natural algorithm that is $(c,c)$-statistically-accurate for $O(n^2)$ adaptively chosen
queries given $n$ samples over a domain of size $\Omega(n)$.
\end{theorem}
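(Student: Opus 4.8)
The plan is to prove the contrapositive. Suppose $\cM$ is a natural mechanism that is $(c,c)$-statistically-accurate for $\ell = Cn^2$ adaptively chosen queries over a domain $X$ with $|X| = Kn$, for constants $C,K$ to be fixed. I will exhibit a distribution $\DDD$ and an adversary $\mathbb{A}$ that together defeat $\cM$ in the accuracy game. Take $\DDD$ to be uniform on $X$; then a sample $S=(p_1,\dots,p_n)\sim\DDD^n$ has support $I:=\mathrm{supp}(S)$ of size at most $n$, and with high probability its multiplicities are $O(1)$ on average and a $(1-2/K)$-fraction of $X$ lies outside $I$. The endgame is a short ``distinguishing'' contradiction: once $\mathbb{A}$ has certified (with high probability) a set $\bar J\subseteq X\setminus I$ with $|\bar J|>2cN$, it asks the query $q$ with $q(x)=1$ exactly on $\bar J$, and (immediately before or after) the identically-zero query $q'$. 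Both $q$ and $q'$ are $0$ on every point of $S$ (since $\bar J\cap I=\emptyset$), so naturalness forces $\cM$ to answer them identically, say with $z$; but $q'(\DDD)=0$ while $q(\DDD)=|\bar J|/N>2c$, so $\cM$ is off by more than $c$ on one of the two. (A symmetric version uses a certified set $\hat I\supseteq I$ with $|\hat I|<(\tfrac12-c)N$, comparing $\mathbbm{1}_{\hat I}$ with $\mathbbm{1}_{X}$.) Thus everything reduces to: with $O(n^2)$ adaptive statistical queries, recover enough of $\mathrm{supp}(S)$ to produce such a $\bar J$ (or $\hat I$), with high probability.

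For the recovery step I would invoke the machinery of robust interactive fingerprinting codes, in the style of Tardos and of Steinke--Ullman~\cite{SU15}. Instantiate such a code with user set equal to the $N=Kn$ elements of $X$ and collusion bound $n$; it runs for $d$ rounds, producing in round $j$ a column $c^{(j)}\in\{0,1\}^N$ (one bit per domain element, drawn i.i.d.\ from a round bias $\rho_j$) and expecting a ``pirate bit'' $\hat c_j$ in return. The adversary $\mathbb{A}$ turns column $j$ into the statistical query $q_j(x):=c^{(j)}_x$, feeds it to $\cM$, gets $z_j$, and returns $\hat c_j:=\mathbbm{1}[z_j\ge\tfrac12]$. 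Two facts drive the reduction. First, because $\cM$ is natural, $z_j$ — hence $\hat c_j$ — depends only on $(q_j(p_1),\dots,q_j(p_n))=(c^{(j)}_{p_i})_i$, i.e.\ on the columns restricted to the coalition $I$ (plus the multiplicities and the transcript): so the induced pirate is a legitimate coalition strategy for coalition $I$. Second, accuracy forces the pirate to respect the marking condition: if every $x\in I$ has $c^{(j)}_x=b$ then $q_j(S)=b$ exactly, and — since a column is i.i.d.\ with bias $\rho_j$ and $|I|$ is large — the event ``all of $I$ agrees on $b$'' pins $\rho_j$, hence $q_j(\DDD)$, to within $o(1)$ of $b$; since $|z_j-q_j(\DDD)|\le c<\tfrac12$, rounding recovers $b$. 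By the tracing guarantee, a marking-respecting pirate is traced to members of $I$, and — using the interactive/robust code over $d=O(n^2)$ rounds — its accusations (sound, so no non-members are hit w.h.p.) together with the complementary ``cleared'' users yield the certified sets $\hat I$ / $\bar J$. Taking $\ell=d=O(n^2)$ and $|X|=\Theta(n)$ finishes the reduction.

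The main obstacle, and the reason this is genuinely \cite{SU15}-deep rather than routine, is twofold. (i) One must reconcile accuracy with respect to $\DDD$ (all the hypothesis provides) with the marking condition, which a priori concerns the empirical values $q_j(S)$. The bridge is the choice of $\DDD$ uniform on a domain of size $\Theta(n)$: there $|q_j(S)-q_j(\DDD)|=O(n^{-1/2})\ll c$ on every column (w.h.p.), and, crucially, coalition agreement pins the population value — which is exactly why the domain cannot be much larger than $n$ (if $|X|\gg n$ a column could have all of $I$ equal to the minority bit while the population value is the majority bit, destroying the marking condition). (ii) Far more seriously, a \emph{static} reconstruction/correlation attack provably fails: a natural mechanism can ``debias'' its answers — report $q_j(S)$ minus a correction correlated with the bits it sees — killing the first-order signal while remaining within $c$ of $q_j(\DDD)$. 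Overcoming this requires the code to pick columns \emph{adaptively} and to be \emph{robust}, i.e.\ its tracer must survive the pirate deviating on the $O(1)$-fraction of columns on which the $c$-slack lets $\cM$ flip the rounded bit; and the quantitative heart — that $O(n^2)$ adaptive rounds suffice for an $n$-user code with these properties — is precisely the interactive-fingerprinting-code construction and analysis of Steinke--Ullman, which I would cite as a black box (this is the analogue of invoking Tardos-type codes, a more primitive object than the theorem itself). The remaining steps — controlling sample collisions and multiplicities, the union bounds over the $O(n^2)$ rounds, and checking that the final pair of queries literally agrees on $S$ so that naturalness bites — are routine.
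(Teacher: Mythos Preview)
The paper does not prove this theorem at all: it is quoted from Steinke and Ullman~\cite{SU15} and invoked as a black box (and later restated in slightly stronger form as Theorem~\ref{thm:adaNegative2}). There is nothing in the paper to compare your proposal against.

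That said, your sketch is a faithful high-level outline of the Steinke--Ullman argument: reduce to an interactive fingerprinting code by treating domain elements as users and columns as queries, exploit naturalness so that the mechanism's answers depend only on the coalition $I=\mathrm{supp}(S)$, trace to recover (most of) $I$, and finish with the ``two queries identical on $S$ but far on $\DDD$'' endgame. The endgame and the reduction skeleton are correct.

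One genuine wrinkle: your heuristic for the marking condition in point~(i) is wrong as written. The event ``all of $I$ agrees on $b$'' does \emph{not} pin $q_j(\DDD)$ near $b$. The bits $c^{(j)}_x$ for $x\in X\setminus I$ are fresh $\mathrm{Bernoulli}(\rho_j)$ draws independent of what happens on $I$, so $q_j(\DDD)\approx \tfrac{|I|}{N}\,b + \bigl(1-\tfrac{|I|}{N}\bigr)\rho_j$, which can sit anywhere in $[0,1]$ depending on $\rho_j$; rounding $z_j$ need not recover $b$. You already flag in~(ii) that the code must be robust and that this is the deep content of~\cite{SU15}; the fix is to drop the faulty heuristic and lean entirely on that robustness --- or, more accurately, on the correlation-based tracing statistic that \cite{SU15} actually uses, which works directly with the real-valued answers $z_j$ and does not route through a rounded pirate bit and a marking assumption. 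With that adjustment the plan is sound, though of course the substance remains inside the cited black box.
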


We have already established (in Lemma~\ref{lem:streamingAccuracy}) that algorithm \texttt{AnswerQueries} is statistically-accurate for $\ell=\frac{m-n}{(a+1)\cdot2^d}$ adaptively chosen queries, where $\ell$ can easily be made bigger then $n^2$ (by taking $m$ to be big enough). We now want to apply Theorem~\ref{thm:adaNegative} to our setting in order to get a contradiction. However, algorithm \texttt{AnswerQueries} is not exactly a natural algorithm (though, as we next explain, it is very close to being natural). The issue is that the answers produced by the streaming algorithm $\AAA$ can (supposedly) depend on the value of the given queries outside of the input sample. Therefore, we now tweak algorithm \texttt{AnswerQueries} such that it becomes a natural algorithm. The modified construction is given in Algorithm \texttt{AnswerQueriesOTP}, where we marked the modifications in red. Consider Algorithm \texttt{AnswerQueriesOTP}, specified in Algorithm~\ref{alg:AnswerQueriesOTP}.

\begin{algorithm*}[t!]
\caption{\bf \texttt{AnswerQueriesOTP}}\label{alg:AnswerQueriesOTP}

{\bf Input:} A database $P\in(\{0,1\}^d)^n$ containing $n$ elements from $\{0,1\}^d$.

{\bf Setting:} On every time step we get a query $q:\{0,1\}^d\rightarrow\{0,1\}$.

{\bf Algorithm used:} An adversarially robust streaming algorithm $\AAA$ for the SADA problem with $(\alpha,\beta)$-accuracy for streams of length $m$. We abstract the coin tosses of $\AAA$ using {\em two} random strings, $r_1$ and $r_2$, of possibly unbounded length. Initially, we execute $\AAA$ with access to $r_1$, meaning that every time it tosses a coin it gets the next bit in $r_1$. At some point, we switch the random string to $r_2$, and henceforth $\AAA$ gets its coin tosses from $r_2$.

{\bf Algorithm used:} BSM pseudorandom generator $\PRG:\{0,1\}^a\times\{0,1\}^b\rightarrow\{0,1\}$, as in the definition of the SADA problem.

\begin{enumerate}[leftmargin=15pt,rightmargin=10pt,itemsep=1pt,topsep=3pt]

\item For every $p\in\{0,1\}^d$ sample $k_p\in\{0,1\}^b$ uniformly.

\item Sample $r_1\in\{0,1\}^{\nu}$ uniformly, and instantiate algorithm $\AAA$ with read-once access to bits of $r_1$. Here $\nu$ bounds the number of coin flips made by $\AAA$.

\item For every $p\in P$, feed the update $(p,k_p)$ to $\AAA$.

\item Sample $r_2\in\{0,1\}^{\nu}$ uniformly, and switch the read-once access of $\AAA$ to $r_2$. (The switch from $r_1$ to $r_2$ is done for convenience, so that after Step~3 we do not need to ``remember'' the position for the next coin from $r_1$.)

\item REPEAT $\ell\triangleq\frac{m-n}{(a+1)\cdot 2^d}$ times 

\begin{enumerate}
	\item Obtain the next query $q:\{0,1\}^d\rightarrow\{0,1\}$.
	
	\item\label{step:loopp} For every $p\in\{0,1\}^d$ do
	
	\begin{enumerate}
		\item Sample $\Gamma\in\{0,1\}^a$ uniformly.
		\item Feed $a$ updates (one by one) to $\AAA$ s.t.\ the concatenation of their first bits is $\Gamma$.
		\item\label{step:red} \red{If $p\in P$ then let $Y=\PRG(\Gamma,k_p)$. Otherwise sample $Y\in\{0,1\}$ uniformly.}
		\item\label{step:OTP} Feed to $\AAA$ an update whose first bit is $Y\oplus q(p)$.
	\end{enumerate}
	
	\item\label{step:ObtainAnswer} Obtain an answer $z$ from $\AAA$.
	\item Output $z$.	

\end{enumerate}

\end{enumerate}
\end{algorithm*}

\begin{lemma}\label{lem:natural}
Algorithm \texttt{AnswerQueriesOTP} is natural.
\end{lemma}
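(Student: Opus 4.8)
The plan is to verify the definition of \emph{natural} directly: I must show that for any fixed input sample $P$, any internal state of the algorithm, and any two queries $q, q'$ with $q(p) = q'(p)$ for all $p \in P$, the answer distribution produced by \texttt{AnswerQueriesOTP} on $q$ equals that on $q'$. The key observation is that the \red{red modification} (Step~\ref{step:red}) makes the bit $Y$ depend on whether $p \in P$: for $p \in P$ it is the pseudorandom $\PRG(\Gamma, k_p)$, and for $p \notin P$ it is a fresh uniform bit. Consequently, the only values of the query $q$ that ever enter the state of $\AAA$ — and hence the only thing that can influence the output $z$ — are the bits $\{q(p) : p \in P\}$ together with freshly sampled randomness that is independent of $q$.

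The steps I would carry out, in order: First, fix an arbitrary state of \texttt{AnswerQueriesOTP} at the start of some iteration of the outer loop (this includes the state of $\AAA$, the keys $\vec k$, and the random tapes $r_1, r_2$ — whatever has been determined so far), and fix two queries $q, q'$ agreeing on $P$. Second, trace the inner loop (Step~\ref{step:loopp}) over $p \in \{0,1\}^d$. For each $p$, the updates fed to $\AAA$ are: $a$ updates whose first bits form a freshly sampled uniform $\Gamma \in \{0,1\}^a$, followed by one update whose first bit is $Y \oplus q(p)$. For $p \notin P$, $Y$ is a fresh uniform bit, so $Y \oplus q(p)$ is a fresh uniform bit \emph{regardless of the value} $q(p)$; thus the distribution of the updates fed in iteration $p$ is identical whether we are processing $q$ or $q'$, and in particular does not depend on $q(p)$ or $q'(p)$. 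For $p \in P$, we have $q(p) = q'(p)$ by assumption, so $Y \oplus q(p) = Y \oplus q'(p)$ and again the fed updates are identically distributed. Third, since the stream of updates fed to $\AAA$ across the whole iteration has the same distribution under $q$ as under $q'$ (coupling the fresh randomness), and $z$ is obtained from $\AAA$ after these updates (Step~\ref{step:ObtainAnswer}), the output $z$ is identically distributed in the two cases. Fourth, note this argument is state-by-state: it holds conditioned on every fixing of the prior state, so it holds for the stateful algorithm in any of its reachable states, which is exactly what the definition of \emph{natural} demands.

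I do not expect a serious obstacle here; the lemma is essentially immediate once one notices that the red modification replaces $\PRG(\Gamma,k_p)$ by an independent uniform bit precisely on the coordinates $p \notin P$, turning the update into a one-time-pad encryption of $q(p)$ whose ciphertext distribution (uniform) is independent of the plaintext. The only point requiring a word of care is that \texttt{AnswerQueriesOTP} is \emph{stateful} and \emph{randomized}, so "identical" in the definition means "identically distributed conditioned on the current state"; I would phrase the argument with an explicit coupling of the fresh bits $\Gamma$ and $Y$ used for $q$ and for $q'$, so that under the coupling the two executions feed \emph{literally the same} updates to $\AAA$ and therefore produce literally the same $z$. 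Summing over the coupling gives equality of distributions.
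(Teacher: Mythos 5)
Your proposal is correct and follows essentially the same route as the paper's proof sketch: the one-time pad (fresh uniform $Y$ for $p\notin P$) hides the values $q(p)$ outside the sample, so only $\{q(p):p\in P\}$ and independent randomness can influence the updates fed to $\AAA$ and hence the answer $z$. You merely spell out the coupling argument that the paper leaves implicit, which is a reasonable way to flesh out the sketch.
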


\begin{proof}[Proof sketch]
This follows from the fact that the value of the given queries outside of the input sample $P$ are completely ``hidden'' from algorithm $\AAA$ (namely, by the classic ``one-time pad'' encryption scheme), and by observing that the answer $z$ given by algorithm \texttt{AnswerQueriesOTP} on a query $q$ is determined by the state of algorithm $\AAA$ at the end of the corresponding iteration of Step~5. 
\end{proof}

We now argue that the modification we introduced (from \texttt{AnswerQueries} to \texttt{AnswerQueriesOTP}) has basically no effect on the execution, and hence, algorithm \texttt{AnswerQueriesOTP} is both natural and statistically-accurate. This will lead to a contradiction.

\begin{lemma}\label{lem:TVdistance}
Suppose that $\AAA$ has space complexity $w$. Denote $\ell=\frac{m-n}{(a+1)\cdot 2^d}$. If $\PRG$ is an $\varepsilon$-secure BSM pseudorandom generator against adversaries with storage $O(w+ \ell + b\cdot 2^d)$, then for every input database $P$ and every adversary $\mathbb{A}$, the outcome distributions of 
$\accgame_{n, \ell,\texttt{AnswerQueries}, \mathbb{A}}(P)$ and $\accgame_{n, \ell,\texttt{AnswerQueriesOTP}, \mathbb{A}}(P)$ are within statistical distance $2^d m \eps$.
\end{lemma}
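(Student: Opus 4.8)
The goal is to compare $\accgame_{n,\ell,\texttt{AnswerQueries},\mathbb{A}}(P)$ with $\accgame_{n,\ell,\texttt{AnswerQueriesOTP},\mathbb{A}}(P)$. The only difference between the two algorithms is in Step~\ref{step:red}: \texttt{AnswerQueries} always sets $Y=\PRG(\Gamma,k_p)$, while \texttt{AnswerQueriesOTP} only does so for $p\in P$ and otherwise samples $Y$ uniformly. The plan is to argue that from the point of view of everything that determines the transcript---namely the internal state of $\AAA$ and the outputs $z_i$---the sequence of first-bits fed to $\AAA$ in \texttt{AnswerQueries} is statistically close to the corresponding sequence in \texttt{AnswerQueriesOTP}, by a hybrid/reduction argument against the BSM-security of $\PRG$.

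First I would set up a hybrid over the $p$'s: since $|\{0,1\}^d\setminus P|$ is at most $2^d$, it suffices to show that replacing, for a single $p\notin P$, the value $\PRG(\Gamma,k_p)$ by a fresh uniform bit in \emph{every} one of the $\ell$ queries changes the outcome distribution by at most (roughly) $\ell\eps$; then the total over all $p\notin P$ is at most $2^d\cdot\ell\eps$, and since $\ell=\frac{m-n}{(a+1)2^d}\le m$ we get the claimed bound $2^d m\eps$ (being slightly wasteful). So fix $p^\star\notin P$. The key point is that the key $k_{p^\star}$ is never used anywhere else in the execution: it does not appear among the updates fed to $\AAA$ in Step~3 (those use $k_p$ for $p\in P$), and it is used only in the $\ell$ invocations $\PRG(\Gamma^{(i)},k_{p^\star})$, one per query $i$. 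This is exactly the structure of the BSM real/ideal experiments: the ``public stream'' at stage $i$ is the freshly sampled $\Gamma^{(i)}\in\{0,1\}^a$, the secret seed is $k_{p^\star}$, and the pseudorandom output $Y_t$ at stage $t$ is the single bit $\PRG(\Gamma^{(t)},k_{p^\star})$.

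The reduction: build an adversary $\AAA^{\rm BSM}$ for the BSM game that internally simulates the entire execution of (the $p^\star$-hybrid of) \texttt{AnswerQueries} together with the deterministic query-adversary $\mathbb{A}$ and the streaming algorithm $\AAA$. At stage $i$, when \texttt{AnswerQueries} reaches the inner-loop iteration for $p=p^\star$ in the $i$th query, $\AAA^{\rm BSM}$ uses the public-stream block $\Gamma^{(i)}$ handed to it (this is the $\Gamma$ sampled for this $p^\star$), feeds the $a$ bits of $\Gamma^{(i)}$ into $\AAA$, then uses the value $Y_i$ it receives from the BSM challenger in place of $\PRG(\Gamma^{(i)},k_{p^\star})$ in Step~\ref{step:OTP}, and continues the simulation. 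For all $p\neq p^\star$ it runs the honest code (sampling $k_p$, computing $\PRG(\Gamma,k_p)$ itself, etc.). At the end it outputs whatever bit the outcome-distinguisher of Lemma~\ref{lem:TVdistance} would output on the resulting transcript. In the real BSM experiment $Y_i=\PRG(\Gamma^{(i)},k_{p^\star})$, so $\AAA^{\rm BSM}$ perfectly simulates the $(p^\star\text{-th})$ hybrid with the real $\PRG$; in the ideal experiment $Y_i$ is uniform, so it simulates the hybrid with $Y$ replaced by a uniform bit. By BSM $\eps$-security with $T=\ell$ stages, the distinguishing advantage is at most $\ell\eps$, provided $\AAA^{\rm BSM}$'s storage is within the allowed bound. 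Here is where the storage accounting matters: between consecutive uses of the $\PRG$-with-key-$k_{p^\star}$, $\AAA^{\rm BSM}$ needs to remember the state of $\AAA$ (space $w$), the already-produced answers/transcript (it may carry $z_1,\dots,z_{i-1}$, i.e.\ $O(\ell)$ bits), and the keys $k_p$ for $p\in P$ together with bookkeeping for the current query (this is where the $b\cdot 2^d$ term comes from), plus the BSM model grants past $Y_j$'s and the final key for free. This is $O(w+\ell+b\cdot 2^d)$, matching the hypothesis. Summing over the at most $2^d$ choices of $p^\star$ and using $\ell\le m$ yields total statistical distance at most $2^d\cdot\ell\cdot\eps\le 2^d m\eps$.

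The main obstacle is the storage bound in the reduction: one must be careful that the simulator $\AAA^{\rm BSM}$ can be implemented within $O(w+\ell+b\cdot 2^d)$ bits, which forces a particular ordering of the hybrid (hybridize over $p$, \emph{not} over queries, so that the $\ell$ stages of the BSM game correspond to the $\ell$ queries and the simulator never needs to store more than a constant number of $\Gamma$-blocks at a time) and a careful inventory of what the simulator must retain across stages. A secondary point is that the $\Gamma$'s are re-sampled independently for each query, so the ``public random bits'' of the BSM model are genuinely fresh at each stage, which is exactly what the model requires; and that since we may assume $\mathbb{A}$ is deterministic, the only randomness in the simulation is $r_1,r_2$, the keys, and the $\Gamma$'s, all of which the simulator can generate internally (the $\Gamma^{(i)}$ for $p^\star$ being supplied by the challenger). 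Everything else is routine bookkeeping.
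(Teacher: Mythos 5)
Your proposal is correct and takes essentially the same route as the paper: a hybrid over the $2^d$ points $p$ (the paper indexes hybrids by $g\in\{0,\dots,2^d\}$ and replaces $\PRG(\Gamma,k_p)$ with uniform for $p<g$, $p\notin P$, which is the same sequence of $2^d$ swaps you describe), with each hybrid step reduced to the $T=\ell$-round BSM game and bounded by $\ell\eps\le m\eps$, and the simulator's storage inventoried as $O(w+\ell+b\cdot 2^d)$. One small imprecision in your accounting: during an intermediate hybrid the simulator must retain keys $k_p$ for \emph{all} $p$'s still on the $\PRG$ side (not just $p\in P$), but as you note the total is still bounded by $b\cdot 2^d$, so the conclusion is unaffected.
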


\begin{proof}
Recall that the outcome of $\accgame_{n, \ell,\texttt{AnswerQueries}, \mathbb{A}}(P)$ is the transcript of the interaction\linebreak $(q_1,z_1,\dots,q_{\ell},z_{\ell})$, where $q_i$ are the queries given by $\mathbb{A}$, and where $z_i$ are the answers given by $\texttt{AnswerQueries}$. We need to show that the distributions of $(q_1,z_1,\dots,q_{\ell},z_{\ell})$ during the executions with $\texttt{AnswerQueries}$ and $\texttt{AnswerQueriesOTP}$ are close. Without loss of generality, we assume that $\mathbb{A}$ is deterministic (indeed, if the lemma holds for every deterministic $\mathbb{A}$ then it also holds for every randomized $\mathbb{A}$). Hence, the transcript $(q_1,z_1,\dots,q_{\ell},z_{\ell})$ is completely determined by the answers given by the mechanism. So we only need to show that $(z_1,\dots,z_{\ell})$ is distributed similarly during the two cases. 
Note that, as we are aiming for constant accuracy, we may assume that each answer $z_i$ is specified using a constant number of bits (otherwise we can alter algorithm $\AAA$ to make this true while essentially maintaining its utility guarantees).

Now, for every $g\in\{0,1,2,\dots,2^d\}$, let $\texttt{AnswerQueries}_g$ denote an algorithm similar to algorithm \texttt{AnswerQueries}, except that in Step~\ref{step:red}, we set $Y=\PRG(\Gamma,k_p)$ if $p\in P$ or if $p\geq g$, and otherwise we sample $Y\in\{0,1\}$ uniformly.
Observe that $\texttt{AnswerQueries}_0\equiv \texttt{AnswerQueries}$ and that $\texttt{AnswerQueries}_{2^d}\equiv \texttt{AnswerQueriesOTP}$. We now show that for every $g$ it hods that the statistical distance between $\accgame_{n, \ell,\texttt{AnswerQueries}_g, \mathbb{A}}(P)$ and $\accgame_{n, \ell,\texttt{AnswerQueries}_{g+1}, \mathbb{A}}(P)$ is at most $\eps m$, which proves the lemma (by the triangle inequality).

Fix an index $g\in\{0,1,\dots,2^d-1\}$. Let $\accgame_g^*$ be an algorithm that simulates the interaction between $\mathbb{A}$ and $\texttt{AnswerQueries}_{g}$ on the database $P$, except that during an iteration of Step~\ref{step:loopp} with $p=g$, algorithm $\accgame_g^*$ gets $\Gamma$ and $Y$ as input, where $\Gamma$ is sampled uniformly and where $Y$ is either sampled uniformly from $\{0,1\}$ or computed as $Y=\PRG(\Gamma,k)$ for some key $k$ sampled uniformly from $\{0,1\}^b$ (unknown to $\texttt{AnswerQueries}_{g}$). These two cases correspond to  $\accgame_{n, \ell,\texttt{AnswerQueries}_{g+1}, \mathbb{A}}(P)$ and  $\accgame_{n, \ell,\texttt{AnswerQueries}_{g}, \mathbb{A}}(P)$, respectively. 

Observe that $\accgame_g^*$ can be implemented with storage space at most $\hat{W}=O(w+ \ell + b\cdot 2^d)$, specifically, for storing the internal state of algorithm $\AAA$ (which is $w$ bits), storing all previous answers $z_1,z_2,\dots,z_i$ (which is $O(\ell)$ bits), and storing all the keys $k_p$ for $p\neq g$ (which takes at most $b\cdot 2^d$ bits). Note that, as we assume that $\mathbb{A}$ is deterministic, on every step we can compute the next query from the previously given answers. 

Now, when $\accgame_g^*$ is given truly random bits $Y$, then it can be viewed as an adversary acting in the ideal experiment for $\PRG$ (see Section~\ref{sec:BSM}), and when $\accgame_g^*$ is given pseudorandom bits then it can be viewed as an adversary acting in the real experiment. By Theorem~\ref{thm:Vadhan}, assuming that $\PRG$ is $\eps$-secure against adversaries with storage $\hat{W}$, then the distribution on the storage of $\accgame_g^*$ in the two cases is close up to statistical distance $\eps m$. The lemma now follows from the fact that the sequence of answers $(z_1,\dots,z_{\ell})$ is included in the storage of $\accgame_g^*$.
\end{proof}

Combining Lemma~\ref{lem:streamingAccuracy} (stating that \texttt{AnswerQueries} is statistically-accurate) with Lemma~\ref{lem:TVdistance} (stating that \texttt{AnswerQueries} and \texttt{AnswerQueriesOTP} are close) we get that \texttt{AnswerQueriesOTP} must also be statistically-accurate. Formally,

\begin{lemma}\label{lem:OTPAccuracy}
Suppose that $\AAA$ is $(\alpha,\beta)$-accurate for the SADA problem for streams of length $m$ using memory $w$, and suppose that $\PRG$ is an $\eps$-secure BSM pseudorandom generator against adversaries with storage $O(w+ \ell + b\cdot 2^d)$, where $\ell=\frac{m-n}{(a+1)\cdot2^d}$. Then for every $\beta',\eps>0$, algorithm \texttt{AnswerQueriesOTP} is $\left(\frac{\alpha}{1-\gamma}+\alpha',\beta+\beta'+2^d m \eps\right)$-statistically-accurate for $\ell$ queries where
$$
\alpha'=O\left( \sqrt{\frac{w+\ln(\frac{\ell}{\beta'})}{n}}\right).
$$
\end{lemma}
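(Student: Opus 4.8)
The plan is to combine the two facts we have already established --- that \texttt{AnswerQueries} is statistically-accurate (Lemma~\ref{lem:streamingAccuracy}) and that the outcome distributions of the two accuracy games are statistically close (Lemma~\ref{lem:TVdistance}) --- by a routine coupling/triangle-inequality argument. Concretely, fix an arbitrary distribution $\DDD$ over $\{0,1\}^d$ and an arbitrary (without loss of generality deterministic) adversary $\mathbb{A}$ generating the queries. The quantity we must bound is
$$
\Pr_{\substack{S\sim\DDD^n\\ \accgame_{n,\ell,\texttt{AnswerQueriesOTP},\mathbb{A}}(S)}}\left[ \exists i \text{ such that } \left|z_i - q_i(\DDD)\right| > \tfrac{\alpha}{1-\gamma} + \alpha' \right].
$$

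First I would note that the event ``$\exists i: |z_i - q_i(\DDD)| > \tfrac{\alpha}{1-\gamma}+\alpha'$'' is a fixed (measurable) event defined on the transcript $(q_1,z_1,\dots,q_\ell,z_\ell)$, together with the underlying sample $S$ --- it depends only on the transcript and on $q_i(\DDD)$, which is a deterministic function of $q_i$. Hence its probability differs by at most the statistical distance between the two joint distributions of $(S, \text{transcript})$. In Lemma~\ref{lem:TVdistance} the bound $2^d m \eps$ was stated for a fixed input database $P$; since it holds for every fixed $P$, it also holds after drawing $S\sim\DDD^n$ (averaging over $S$ cannot increase statistical distance), so the joint distributions of $(S,\text{transcript})$ under the two games are within $2^d m\eps$. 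Therefore the probability above is at most
$$
\Pr_{\substack{S\sim\DDD^n\\ \accgame_{n,\ell,\texttt{AnswerQueries},\mathbb{A}}(S)}}\left[ \exists i : \left|z_i - q_i(\DDD)\right| > \tfrac{\alpha}{1-\gamma} + \alpha' \right] + 2^d m \eps \;\le\; \beta + \beta' + 2^d m \eps,
$$
where the last inequality is exactly the conclusion of Lemma~\ref{lem:streamingAccuracy} with the stated value of $\alpha'$ (here I invoke Lemma~\ref{lem:streamingAccuracy}, which already absorbed both the empirical-accuracy failure probability $\beta$ from Claim~\ref{claim:empiricalAccuracy} and the generalization failure probability $\beta'$ from Theorem~\ref{thm:transcriptCompression}). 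Since $\DDD$ and $\mathbb{A}$ were arbitrary, this establishes $\left(\tfrac{\alpha}{1-\gamma}+\alpha', \beta+\beta'+2^d m\eps\right)$-statistical-accuracy for $\ell$ queries.

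The only mild subtlety --- and the point I would be careful about --- is that Lemma~\ref{lem:TVdistance} is phrased for a fixed database $P$, whereas statistical accuracy quantifies over $S\sim\DDD^n$; the resolution is the standard fact that statistical distance is non-increasing under the (identical) post-processing ``sample $S$ and run the game,'' so closeness for every fixed $P$ yields closeness of the $S$-averaged distributions. There is no real obstacle here: the hypothesis that $\PRG$ is $\eps$-secure against adversaries with storage $O(w+\ell+b\cdot2^d)$ is precisely what Lemma~\ref{lem:TVdistance} needs, and it is carried through verbatim. I would close by remarking that this lemma, together with Theorem~\ref{thm:adaNegative} (which forbids natural algorithms with constant accuracy for $\Omega(n^2)$ queries) and Lemma~\ref{lem:natural} (\texttt{AnswerQueriesOTP} is natural), will yield the desired contradiction once parameters are instantiated so that $\ell \ge \Omega(n^2)$, both $\tfrac{\alpha}{1-\gamma}+\alpha'$ and $\beta+\beta'+2^d m\eps$ are smaller than the constant $c$ of Theorem~\ref{thm:adaNegative}, and the domain size $\Omega(n)$ is met.
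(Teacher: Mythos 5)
Your proposal is correct and takes exactly the approach the paper intends: the paper states Lemma~\ref{lem:OTPAccuracy} with only the one-line justification ``Combining Lemma~\ref{lem:streamingAccuracy} with Lemma~\ref{lem:TVdistance} \dots'' and no displayed proof, and your argument is the proper elaboration of that combination via the triangle inequality for statistical distance. The one point worth tightening is your phrasing ``averaging over $S$ cannot increase statistical distance'': what you actually use is that the marginal of $S\sim\DDD^n$ is identical in the two games and the conditional distributions of the transcript given $S=P$ are within $2^d m\eps$ for every $P$ (Lemma~\ref{lem:TVdistance}), so the joint distributions of $(S,\text{transcript})$ are within $2^d m\eps$; also, as you implicitly note, the failure event depends only on the transcript since $q_i(\DDD)$ is determined by $q_i$.
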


So, Lemmas~\ref{lem:natural} and~\ref{lem:OTPAccuracy} state that algorithm \texttt{AnswerQueriesOTP} is both natural and statistically-accurate. To obtain a contradiction to Theorem~\ref{thm:adaNegative}, we instantiate Lemma~\ref{lem:OTPAccuracy} with the pseudorandom generator from Theorem~\ref{thm:Vadhan}. We obtain the following result.

\begin{theorem}\label{thm:main}
For every $w$, there exists a streaming problem over domain of size $\poly(w)$ and stream length $O(w^5)$ that requires at least $w$ space to be solved in the adversarial setting to within constant accuracy (small enough), but can be solved in the oblivious setting using space $O(\log^2(w))$.
\end{theorem}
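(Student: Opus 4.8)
The plan is to instantiate the chain of lemmas developed in this section with concrete parameter choices, and derive a contradiction with Theorem~\ref{thm:adaNegative} unless the adversarially-robust streaming algorithm uses at least $w$ space. First I would fix the parameters of the SADA problem as functions of $w$: set $n=\Theta(w)$ to be the number of ``data points'', set the sample dimension $d=\Theta(\log w)$ so that the domain size $2^d=\poly(w)$ matches the $\Omega(n)$ requirement of Theorem~\ref{thm:adaNegative}, choose $\ell=\Theta(n^2)=\Theta(w^2)$ adaptively chosen queries (strictly more than the $O(n^2)$ barrier), take $a=\Theta(w)$ (or whatever is needed so that $c=1$ and the BSM generator of Theorem~\ref{thm:Vadhan} applies with $c\le a/4$), and take $\gamma$ to be a small enough constant. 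With each query consuming a bulk of $(a+1)\cdot 2^d$ updates and there being $\ell$ such bulks, the stream length is $m=n+\ell\cdot(a+1)\cdot 2^d=\Theta(w^2\cdot w\cdot w^2)=O(w^5)$, and the domain $X=\{0,1\}^d\times\{0,1\}^b$ has size $2^{d+b}$; since the BSM generator has key length $b=O(\log(a/\eps))$, choosing $\eps$ polynomially small in $w$ keeps $b=O(\log w)$ and hence $|X|=\poly(w)$.

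Next I would invoke the positive direction: by the theorem analyzing \texttt{ObliviousSADA} together with Observation~\ref{obs:obliviousSpace}, for constant $\alpha,\beta,\gamma$ and the generator of Theorem~\ref{thm:Vadhan} with $\eps=1/\poly(w)$, the oblivious space is $O((\log(1/\eps)+b+d)\cdot\log m)=O(\log^2 w)$, giving the claimed oblivious upper bound. For the lower bound, suppose toward a contradiction that some adversarially-robust algorithm $\AAA$ solves the SADA problem with the above parameters to within a small constant accuracy using memory $w'<w$. Then Lemma~\ref{lem:OTPAccuracy}, instantiated with the generator of Theorem~\ref{thm:Vadhan} (which is $\eps$-secure against storage-$O(w'+\ell+b\cdot 2^d)$ adversaries, a bound that is $\poly(w)$, well within the linear-in-$a$ storage rate $\beta\le 1/2$ once $a$ is chosen large enough relative to this $\poly(w)$ quantity), tells us that \texttt{AnswerQueriesOTP} is $\left(\frac{\alpha}{1-\gamma}+\alpha',\beta+\beta'+2^d m\eps\right)$-statistically-accurate for $\ell$ queries with $\alpha'=O(\sqrt{(w'+\ln(\ell/\beta'))/n})$. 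Choosing $\beta'$ a small constant and $\eps=1/\poly(w)$ small enough that $2^d m\eps$ is a tiny constant, and using $n=\Theta(w)$ with $w'<w$ so that $\alpha'$ is a small constant, the overall accuracy parameters $(\alpha'',\beta'')$ are both bounded by the constant $c$ from Theorem~\ref{thm:adaNegative}. But Lemma~\ref{lem:natural} says \texttt{AnswerQueriesOTP} is natural, over a domain of size $2^d=\Omega(n)$, answering $\ell=\Omega(n^2)$ queries — contradicting Theorem~\ref{thm:adaNegative}. Hence $w'\ge w$, proving the adversarial lower bound.

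The main obstacle is the parameter balancing: the quantities $\alpha'$ (which wants $n$ large), the security slack $2^d m\eps$ (which wants $\eps$ small, hence $b$ and thus $|X|$ larger, and also couples to $m$), the domain-size constraint $2^d=\Theta(n)$ of Theorem~\ref{thm:adaNegative}, the requirement $\ell=\omega(n^2)$, and the BSM storage-rate constraint (the adversary storage $O(w'+\ell+b\cdot 2^d)=\poly(w)$ must be at most $\beta a$ with $\beta\le 1/2$, forcing $a=\Omega(\poly(w))$, which in turn feeds back into $m$) must all be satisfied simultaneously while keeping $|X|=\poly(w)$, $m=O(w^5)$, and $w$ space on the nose. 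I would carry this out by first pinning $n=\Theta(w)$ and $d=\log n+O(1)$, then choosing $\eps$ a sufficiently small inverse polynomial in $w$ so both $2^d m\eps$ and $b=O(\log(a/\eps))=O(\log w)$ behave, then choosing $a=\Theta(w)$ (or the smallest polynomial making the storage-rate constraint hold), and finally reading off $m=n+\ell(a+1)2^d=O(w^5)$ and verifying every inequality. A secondary point to be careful about is that Theorem~\ref{thm:adaNegative} is stated for a fixed constant $c$, so I must ensure the small-constant accuracy assumed of $\AAA$ propagates — after the $\frac{1}{1-\gamma}$ blowup and the additive $\alpha'$ — to something still below $c$; this is why the phrase ``constant accuracy (small enough)'' appears in the statement, and it simply amounts to choosing $\AAA$'s accuracy constant and $\gamma$ small enough at the outset.
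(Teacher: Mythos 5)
Your overall approach is the same as the paper's: fix $n=\Theta(w)$, $d=\log n+O(1)$, $\ell=\Theta(n^2)$, and then chain \texttt{ObliviousSADA} plus Observation~\ref{obs:obliviousSpace} for the upper bound against Lemmas~\ref{lem:natural} and~\ref{lem:OTPAccuracy} with the Vadhan PRG to contradict Theorem~\ref{thm:adaNegative} for the lower bound. The structure and all the supporting lemmas you invoke match.

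However, there is a concrete error in your choice of $a$, and your arithmetic for $m$ is internally inconsistent. You propose $a=\Theta(w)$, but the storage bound the PRG must resist is $O(w'+\ell+b\cdot 2^d)=O(w+w^2+w\log w)=\Theta(w^2)$ (with $\ell=\Theta(w^2)$, $2^d=\Theta(w)$, $b=\Theta(\log w)$). Since Theorem~\ref{thm:Vadhan} only gives security for storage rate $\beta\le 1/2$, i.e., adversary storage at most $a/2$, you are forced to take $a=\Omega(w^2)$; $a=\Theta(w)$ is off by a factor of $w$. The paper's proof makes this explicit and takes $a=\Theta(w^2)$, which is exactly what drives the final stream length: $m=\Theta(\ell\cdot(a+1)\cdot 2^d)=\Theta(w^2\cdot w^2\cdot w)=\Theta(w^5)$. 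Your displayed computation $\Theta(w^2\cdot w\cdot w^2)=O(w^5)$ reaches $w^5$ only by silently inflating $2^d$ to $\Theta(w^2)$, which contradicts your earlier choice $d=\log n+O(1)$ and would in turn worsen the storage bound to $\tilde\Theta(w^2)$ via $b\cdot 2^d$, so that choice is not consistent either. You do hedge with ``or the smallest polynomial making the storage-rate constraint hold,'' which is the right instinct, but you need to actually carry it out: that constraint \emph{is} what sets $a=\Theta(w^2)$, and the resulting $m=\Theta(w^5)$ is not a coincidence but a direct consequence. Fixing $a=\Theta(w^2)$ and $2^d=\Theta(w)$ throughout, and then rechecking $b=\Theta(\log(a/\eps))=\Theta(\log w)$ for $\eps=\Theta(1/(m\cdot 2^d))$, makes everything close correctly and matches the paper.
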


\begin{proof}
To contradict Theorem~\ref{thm:adaNegative}, we want the (natural) algorithm \texttt{AnswerQueriesOTP} to answer more than $n^2$ queries over a domain of size $\Omega(n)$. So we set $\ell=\frac{m-n}{(a+1)\cdot2^d}=\Omega(n^2)$ and $d=O(1)+\log n$. Note that with these settings we have $m=\Theta(n^3\cdot a)$.

By Lemma~\ref{lem:OTPAccuracy}, in order to ensure that \texttt{AnswerQueriesOTP}'s answers are accurate (to within some small constant), we set $n=\Theta(w+\log(m))$ (large enough). We assume without loss of generality that $w\geq\log(m)$, as we can always increase the space complexity of $\AAA$. So $n=\Theta(w)$, and $m=\Theta(w^3\cdot a)$.

In addition, to apply Lemma~\ref{lem:OTPAccuracy}, we need to ensure that the conditions on the security of $\PRG$ hold. For a small constant $\tau>0$, we use the pseudorandom generator from Theorem~\ref{thm:Vadhan} with $\eps= \frac{\tau}{m\cdot2^d}=O(\frac{1}{mn})=O(\frac{1}{mw})$. To get security against adversaries with storage $O(w+ \ell + b\cdot 2^d)=O(w^2+bw)$, we need to ensure 
$$a=\Omega\left(w^2+bw\right) \qquad\text{and}\qquad b=\Omega\left(\log\left(\frac{a}{\eps}\right)\right)=\Theta(\log(am)).$$
It suffices to take $a=\Theta(w^2)$ and $b=\Theta(\log(wm))=\Theta(\log(w))$. 
Putting everything together, with these parameters, by Lemma~\ref{lem:OTPAccuracy}, we get that algorithm \texttt{AnswerQueriesOTP} answers $\ell=\Omega(n^2)$ adaptive queries over domain of size $\Omega(n)$, which contradicts Theorem~\ref{thm:adaNegative}. This means that an algorithm with space complexity $w$ cannot solve the $(a,b,d,m,n,\gamma)$-SADA problem to within (small enough) constant accuracy, where 
$a=\Theta(w^2)$, and $b=d=O(\log(w))$, and $m=\Theta(w^5)$, and $n=\Theta(w)$.

In contrast, by Observation~\ref{obs:obliviousSpace}, for constant $\alpha,\beta,\gamma$, the oblivious algorithm \texttt{ObliviousSADA} uses space $O(\log^2(w))$ in this settings.
\end{proof}

\section{A Computational Separation}\label{sec:SADA2}

In the previous sections we presented a streaming problem that can be solved in the oblivious setting using small space complexity, but requires large space complexity to be solved in the adversarial setting. Even though this provides a strong separation between adversarial streaming and oblivious streaming, a downside of our result is that the streaming problem we present (the SADA problem) is somewhat unnatural.

\begin{question}
Is there a ``natural'' streaming problem for which a similar separation holds?
\end{question}

In particular, one of the ``unnatural'' aspects of the SADA problem is that the target function depends on the {\em order} of the elements in the stream (i.e., it is an asymmetric function). 
Asymmetric functions can sometimes be considered ``natural'' in the streaming context (e.g., counting the number of inversions in a stream or finding the longest increasing subsequence). However, the majority of the ``classical'' streaming problems are defined by symmetric functions (e.g., counting the number of distinct elements in the stream or the number of heavy hitters).

\begin{question}\label{question:open}
Is there a {\em symmetric} streaming problem that can be solved using polylogarithmic space (in the domain size and the stream length) in the oblivious setting, but requires polynomial space in the adversarial setting?
\end{question}

In this section we provide a positive answer to this question for {\em computationally efficient} streaming algorithms. That is, unlike our separation from the previous sections (for the SADA problem) which is information theoretic, the separation we present in this section (for a symmetric target function) is computational. We consider Question~\ref{question:open} (its information theoretic variant) to be an important question for future work.

\subsection{The SADA2 Problem}

Let $\kappa\in\N$ be a security parameter, let $m\in\N$ denote the length of the stream, and let $d\in\N$ and $\gamma\in(0,1)$ be additional parameters. Let $(\Gen,\Enc,\Dec)$ be a semantically secure private-key encryption scheme, with key length $\kappa$ and ciphertext length $\psi=\poly(\kappa)$ for encrypting a message in $\{0,1\}$. We consider a streaming problem over a domain $X=\{0,1\}^{1+d+\log(m)+\psi}$, where an update $x\in X$ has two possible types (the type is determined by the first bit of $x$):

\paragraph{Data update:} $x=(0,p,k)\in \{0,1\}\times\{0,1\}^d\times\{0,1\}^{\kappa}$,
\paragraph{Query update:} $x=(1,p,j,c)\in \{0,1\}\times\{0,1\}^d\times\{0,1\}^{\log m}\times\{0,1\}^{\psi}$.\\

We define a function $g:X^*\rightarrow[0,1]$ as follows. Let $\vec{x}=\{x_1,\dots,x_i\}$ be a sequence of updates.
For $p\in\{0,1\}^d$, let $x_{i_1}{=}(0,p,k_{i_1}),\dots,x_{i_{\ell}}{=}(0,p,k_{i_{\ell}})$ denote all the ``data updates'' in $\vec{x}$ with the point $p$, and let $k_{i_1},\dots,k_{i_{\ell}}$ denote their corresponding keys (some of which may be identical). Now let $k_{p}=k_{i_1}\wedge\dots\wedge k_{i_{\ell}}$. That is, $k_p$ is the bit-by-bit AND of all of the keys that correspond to ``data updates'' with the point $p$. 
Now let $S$ be the set that %
contains the pair $(p,k_p)$ for every $p$ such that there exists a ``data update'' in $\vec{x}$ with the point $p$. Importantly, $S$ is a {\em set} rather than a multiset.
Similarly to the previous sections, we also add special symbols, $\bot_1,\dots,\bot_{\gamma 2^d}$, to $S$. %
Formally, $S$ is constructed as follows.
\begin{center}
\noindent\fboxother{
\parbox{.9\columnwidth}{
\begin{enumerate}
	\item Initiate $S=\{\bot_1,\dots,\bot_{\gamma 2^d}\}$.
	\item For every $p\in\{0,1\}^d$:
	\begin{enumerate}
		\item Let $x_{i_1}{=}(0,p,k_{i_1}),\dots,x_{i_{\ell}}{=}(0,p,k_{i_{\ell}})$ denote all the ``data updates'' in $\vec{x}$ (i.e., updates beginning with 0) that contain the point $p$.
		\item If $\ell>0$ then let $k_{p}=k_{i_1}\wedge\dots\wedge k_{i_{\ell}}$ and add $(p,k_p)$ to $S$.
	\end{enumerate}
\end{enumerate}
}}
\end{center}

\medskip
We now define the query $q$ that corresponds to $\vec{x}$. First, $q(\bot_1)=\dots=q(\bot_{\gamma 2^d})=1$. Now, for $p\in\{0,1\}^d$, let 
$$
j_{p}^{\rm max}=\max\left\{  j : \exists c\in\{0,1\}^{\psi} \text{ such that }  (1,p,j,c)\in\vec{x} \right\}.
$$
That is, $j_{p}^{\rm max}$ denotes the maximal index such that $(1,p,j_{p}^{\rm max},c)$ appears in $\vec{x}$ for some $c\in\{0,1\}^{\psi}$. Furthermore, let $x_{i_1}{=}(1,p,j_{p}^{\rm max},c_{i_1}),\dots,x_{i_{\ell}}{=}(1,p,j_{p}^{\rm max},c_{i_{\ell}})$ denote the ``query updates'' with $p$ and $j_{p}^{\rm max}$. %
Now let $c_p=c_{i_1}\wedge\dots\wedge c_{i_{\ell}}$.
That is, $c_p$ is the bit-by-bit AND of all of the ciphertexts that correspond to ``query updates'' with $p$ and $j_{p}^{\rm max}$. 
If the point $p$ does not appear in any ``query update'' then we set $c_p=\vec{1}$ by default. 
The query $q:\left(\{0,1\}^d\times\{0,1\}^{\kappa}\right)\rightarrow\{0,1\}$ is defined as $q(p,k) = \Dec(c_p , k )$.

\medskip
Finally, the value of the function $g$ on the stream $\vec{x}$ is defined to be
$$
g(\vec{x}) = q(S) = \frac{1}{|S|}\left[ \gamma 2^d +  \sum_{(p,k_p)\in S} q(p , k_p )\right].
$$ 
That is, $g(\vec{x})$ returns the average of $q$ on $S$. %

\begin{definition}[The $(d,m,\kappa,\gamma)$-SADA2 Problem]
At every time step $i\in[m]$, after obtaining the next update $x_i\in X$, the goal is to approximate $g(x_1,\dots,x_i)$.
\end{definition}

\subsection{An Oblivious Algorithm for the SADA2 Problem}

\begin{algorithm*}[t!]
\caption{\bf \texttt{ObliviousSADA2}}\label{alg:ObliviousSADA2}

{\bf Setting:} On every time step we obtain the next update, which is an element of $X=\{0,1\}^{1+d+\log(m)+\psi}$.

\begin{enumerate}[leftmargin=15pt,rightmargin=10pt,itemsep=1pt,topsep=3pt]

\item Let $D$ be a sample (multiset) containing $O(\frac{1}{\alpha^2\gamma^2}\ln(\frac{m}{\beta}))$ i.i.d.\ elements chosen uniformly from $\{0,1\}^d\cup\{\bot_1,\dots,\bot_{\gamma 2^d}\}$, and let $D_{\bot}\leftarrow D\cap\{\bot_1,\dots,\bot_{\gamma 2^d}\}$, and let $D_{X}\leftarrow D\setminus D_{\bot}$.
\item For every $p\in D_X$, let $\texttt{inS}_p\leftarrow 0$, let $k_p\leftarrow \vec{1}$, let $j_p\leftarrow 0$, and let $c_p\leftarrow \vec{1}$.

\item REPEAT

\begin{enumerate}

	\item Obtain the next update in the stream $x$.
	\item If the first bit of $x$ is 0 then
	\begin{enumerate}
		\item Denote $x=(0,p,k)$.
		\item If $p\in D_X$ then let $\texttt{inS}_p\leftarrow 1$ and let $k_p\leftarrow k_p\wedge k$.
		
	\end{enumerate}

	\item If the first bit of $x$ is 1 then
	\begin{enumerate}
		\item Denote $x=(1,p,j,c)$.
		\item If $p\in D_X$ and $j=j_p$ then set $c_p\leftarrow c_p\wedge c$. 
		\item If $p\in D_X$ and $j>j_p$ then set $c_p\leftarrow c$ and $j_p\leftarrow j$.		
	\end{enumerate}
	
	\item Let $v\leftarrow \left|\{p\in D_X : \texttt{inS}_p=1 \}\right|+\left|D_{\bot}\right|$ and let $z\leftarrow \frac{\left|D_{\bot}\right|}{v}$.
	
	\item For every $p\in D_X$ such that $\texttt{inS}_p=1$ set $z\leftarrow z + \frac{\Dec(c_p,k_p)}{v}$.

	\item\label{step:sada2out} Output $z$.
	\end{enumerate}

\end{enumerate}

\end{algorithm*}

In this section we present an oblivious streaming algorithm for the SADA2 problem. 
The algorithm begins by sampling a multiset $D$ containing a small number of random elements from the domain $\{0,1\}^d\cup\{\bot_1,\dots,\bot_{\gamma 2^d}\}$. The algorithm then proceeds by maintaining the set $S$ and the query $q$ (which are determined by the input stream; as in the definition of the SADA2 problem) only w.r.t.\ elements that appear in the sample $D$. As we next explain, in the oblivious setting, this suffices in order to accurately solve the SADA2 problem. Consider algorithm \texttt{ObliviousSADA2}, given in Algorithm~\ref{alg:ObliviousSADA2}.

\begin{theorem}
Assume that
$2^d=\Omega(\frac{1}{\gamma}\ln(\frac{m}{\beta}))$ and $|D|\geq\Omega(\frac{1}{\alpha^2\gamma^2}\ln(\frac{m}{\beta}))$. Then \texttt{ObliviousSADA2} is $(\alpha,\beta)$-accurate for the SADA2 problem in the oblivious setting.
\end{theorem}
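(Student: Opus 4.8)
The plan is to show that the sample-based quantities maintained by \texttt{ObliviousSADA2} are, with high probability, simultaneously good approximations (over all $m$ time steps) of the true quantities defining $g(\vec{x}_i)$, and that the algorithm's output $z$ at each step is exactly the natural empirical estimator built from the sample $D$. The key observation is that, although the input stream may grow and change $S$ and $q$ over time, at any fixed time step $i$ the set $S=S(\vec{x}_i)$ and the query $q=q(\vec{x}_i)$ are fixed objects, and $g(\vec{x}_i)=q(S)$ is an average of a $\{0,1\}$-valued function over the (at most $2^d+\gamma 2^d$) elements of $S$, where the $\gamma 2^d$ copies of $\bot$ always contribute $1$. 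Since $|S|\geq \gamma 2^d$, the $\bot$'s guarantee that $g(\vec{x}_i)$ is bounded below by $\frac{\gamma 2^d}{(1+\gamma)2^d}=\Omega(\gamma)$, so a multiplicative $(1\pm\alpha)$ guarantee follows from an additive $O(\alpha\gamma)$ guarantee.

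First I would argue that $D$ is a uniform i.i.d.\ sample from the ``universe'' $U=\{0,1\}^d\cup\{\bot_1,\dots,\bot_{\gamma 2^d}\}$ of size $(1+\gamma)2^d$, and that the algorithm correctly maintains, for each sampled $p\in D_X$, the values $\texttt{inS}_p$, $k_p$, $j_p$, $c_p$ exactly as they would be computed in the definition of $S$ and $q$ restricted to that point (this is a syntactic check: bit-by-bit AND of data keys, tracking the max query index $j_p^{\max}$ and ANDing ciphertexts at that index, defaulting to $\vec 1$). Consequently, for each $p\in D_X\cap\{p:p\text{ appears in some data update}\}$ the algorithm knows $q(p,k_p)=\Dec(c_p,k_p)$, and for $p\in D_{\bot}$ we have $q(p)=1$; for $p\in D_X$ with $\texttt{inS}_p=0$ the point is simply not in $S$. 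Hence the output $z$ at time step $i$ equals
$$
z=\frac{|D_{\bot}|+\sum_{p\in D_X,\,\texttt{inS}_p=1} q(p,k_p)}{|D_{\bot}|+|\{p\in D_X:\texttt{inS}_p=1\}|}
=\frac{1}{|D\cap S|}\sum_{u\in D\cap S} q(u)
= q(D\cap S),
$$
the empirical average of $q$ over those sampled elements that land in $S$. (Here I am reading $D\cap S$ through the natural identification of $\bot_j$'s and of points $p$ with their $S$-entries $(p,k_p)$.)

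Next I would apply concentration. Fix a time step $i$, with $S=S(\vec x_i)\subseteq U$ and $q=q(\vec x_i)$. Let $\rho=|S|/|U|\in[\tfrac{\gamma}{1+\gamma},1]$ be the fraction of the universe lying in $S$, and note $q(S)=\frac{1}{|S|}\sum_{u\in S}q(u)$. Since $D$ is an i.i.d.\ uniform sample from $U$, the quantity $\frac{1}{|D|}\sum_{u\in D}\mathbbm 1[u\in S]\cdot q(u)$ has expectation $\rho\cdot q(S)$ and $\frac{|D\cap S|}{|D|}$ has expectation $\rho$; by a Chernoff/Hoeffding bound with $|D|\geq\Omega(\frac{1}{\alpha^2\gamma^2}\ln(m/\beta))$, both are within additive $O(\alpha\gamma)$ of their expectations except with probability $\beta/m$ (the extra $\gamma$ factor compared to the earlier oblivious analysis is because here we estimate $\rho$, which can be as small as $\Omega(\gamma)$, and a ratio of two such estimates loses a factor $1/\gamma$). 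Taking a union bound over all $m$ time steps, with probability at least $1-\beta$ this holds simultaneously for every $i$. On this event, $z=q(D\cap S)$ is the ratio of the two estimated quantities, so $|z-q(S)|\le O(\alpha\gamma)/\rho \le O(\alpha)$ once we divide by $\rho=\Omega(\gamma)$; and since $g(\vec x_i)=q(S)\ge \frac{\gamma}{1+\gamma}=\Omega(1)$ (for constant $\gamma$), an additive $O(\alpha)$ error translates, after rescaling $\alpha$ by a constant, into a multiplicative $(1\pm\alpha)$ guarantee $z\in(1\pm\alpha)\cdot g(\vec x_i)$. The assumption $2^d=\Omega(\frac{1}{\gamma}\ln(m/\beta))$ is used to ensure the $\bot$-block has integer-ish size large enough that $D_\bot$ is itself well-concentrated (equivalently, that $\gamma 2^d$ is a meaningful count and the sample sees enough $\bot$'s), so that the denominator $v$ is never pathologically small.

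\textbf{The main obstacle} I anticipate is bookkeeping the two-sided concentration of a \emph{ratio} cleanly: one must argue that the numerator $\frac1{|D|}\sum_{u\in D}\mathbbm 1[u\in S]q(u)$ and denominator $\frac{|D\cap S|}{|D|}$ concentrate, and then control the ratio $z$, which requires the denominator to be bounded away from $0$ — this is exactly what the $\bot$-padding buys us, but the error amplification in passing from additive error on numerator/denominator to error on the ratio is where the extra $1/\gamma^2$ in the sample-size requirement (versus the $1/\gamma$ of the symmetric SADA analysis) comes from, and getting that dependence right is the delicate part. Everything else — correctness of the maintained per-point state, the identity $z=q(D\cap S)$, the union bound over $m$ steps, and the additive-to-multiplicative conversion using $g(\vec x_i)=\Omega(1)$ — is routine.
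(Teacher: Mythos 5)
Your proposal is essentially correct and rests on the same structural observations as the paper: (i) the $\bot$-padding forces $g(\vec x_i)=q(S)\geq\gamma/(1+\gamma)$, so one can work additively and convert at the end; (ii) the algorithm's maintained per-point state makes the output exactly the empirical mean of $q$ on $D\cap S$; and (iii) a Chernoff bound plus a union bound over the $m$ steps finishes. Where you diverge is in step (iii): you treat $z$ as a \emph{ratio} of two sample averages over all of $D$ (numerator $\frac{1}{|D|}\sum_{u\in D}\mathbbm 1[u\in S]q(u)$ and denominator $|D\cap S|/|D|$), concentrate each to additive error $O(\alpha\gamma)$, and then control the ratio. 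The paper instead uses a two-stage conditional argument: first Chernoff to show $|D\cap S_i|\geq\frac{\gamma}{2}|D|$ with probability $1-\beta/m$, then note that \emph{conditioned} on $|D\cap S_i|=t$, the elements of $T$ are $t$ i.i.d.\ uniform draws from $S_i$, and apply a second (multiplicative) Chernoff bound directly to $q_i(T)$. These are equivalent in spirit and give the same $|D|=\Omega(\frac{1}{\alpha^2\gamma^2}\ln(m/\beta))$ requirement, but the conditional route avoids the ratio bookkeeping and cleanly yields the multiplicative $(1\pm\alpha)$ guarantee without any rescaling of $\alpha$; your ratio calculation, as written, loses an extra $1/\gamma$ factor in the final multiplicative error, which you (correctly) absorb by invoking that $\gamma$ is a constant — this is fine in context, but it means your version of the argument is a bit looser if $\gamma$ were allowed to tend to zero. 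Everything else — the maintenance correctness check, the union bound, the role of the $2^d=\Omega(\frac{1}{\gamma}\ln(m/\beta))$ assumption — is consistent with the paper's proof.
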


\begin{proof}
Fix the stream $\vec{x}_m=(x_1,\dots,x_m)$. Fix a time step $i\in[m]$, and consider the prefix $\vec{x}_i=(x_1,\dots,x_i)$. 
Let $S_i=S_i(\vec{x}_i)$ be the {\em set} and let $q_i=q_i(\vec{x}_i)$ be the {\em query} defined by $\vec{x}_i$, as in the definition of the SADA2 problem. 
Consider the multiset $T=\{ (p,k_p)  : p\in D_X \text{ and } \texttt{inS}_p=1  \}\cup D_{\bot}$.
Let $z_i$ be the answer returned in Step~\ref{step:sada2out} after precessing the update $x_i$. Observe that $z_i$ is exactly the average of $q_i$ on the multiset $T$, that is, $z_i=q_i(T)$.

Recall that $|S_i|\geq\gamma 2^d$, and recall that every element in $D$ is sampled uniformly from $\{0,1\}^d \cup \{\bot_1,\dots,\bot_{\gamma 2^d}\}$. Therefore, $\E_D[|D\cap S_i|]\geq|D|\cdot \frac{\gamma 2^d}{2^d+\gamma2^d}=|D|\cdot\frac{\gamma}{1+\gamma}$. By the Chernoff bound, assuming that $2^d=\Omega(\frac{1}{\gamma}\ln(\frac{m}{\beta}))$, then with probability at least $1-\frac{\beta}{m}$ we have that $|D\cap S_i|\geq\frac{\gamma}{2}|D|$. We proceed with the analysis assuming that this is the case.

Now, for every $t\geq\frac{\gamma}{2}|D|$, when conditioning on $|D\cap S_i|=t$ we have that $T$ is a sample containing $t$ i.i.d.\ elements from $S_i$. In that case, again using the Chernoff bound, with probability at least $1-\frac{\beta}{m}$ we have that $z_i=q_i(T)\in(1\pm\alpha)\cdot q_i(S_i)$, assuming that $t\geq\Omega(\frac{1}{\alpha^2\gamma}\ln(\frac{m}{\beta}))$. This assumption holds when $|D|\geq\Omega(\frac{1}{\alpha^2\gamma^2}\ln(\frac{m}{\beta}))$.

So, for every fixed $i$, with probability at least $1-O(\frac{\beta}{m})$ we have that $z_i\in(1\pm\alpha)\cdot q_i(S_i)$. By a union bound, this holds for every time step $i$ with probability at least $1-O(\beta)$.
\end{proof}

\begin{observation}\label{obs:oblivious2Space}
For constant $\alpha,\beta,\gamma$, algorithm \texttt{ObliviousSADA2} uses space $\tilde{O}\left( \log(m)\cdot\log|X|\right)$, in addition to the space required by $\Dec$.
\end{observation}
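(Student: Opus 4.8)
The plan is to verify Observation~\ref{obs:oblivious2Space} by a direct accounting of the memory footprint of each component of \texttt{ObliviousSADA2}. The only structural fact I would use at the outset is that every update belongs to the domain $X=\{0,1\}^{1+d+\log(m)+\psi}$ and that the data/query updates are parsed from strings of this length; consequently each of $d$, $\log m$, $\psi$, and the key length $\kappa$ is $O(\log|X|)$. (For $\kappa$ this follows because a data update $(0,p,k)$ with $k\in\{0,1\}^\kappa$ must fit into an $X$-string, so $\kappa\le\log m+\psi\le\log|X|$.) For constant $\alpha,\beta,\gamma$, the sample size is $|D|=O\!\big(\tfrac{1}{\alpha^2\gamma^2}\ln(m/\beta)\big)=O(\log m)$, which I would fix first.

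Next I would bound the two dominant contributions. First, the sample $D$: each element is either a string in $\{0,1\}^d$ or one of $\bot_1,\dots,\bot_{\gamma 2^d}$, hence storable in $O(d)=O(\log|X|)$ bits (one may in fact keep $D_X$ explicitly together with only the count $|D_\bot|$), so $D$ costs $O(\log m\cdot\log|X|)$ bits. Second, the per-point bookkeeping: for each $p\in D_X$ the algorithm maintains $\texttt{inS}_p\in\{0,1\}$, a key $k_p\in\{0,1\}^\kappa$, an index $j_p\in\{0,\dots,m\}$ (i.e.\ $O(\log m)$ bits), and a ciphertext $c_p\in\{0,1\}^\psi$, for a total of $O(1+\kappa+\log m+\psi)=O(\log|X|)$ bits per point; summed over the $O(\log m)$ points of $D_X$ this is again $O(\log m\cdot\log|X|)$ bits.

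Then I would account for the streaming loop itself: holding the current update $x$ costs $O(\log|X|)$ bits; the bitwise-AND updates to $k_p,c_p$ and the comparison/overwrite of $j_p$ are performed in place; the running output $z$ is a sum of at most $|D|$ terms, each a multiple of $1/v$ with $1\le v\le|D|$, hence a rational storable in $O(\log\log m)$ bits, and the auxiliary counters $v$ and the loop index over $D_X$ likewise take $O(\log\log m)$ bits; finally each evaluation of $\Dec(c_p,k_p)$ is performed one at a time and charged to ``the space required by $\Dec$''. Adding the three parts gives $O(\log m\cdot\log|X|)+O(\log\log m)$ bits on top of a single invocation of $\Dec$, which is $\tilde O(\log m\cdot\log|X|)$ (indeed $O(\log m\cdot\log|X|)$) as claimed.

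There is essentially no obstacle here; the only points worth a careful sentence are (i) that the upfront, data-independent draw of $D$ in Step~1 is legitimate in the oblivious model, since the stream cannot depend on the algorithm's randomness, and (ii) that the accumulator $z$ and the counter $v$ never blow up, because $z\in[0,1]$ is always represented with a denominator $v\le|D|=O(\log m)$, so $O(\log\log m)$ bits suffice throughout.
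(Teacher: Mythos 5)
The paper states Observation~\ref{obs:oblivious2Space} without an explicit proof, so there is no argument in the text to compare against; your accounting correctly fills in that gap. Your decomposition---$|D|=O(\log m)$ sampled points, each carrying $O(d+\kappa+\log m+\psi)=O(\log|X|)$ bits of state (the point, $\texttt{inS}_p$, $k_p$, $j_p$, $c_p$), plus $O(\log\log m)$ bits for $v$ and $z$, plus one invocation of $\Dec$---is exactly the bookkeeping the bound requires, and it is consistent in style with the paper's proof of the analogous Observation~\ref{obs:obliviousSpace} for the first SADA problem. The one step worth making explicit, which you do, is why $\kappa=O(\log|X|)$: a data update $(0,p,k)$ must fit into an $X$-string of length $1+d+\log m+\psi$, forcing $\kappa\le\log m+\psi\le\log|X|$.
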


\subsection{A Negative Result for the SADA2 Problem}

We now show that the SADA2 problem cannot be solved efficiently in the adversarial setting. To that end, suppose we have an adversarially robust streaming algorithm $\AAA$ for the SADA2 problem, and consider algorithm \texttt{AnswerQueries2} that uses $\AAA$ in order to solve the ADA problem. Recall that in the SADA2 problem the collection of ``data updates'' is treated as a {\em set}, while the input to an algorithm for the ADA problem is a {\em multiset}. 
In the following claim we show that \texttt{AnswerQueries2} is empirically-accurate w.r.t.\ its input (when treated as a set). 

\begin{algorithm*}[t!]
\caption{\bf \texttt{AnswerQueries2}}\label{alg:AnswerQueries2}

{\bf Input:} A database $P$ containing $n$ elements from $\{0,1\}^d$.

{\bf Setting:} On every time step we get a query $q:\{0,1\}^d\rightarrow\{0,1\}$.

{\bf Algorithm used:} An adversarially robust streaming algorithm $\AAA$ for the $(d,m,\kappa,\gamma)$-SADA2 problem with $(\alpha,\beta)$-accuracy for streams of length $m$. We abstract the coin tosses of $\AAA$ using {\em two} random strings, $r_1$ and $r_2$, of possibly unbounded length. Initially, we execute $\AAA$ with access to $r_1$, meaning that every time it tosses a coin it gets the next bit in $r_1$. At some point, we switch the random string to $r_2$, and henceforth $\AAA$ gets its coin tosses from $r_2$.

{\bf Algorithm used:} Encryption scheme $(\Gen,\Enc,\Dec)$, as in the definition of the SADA2 problem.

\begin{enumerate}[leftmargin=15pt,rightmargin=10pt,itemsep=1pt,topsep=3pt]

\item For every $p\in\{0,1\}^d$ sample $k_p\leftarrow\Gen(1^{\kappa})$ independently.

\item Sample $r_1\in\{0,1\}^{\nu}$ uniformly, and instantiate algorithm $\AAA$ with read-once access to bits of $r_1$. Here $\nu$ bounds the number of coin flips made by $\AAA$.

\item For every $p\in P$, feed the update $(0,p,k_p)$ to $\AAA$.

\item Sample $r_2\in\{0,1\}^{\nu}$ uniformly, and switch the read-once access of $\AAA$ to $r_2$. (The switch from $r_1$ to $r_2$ is done for convenience, so that after Step~3 we do not need to ``remember'' the position for the next coin from $r_1$.)

\item For $j=1$ to $\ell \triangleq\frac{m-n}{2^d}$ do

\begin{enumerate}
	\item Obtain the next query $q_j:\{0,1\}^d\rightarrow\{0,1\}$.
	
	\item For every $p\in\{0,1\}^d$ do
	
	\begin{enumerate}
		\item Let $c_p=\Enc(q_j(p) , k_p)$.
		\item Feed the update $(1,p,j,c_p)$ to $\AAA$.
	\end{enumerate}
	
	\item Obtain an answer $z$ from $\AAA$.
	\item Output $z$.
	
\end{enumerate}

\end{enumerate}
\end{algorithm*}

\begin{claim}%
Let $P\in\left(\{0,1\}^d\right)^*$ be an input multiset, let $\tilde{P}$ be the {\em set} containing every point that appears in $P$, and assume that $|\tilde{P}|=n$. 
If $\AAA$ is $(\alpha,\beta)$-accurate for the SADA2 problem, then $\texttt{AnswerQueries2}\left(P\right)$ is $\left(\alpha+\frac{\gamma\cdot2^d}{n},\beta\right)$-empirically-accurate for $\frac{m-n}{2^d}$ adaptively chosen statistical queries w.r.t.\ the set $\tilde{P}$.
\end{claim}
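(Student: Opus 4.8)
The plan is to trace the precise stream that \texttt{AnswerQueries2}$(P)$ feeds to the streaming algorithm $\AAA$, to compute the value of the SADA2 function $g$ on each prefix at which $\AAA$ is queried, and then to read empirical accuracy directly off the assumed accuracy of $\AAA$. First I would pin down the \emph{set} $S$ that this stream induces. After Step~3 the only data updates are $(0,p,k_p)$ for $p\in P$, with each key $k_p$ sampled once in Step~1, so the per-point key used by $g$ — the bitwise AND of all keys appearing in data updates with $p$ — is simply $k_p$ (the AND of equal strings is idempotent), and hence $S=\{(p,k_p):p\in\tilde P\}\cup\{\bot_1,\dots,\bot_{\gamma 2^d}\}$ with $|S|=n+\gamma 2^d$. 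The content of this step is that $S$ depends on $P$ only through $\tilde P$, which is precisely why the guarantee is stated against the set $\tilde P$; passing from the multiset $P$ to $\tilde P$ loses nothing because keys are attached per point, not per update.

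Next I would identify the query induced at the $j$-th iteration of Step~5: for each $p$ the stream contains the single query update $(1,p,j,c_p)$ with $c_p=\Enc(q_j(p),k_p)$, and since the iteration index $j$ is strictly increasing, $j$ is the maximal index for $p$ when the $j$-th answer is produced, so the AND over ciphertexts with index $(p,j)$ is just $c_p$. Correctness of $(\Gen,\Enc,\Dec)$ then gives that the induced query evaluates to $\Dec(c_p,k_p)=q_j(p)$ on $(p,k_p)$ and to $1$ on each $\bot_i$, whence the value of $g$ on the relevant prefix is $g_j=\big(\gamma 2^d+\sum_{p\in\tilde P}q_j(p)\big)/(n+\gamma 2^d)=\big(\gamma 2^d+n\cdot q_j(\tilde P)\big)/(n+\gamma 2^d)$.

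The last step is to invoke accuracy. Because \texttt{AnswerQueries2} turns each adaptively chosen $q_j$ into updates fed to $\AAA$, the stream presented to $\AAA$ is itself adaptively generated from $\AAA$'s own answers, so this is exactly the adversarial setting in which $\AAA$ is assumed $(\alpha,\beta)$-accurate; thus with probability at least $1-\beta$, simultaneously for all $j\le\ell=\frac{m-n}{2^d}$, the output $z$ satisfies $z\in(1\pm\alpha)g_j$, hence $|z-g_j|\le\alpha$ since $g_j\in[0,1]$. A one-line computation gives $g_j-q_j(\tilde P)=\gamma 2^d\,(1-q_j(\tilde P))/(n+\gamma 2^d)\in\big[0,\gamma 2^d/n\big]$, and the triangle inequality then yields $|z-q_j(\tilde P)|\le\alpha+\gamma 2^d/n$ for all $j$ on the same high-probability event, which is the asserted $(\alpha+\gamma 2^d/n,\beta)$-empirical accuracy with respect to $\tilde P$. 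I do not expect a genuine obstacle here: the only care needed is in unwinding the definition of $g$ — in particular tracking that the $\gamma 2^d$ copies of $\bot$ contribute the term $\gamma 2^d$ to both the numerator and the denominator of $g_j$ — and in converting $\AAA$'s multiplicative error bound into the additive one demanded by the ADA-style accuracy notion.
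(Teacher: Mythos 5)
Your proof is correct and takes essentially the same approach as the paper's: you pin down the induced set $S$ and query $q_{\vec{x}}$, observe that $q_{\vec{x}}(p,k_p)=q_j(p)$ for $p\in\tilde P$ via correctness of $\Dec$, compute $g_j=\bigl(\gamma 2^d+n\,q_j(\tilde P)\bigr)/(n+\gamma 2^d)$, and bound $|g_j-q_j(\tilde P)|\le\gamma 2^d/n$ before applying the triangle inequality. The only cosmetic difference is that you derive the closed form $g_j-q_j(\tilde P)=\gamma 2^d(1-q_j(\tilde P))/(n+\gamma 2^d)$ in one line (also revealing the difference is nonnegative), whereas the paper argues via two separate upper-bound inequalities; you are also slightly more explicit about the key idempotence, the fact that the stream is adaptively generated, and the multiplicative-to-additive conversion using $g_j\in[0,1]$, all of which the paper's proof sketch leaves implicit.
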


\begin{proof}[Proof sketch]
Let $q$ denote the query given at some iteration, and let $q_{\vec{x}}$ and $S_{\vec{x}}$ denote the query and the dataset specified by the updates given to algorithm $\AAA$. The claim follows from the fact that, by construction, for every $p\in \tilde{P}$ we have that $q(p)=q_{\vec{x}}(p,k_p)$. Therefore, 
$$
q_{\vec{x}}(S_{\vec{x}})=\frac{1}{\left|S_{\vec{x}}\right|}\left[ \gamma\cdot2^d + \sum_{p\in \tilde{P}} q(p) \right]
=\frac{1}{n+\gamma2^d}\left[ \gamma\cdot2^d + \sum_{p\in \tilde{P}} q(p) \right]
=\frac{ \frac{\gamma\cdot2^d}{n}  }{1+\frac{\gamma\cdot2^d}{n}} +   \frac{1}{n+\gamma2^d} \sum_{p\in \tilde{P}} q(p). 
$$
Therefore, $q_{\vec{x}}(S_{\vec{x}})\leq \frac{\gamma\cdot2^d}{n} + q(\tilde{P})$, and also $q_{\vec{x}}(S_{\vec{x}})\geq  \frac{1}{n+\gamma2^d} \sum_{p\in \tilde{P}} q(p)$ which means that $q(\tilde{P})\leq \frac{n+\gamma2^d}{n}\cdot q_{\vec{x}}(S_{\vec{x}})\leq q_{\vec{x}}(S_{\vec{x}}) + \frac{\gamma2^d}{n}$. So, whenever the answers given by $\AAA$ are $\alpha$-accurate w.r.t.\ $q_{\vec{x}}(S_{\vec{x}})$, they are also $\left(\alpha+\frac{\gamma2^d}{n}\right)$-accurate w.r.t.\ $\tilde{P}$.
\end{proof}

We now show that algorithm \texttt{AnswerQueries2} is transcript-compressible. To that end, for every choice of $\vec{k},r_1,r_2,\vec{r}_{\Enc}$ for the keys $k$, the random bitstrings $r_1,r_2$, and the randomness used by $\Enc$ at its different executions, let us denote by $\texttt{AnswerQueries2}_{\vec{k},r_1,r_2,\vec{r}_{\Enc}}$ algorithm \texttt{AnswerQueries2} after fixing these elements.

\begin{claim}%
If algorithm $\AAA$ uses space at most $w$, then, for every $\vec{k},r_1,r_2$, we have that algorithm $\texttt{AnswerQueries2}_{\vec{k},r_1,r_2,\vec{r}_{\Enc}}$ is transcript-compressible to $w$ bits.
\end{claim}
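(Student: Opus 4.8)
The plan is to mirror the structure of the transcript-compressibility argument for \texttt{AnswerQueries} (Claim~\ref{claim:compress}), adapting it to the fact that the randomness used here is $\vec{k},r_1,r_2,\vec{r}_{\Enc}$ rather than $\vec{\Gamma},\vec{k},r_1,r_2$. Fix any choice of these random strings, so that $\texttt{AnswerQueries2}_{\vec{k},r_1,r_2,\vec{r}_{\Enc}}$ is a deterministic algorithm. Assume, without loss of generality (as in Definition~\ref{def:accuratemechanism} and the remark following it), that the adversary $\mathbb{A}$ generating the queries $q_1,\dots,q_\ell$ is deterministic. The goal is to exhibit, for this deterministic adversary, a set $H_{\mathbb{A}}$ of transcripts of size at most $2^w$ that captures all possible executions over all input databases $P$.

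The key observation is that, once $\vec{k},r_1,r_2,\vec{r}_{\Enc}$ are fixed, the only thing that depends on the input $P$ is the internal state of $\AAA$ at the end of Step~3 (after the $n$ data updates $(0,p,k_p)$ have been fed to $\AAA$). I would argue this as follows. After Step~3, the execution is entirely driven by: (i) the state of $\AAA$, (ii) the fixed keys $\vec{k}$, (iii) the fixed randomness $r_2$ and $\vec{r}_{\Enc}$, and (iv) the queries $q_j$, which — since $\mathbb{A}$ is deterministic — are themselves determined by the previously output answers $z_1,\dots,z_{j-1}$. In particular, in the inner loop of Step~5, the ciphertexts $c_p=\Enc(q_j(p),k_p)$ are computed deterministically from $q_j$, the fixed $k_p$, and the fixed encryption randomness; the updates $(1,p,j,c_p)$ fed to $\AAA$ are therefore deterministic given $\AAA$'s state and the query; and the answer $z$ output by $\AAA$ in Step~5(c) depends only on $\AAA$'s (updated) state. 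Hence the entire post–Step~3 interaction, and therefore the whole transcript $(q_1,z_1,\dots,q_\ell,z_\ell)$, is a deterministic function of the state of $\AAA$ at the end of Step~3. Since $\AAA$ uses space at most $w$, this state lies in a set of size at most $2^w$; taking $H_{\mathbb{A}}$ to be the image of this set under the (deterministic) map ``state at end of Step~3 $\mapsto$ resulting transcript'' gives $|H_{\mathbb{A}}|\le 2^w$, and by construction every execution of $\texttt{AnswerQueries2}_{\vec{k},r_1,r_2,\vec{r}_{\Enc}}(P)$ lands in $H_{\mathbb{A}}$ with probability $1$ (the algorithm is now deterministic, so there is no remaining randomness). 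This is exactly the definition of transcript compressibility to $w$ bits.

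The one subtlety — and the main thing to be careful about, though it is not really an obstacle — is the role of the ``switch'' from $r_1$ to $r_2$, just as in the remark after Claim~\ref{claim:compress}. Without the switch, describing the configuration of the system after Step~3 would require both $\AAA$'s internal memory \emph{and} the read pointer into $r_1$; by switching to a fresh string $r_2$ for all coin tosses after Step~3, the read pointer into $r_2$ starts at $0$ and need not be remembered, so $w$ bits genuinely suffice. I would state this explicitly in the proof (or defer to the earlier remark), and otherwise the argument is a verbatim adaptation of the proof of Claim~\ref{claim:compress}.
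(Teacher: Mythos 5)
Your proposal is correct and follows exactly the same approach as the paper's one-line proof sketch: fix the randomness, assume a deterministic adversary WLOG, and observe that the post--Step~3 interaction (and hence the whole transcript) is a deterministic function of $\AAA$'s $w$-bit state at the end of Step~3, yielding at most $2^w$ possible transcripts. Your additional remarks about the role of the $r_1 \to r_2$ switch are accurate and match the paper's own remark following the analogous Claim~\ref{claim:compress}.
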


\begin{proof}[Proof sketch]
Assuming that the adversary who generates the queries $q$ is deterministic (which is without loss of generality) we get that the entire transcript is determined by the state of algorithm $\AAA$ at the end of Step 3.
\end{proof}

Similarly to our arguments from Section~\ref{sec:negative}, since algorithm \texttt{AnswerQueries2} is both empirically-accurate and transcript-compressible, we get that it is also statistically-accurate. 
Since we only argued empirical-accuracy when treating the input multiset as a set, we will only argue for statistical-accuracy w.r.t.\ the uniform distribution, where we have that the difference between a random set and a random multiset is small. 
Formally,

\begin{lemma}%
Suppose that $\AAA$ is $(\alpha,\beta)$-accurate for the SADA2 problem for streams of length $m$ using memory $w$. Then for every $\beta'>0$, algorithm \texttt{AnswerQueries2} is $\left(\tilde{\alpha},\tilde{\beta}\right)$-statistically-accurate for $\ell=\frac{m-n}{2^d}$ queries w.r.t.\ the uniform distribution over $\{0,1\}^d$, where
$$
\tilde{\alpha}=
O\left( \alpha+\frac{\gamma\cdot2^d}{n} + \frac{n}{2^d}+  \sqrt{\frac{w+\ln(\frac{\ell}{\beta'})}{n}}\right)
\qquad\text{and}\qquad
\tilde{\beta}=O\left( \beta+\beta'+\exp\left(-\frac{n^2}{3\cdot 2^d}\right)  \right).
$$
\end{lemma}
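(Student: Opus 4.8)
The strategy mirrors the chain of reasoning developed for \texttt{AnswerQueries} in Section~\ref{sec:negative}, but now carried out for \texttt{AnswerQueries2} and with the extra bookkeeping needed to pass between a random multiset and a random set. First I would combine the two preceding claims: the empirical-accuracy claim (stating that \texttt{AnswerQueries2}$(P)$ is $(\alpha+\frac{\gamma2^d}{n},\beta)$-empirically-accurate w.r.t.\ the set $\tilde P$) together with the transcript-compression claim (stating that, for every fixing of $\vec k,r_1,r_2,\vec r_{\Enc}$, the algorithm is transcript-compressible to $w$ bits). Feeding the compression bound into Theorem~\ref{thm:transcriptCompression} gives, for every fixed $\vec k,r_1,r_2,\vec r_{\Enc}$, a generalization statement of the form $\Pr[\exists i: |q_i(\tilde P)-q_i(\DDD)|>\alpha']\le\beta'$ with $\alpha'=O(\sqrt{(w+\ln(\ell/\beta'))/n})$, where here $\DDD$ is the uniform distribution over $\{0,1\}^d$ and $\tilde P$ is the set of points of the sampled multiset $P\sim\DDD^n$. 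Since this holds for every fixing of the auxiliary randomness, it holds after averaging over it as well. A triangle inequality and union bound then combine the $(\alpha+\frac{\gamma2^d}{n})$ empirical error with the $\alpha'$ generalization error to yield statistical accuracy for $\tilde P$, with failure probability $\beta+\beta'$; this is exactly the argument in the proof of Lemma~\ref{lem:streamingAccuracy}, reused almost verbatim.

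The one genuinely new ingredient — and the step I expect to be the main obstacle — is bridging from accuracy w.r.t.\ the \emph{set} $\tilde P$ to accuracy w.r.t.\ the \emph{distribution} $\DDD$ itself, given that the ADA accuracy game draws a multiset $P\sim\DDD^n$ whereas the empirical-accuracy claim is phrased for the set $\tilde P$ of distinct points. Two quantities must be controlled. The first is the gap between $q(\tilde P)$ and $q(\DDD)$ over the \emph{distinct} sample $\tilde P$, whose size $|\tilde P|=:\tilde n$ is itself random; here a birthday-type estimate shows that when $n\ll 2^d$ the number of collisions is small, so $\tilde n$ concentrates around $n$, and more precisely $\E[n-\tilde n]=O(n^2/2^d)$, with exponentially good concentration $\exp(-\Omega(n^2/2^d))$ — this is the source of both the additive $n/2^d$ term in $\tilde\alpha$ and the $\exp(-n^2/(3\cdot2^d))$ term in $\tilde\beta$. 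The second subtlety is that Theorem~\ref{thm:transcriptCompression} is stated for a sample of a \emph{fixed} size $n$; one can either apply it with $\tilde n$ in place of $n$ on the event that $\tilde n\ge n/2$ (losing only constants in $\alpha'$), or, more cleanly, observe that $|q(\tilde P)-q(\DDD)|$ is small whenever $|q(P)-q(\DDD)|$ is small (standard i.i.d.\ generalization) and $|q(\tilde P)-q(P)|$ is small (which follows from $n-\tilde n$ being small, since averaging a $\{0,1\}$-function over $P$ versus over $\tilde P$ differs by at most the fraction of duplicated entries). I would take the latter route: it keeps the fixed-$n$ hypothesis of Theorem~\ref{thm:transcriptCompression} intact and isolates the multiset-versus-set loss into a single clean term.

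Putting the pieces together: on the good event (probability $\ge 1-\beta-\beta'-\exp(-\Omega(n^2/2^d))$) we have, for every $i$, that $z_i$ is within $\alpha+\frac{\gamma2^d}{n}$ of $q_i(\tilde P)$, that $q_i(\tilde P)$ is within $O(n/2^d)$ of $q_i(P)$, and that $q_i(P)$ is within $\alpha'=O(\sqrt{(w+\ln(\ell/\beta'))/n})$ of $q_i(\DDD)$; summing these three bounds by the triangle inequality gives $|z_i-q_i(\DDD)|\le\tilde\alpha$ with $\tilde\alpha=O(\alpha+\frac{\gamma2^d}{n}+\frac{n}{2^d}+\sqrt{(w+\ln(\ell/\beta'))/n})$, which is the claimed bound, and the failure probability is $\tilde\beta=O(\beta+\beta'+\exp(-n^2/(3\cdot2^d)))$ as stated. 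The remaining routine points are: reducing to deterministic query-generating adversaries (as remarked after Definition~\ref{def:accuratemechanism}); checking that the auxiliary randomness $\vec r_{\Enc}$ of the encryption does not interfere with transcript-compressibility (it is fixed before applying Theorem~\ref{thm:transcriptCompression}, exactly as $\vec\Gamma$ was in Section~\ref{sec:negative}); and noting that $\ell=\frac{m-n}{2^d}$ here because each query consumes a bulk of $2^d$ updates rather than $(a+1)\cdot2^d$.
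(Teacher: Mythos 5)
Your proposal is essentially the paper's own proof: the paper's proof sketch says explicitly that it is analogous to Lemma~\ref{lem:streamingAccuracy} (combine the empirical-accuracy claim with transcript-compression via Theorem~\ref{thm:transcriptCompression}, then triangle inequality and union bound), and then handles the multiset-vs-set gap by a Chernoff bound showing that $P$ and $\tilde P$ differ by an $O(n/2^d)$-fraction of points with probability $1-\exp(-n^2/(3\cdot 2^d))$, so that $|q(P)-q(\tilde P)|=O(n/2^d)$. Your three-term triangle inequality ($z_i \to q_i(\tilde P) \to q_i(P) \to q_i(\DDD)$), your choice to apply Theorem~\ref{thm:transcriptCompression} to the fixed-size multiset $P$ rather than the random-size set $\tilde P$, and your accounting of the extra $n/2^d$ and $\exp(-\Omega(n^2/2^d))$ terms all match what the paper does, and you additionally spell out the routine points (deterministic adversaries, fixing the auxiliary randomness including $\vec r_{\Enc}$) that the paper leaves implicit.
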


\begin{proof}[Proof sketch]
The proof is analogous to the proof of Lemma~\ref{lem:streamingAccuracy}, with the following addition. 
Let $P$ be a multiset containing $n$ i.i.d.\ uniform samples from $\{0,1\}^d$, and let $\tilde{P}$ be the set containing every element of $P$. As we are considering the uniform distribution on $\{0,1\}^d$, then by the Chernoff bound, with probability at least $1-\exp(-\frac{n^2}{3\cdot 2^d})$, it holds that the set $\tilde{P}$ and and the multiset $P$ differ by at most $\frac{n^2}{2\cdot2^d}$ points, i.e., by at most an $\frac{n}{2\cdot2^d}$-fraction of the points. In that case, for every query $q$ we have that $|q(P)-q(\tilde{P})|\leq\frac{n}{2\cdot2^d}$.
\end{proof}

So algorithm \texttt{AnswerQueries2} is statistically-accurate. 
To obtain a contradiction, we modify the algorithm such that it becomes natural. 
Consider algorithm \texttt{AnswerQueries2Natural}. 
As before, the modifications are marked in red.

\begin{algorithm*}[t!]
\caption{\bf \texttt{AnswerQueries2Natural}}\label{alg:AnswerQueries2Natural}

{\bf Input:} A database $P$ containing $n$ elements from $\{0,1\}^d$.

{\bf Setting:} On every time step we get a query $q:\{0,1\}^d\rightarrow\{0,1\}$.

{\bf Algorithm used:} An adversarially robust streaming algorithm $\AAA$ for the $(d,m,\kappa,\gamma)$-SADA2 problem with $(\alpha,\beta)$-accuracy for streams of length $m$. We abstract the coin tosses of $\AAA$ using {\em two} random strings, $r_1$ and $r_2$, of possibly unbounded length. Initially, we execute $\AAA$ with access to $r_1$, meaning that every time it tosses a coin it gets the next bit in $r_1$. At some point, we switch the random string to $r_2$, and henceforth $\AAA$ gets its coin tosses from $r_2$.

{\bf Algorithm used:} Encryption scheme $(\Gen,\Enc,\Dec)$, as in the definition of the SADA2 problem.

\begin{enumerate}[leftmargin=15pt,rightmargin=10pt,itemsep=1pt,topsep=3pt]

\item For every $p\in\{0,1\}^d$ sample $k_p\leftarrow\Gen(1^{\kappa})$ independently.

\item Sample $r_1\in\{0,1\}^{\nu}$ uniformly, and instantiate algorithm $\AAA$ with read-once access to bits of $r_1$. Here $\nu$ bounds the number of coin flips made by $\AAA$.

\item For every $p\in P$, feed the update $(0,p,k_p)$ to $\AAA$.

\item Sample $r_2\in\{0,1\}^{\nu}$ uniformly, and switch the read-once access of $\AAA$ to $r_2$. (The switch from $r_1$ to $r_2$ is done for convenience, so that after Step~3 we do not need to ``remember'' the position for the next coin from $r_1$.)

\item For $j=1$ to $\ell \triangleq\frac{m-n}{2^d}$ do

\begin{enumerate}
	\item Obtain the next query $q_j:\{0,1\}^d\rightarrow\{0,1\}$.
	
	\item For every $p\in\{0,1\}^d$ do
	
	\begin{enumerate}
		\item \red{If $p\in P$ then let $c_p=\Enc(q_j(p) , k_p)$. Otherwise let $c_p=\Enc(0 , k_p)$.}
		\item Feed the update $(1,p,j,c_p)$ to $\AAA$.
	\end{enumerate}
	
	\item Obtain an answer $z$ from $\AAA$.
	\item Output $z$.
	
\end{enumerate}

\end{enumerate}
\end{algorithm*}

\begin{observation}\label{obs:natural2}
Algorithm \texttt{AnswerQueries2Natural} is natural.
\end{observation}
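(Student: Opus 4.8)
The plan is to check the defining property of a natural algorithm directly against the control flow of \texttt{AnswerQueries2Natural}, in the same spirit as the proof of Lemma~\ref{lem:natural}. Fix an input database $P$ and fix any reachable internal state of \texttt{AnswerQueries2Natural}; this state encapsulates the sampled keys $\{k_p\}_{p\in\{0,1\}^d}$, the random strings $r_1,r_2$, the current loop index $j$, and the current memory contents of $\AAA$. Fix two queries $q,q'$ with $q(p)=q'(p)$ for every $p\in P$. I want to show that the answer $z$ produced when the next query is $q$ and the answer $z'$ produced when the next query is $q'$ are identically distributed (and actually equal, if $\AAA$ happens to be deterministic), where the probability is over the coins still to be consumed.

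First I would isolate the only place the current query enters the computation: Step~\textup{(a)} of the inner loop, where for $p\in P$ we set $c_p=\Enc(q_j(p),k_p)$ and for $p\notin P$ we set $c_p=\Enc(0,k_p)$. For $p\notin P$ this value does not depend on $q_j$ at all, and for $p\in P$ it depends on $q_j$ only through the single bit $q_j(p)$. Hence the sequence of updates $\bigl((1,p,j,c_p)\bigr)_{p\in\{0,1\}^d}$ fed to $\AAA$ during this iteration is a randomized function (over the fresh coins of the $\Enc$ calls) of the fixed state and of the restriction $q_j|_P$ alone. Since $q|_P=q'|_P$, feeding $q$ and feeding $q'$ induce the \emph{same} distribution on the stream of updates passed to $\AAA$. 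Now I would push this through $\AAA$: once those updates are drawn, the value $z$ returned in Step~\textup{(c)} is the output of $\AAA$ run on exactly those updates using $\AAA$'s own coins (drawn from $r_2$, part of the fixed state), so it is a fixed randomized function of the updates and the state. Composing the two steps yields that $z$ and $z'$ are identically distributed; if $\AAA$ is deterministic the same composition gives equality of the actual answers. Since the state was arbitrary, this verifies the stateful clause of the definition as well, completing the proof.

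There is essentially no obstacle here — the only thing to be careful about is bookkeeping: one must make sure ``state'' is read as including everything the algorithm carries across query boundaries (the keys, $r_1,r_2$, the index $j$, and $\AAA$'s memory), and that the only fresh randomness consumed while answering a single query is the randomness of the $\Enc$ invocations, which is used identically for every $p$ irrespective of the query. The point worth flagging is that the red modification (encrypting $0$ for $p\notin P$ in place of $q_j(p)$) is exactly what makes this go through: it ensures that nothing about $q$ outside $P$ ever reaches $\AAA$. Note that this step needs only the clean combinatorial fact that the transcript with $\AAA$ depends on the query solely through its values on $P$; the semantic security of the encryption scheme is not invoked in this observation — it is used separately, to argue that replacing $\Enc(q_j(p),k_p)$ by $\Enc(0,k_p)$ on points outside $P$ changes the behavior of $\AAA$ only negligibly.
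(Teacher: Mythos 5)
Your argument is correct and follows the same approach as the paper's proof sketch: the paper simply notes that the values of the queries outside $P$ are ignored (replaced by encryptions of zero), which is precisely the fact you isolate and then push through $\AAA$. Your writeup is a more careful expansion of that one-line observation — identifying the fixed state, the only place the query enters the computation, and the fresh $\Enc$ randomness — but it is the same argument.
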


\begin{proof}[Proof sketch]
This follows from the fact that the value of the given queries outside of the input sample $P$ are ignored, and are replaced with (encryptions of) zero.  
\end{proof}

The following lemma follows from the assumed security of the encryption scheme.

\begin{lemma}\label{lem:computational}
Suppose that $(\Gen,\Enc,\Dec)$ is semantically secure private-key encryption scheme with key length $\kappa=\kappa(m)$ against adversaries with time $\poly(m)$. 
Fix $\alpha\in(0,1)$. 
Let $\mathbb{A}$ be a data analyst with running time $\poly(m)$. For a mechanism $\cM$ that answers queries, consider the interaction between $\cM$ and $\mathbb{A}$ , and let $E$ denote the event that $\cM$ failed to be $\alpha$-statistically accurate at some point during the interaction. Then, for an input database $P$ sampled uniformly from $\{0,1\}^d$ it holds that
$$
\left|
\Pr_{P,\mathbb{A},\texttt{AnswerQueries2}(P)}[E]
-
\Pr_{P,\mathbb{A},\texttt{AnswerQueries2Natural}(P)}[E]
\right|\leq\negl(\kappa).
$$
\end{lemma}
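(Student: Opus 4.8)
The plan is to prove this lemma by a standard hybrid argument reducing to the semantic security of the encryption scheme. The only difference between \texttt{AnswerQueries2} and \texttt{AnswerQueries2Natural} is in how the ciphertexts $c_p$ are computed in the inner loop: for $p\notin P$, the former encrypts $q_j(p)$ while the latter encrypts $0$. For $p\in P$ the two algorithms are identical. Since there are at most $2^d$ points and $\ell$ rounds, the total number of encryptions that differ between the two experiments is at most $\ell\cdot 2^d \le m$. I would introduce a sequence of hybrid experiments $H_0,H_1,\dots,H_N$ where $N\le m$, with $H_0$ being the experiment with \texttt{AnswerQueries2} and $H_N$ the experiment with \texttt{AnswerQueries2Natural}, and where consecutive hybrids differ in a single encryption: in $H_{t}$ the first $t$ ``offending'' encryptions (those indexed by some $(p,j)$ with $p\notin P$) are replaced by encryptions of $0$, and the rest are honest encryptions of $q_j(p)$.

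The key step is to show that for each $t$, $\bigl|\Pr_{H_t}[E] - \Pr_{H_{t+1}}[E]\bigr| \le \negl(\kappa)$. Here I would build a $\poly(m)$-time distinguisher $\BBB$ against the encryption scheme that internally simulates hybrid $H_t$ (equivalently $H_{t+1}$): $\BBB$ samples $P$ uniformly, samples all keys $k_p$ except the one key $k_{p^*}$ associated with the point $p^*$ whose $(p^*,j^*)$-encryption is the one being switched, runs the adversary $\mathbb{A}$ and the streaming algorithm $\AAA$ step by step, and for every encryption under $k_{p^*}$ it submits the appropriate message to its encryption oracle. For the single ``challenge'' encryption it submits the pair $(q_{j^*}(p^*), 0)$ to the semantic-security challenger; depending on the challenge bit, $\BBB$ is running either $H_t$ or $H_{t+1}$. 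Finally $\BBB$ outputs $1$ iff the event $E$ occurred, which $\BBB$ can detect in polynomial time since $\mathbb{A}$ is $\poly(m)$-time, $\AAA$ is the given streaming algorithm, the queries $q_j$ are computed from past answers, and the true values $q_j(\DDD)$ (for $\DDD$ uniform) are efficiently computable to sufficient accuracy (or we can absorb $q_j(\DDD)$ into the polynomial bookkeeping). Hence $\bigl|\Pr_{H_t}[E]-\Pr_{H_{t+1}}[E]\bigr|$ equals $\BBB$'s distinguishing advantage, which is $\negl(\kappa)$. A subtlety is that semantic security as usually stated bounds advantage for a single encryption under a freshly generated key; here multiple encryptions are made under $k_{p^*}$, so I would use the (standard, multi-message) variant of semantic security, which follows from the single-message version by a further hybrid over the messages encrypted under $k_{p^*}$ — but since there are only $\poly(m)$ such messages this remains negligible. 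Then summing over the $N\le m$ hybrids via the triangle inequality gives $\bigl|\Pr_{H_0}[E]-\Pr_{H_N}[E]\bigr|\le m\cdot\negl(\kappa) = \negl(\kappa)$, absorbing the polynomial factor into the negligible function (legitimate since $\kappa=\kappa(m)$ is chosen large enough).

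I expect the main obstacle to be the bookkeeping needed to make $\BBB$ genuinely run in time $\poly(m)$ while correctly embedding the challenge: one must be careful that $\BBB$ never needs $k_{p^*}$ itself (only oracle access), that the switch of random tapes $r_1 \to r_2$ and the read-once access discipline of $\AAA$ are faithfully simulated, and that the number of queries $\ell = (m-n)/2^d$ together with the domain size $2^d$ and the per-query encryption work stays polynomial in $m$ (which holds under the parameter regime used elsewhere in the paper). The cryptographic content is routine; the care is entirely in the reduction's internal simulation and in invoking the multi-message form of semantic security for the reused key $k_{p^*}$.
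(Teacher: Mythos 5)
Your proposal is correct, but it takes a genuinely different route from the paper's. The paper states its security definition directly in a multi-key, ``real-or-zero'' oracle form: the scheme is $m$-secure if no $\poly(m)$-time adversary with oracle access to $\oracle_b(k_1,\dots,k_N,\cdot)$ (where $\oracle_1$ returns $\Enc(M,k_i)$ and $\oracle_0$ returns $\Enc(0,k_i)$) can distinguish $b=0$ from $b=1$. Because that definition matches exactly what changes between \texttt{AnswerQueries2} and \texttt{AnswerQueries2Natural}, the paper's proof is a one-shot reduction: it builds a single adversary $\BBB$ (Algorithm~\ref{alg:B}) that samples $P$, runs $\AAA$ and $\mathbb{A}$, routes all encryptions for $p\notin P$ through the oracle, and outputs $1$ iff $E$ occurs; then $\BBB^{\oracle_1}$ simulates \texttt{AnswerQueries2}, $\BBB^{\oracle_0}$ simulates \texttt{AnswerQueries2Natural}, and the advantage is exactly the quantity to be bounded. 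You instead perform an explicit fine-grained hybrid over the $\le m$ ``offending'' encryptions, reducing each step to single-key (essentially single-message) IND-CPA/semantic security, with a further sub-hybrid to handle key reuse under $k_{p^*}$. In effect you re-derive by hand the equivalence between the paper's multi-key real-or-zero definition and the weaker single-message one. Both arguments are sound; yours buys a self-contained reduction from a weaker primitive at the cost of more bookkeeping (ordering the hybrids, embedding the challenge correctly, invoking multi-message security for the reused key), whereas the paper buys brevity by baking the hybrids into the security definition it assumes. One small point worth noting in your writeup: the distinguisher's ability to evaluate event $E$ in $\poly(m)$ time relies on $q_j(\DDD)=2^{-d}\sum_{p}q_j(p)$ being computable when $2^d=\poly(m)$, which holds in the parameter regime $d=\Theta(\log m)$ the paper ultimately uses; you flag this only implicitly.
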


The proof of Lemma~\ref{lem:computational} is straightforward from the definition of security. We give here the details for completeness. To that end, let us recall the formal definition of security of an encryption scheme. Consider a pair of oracles $\oracle_0$ and $\oracle_1$, where $\oracle_1(k_1,\dots,k_N,\cdot)$ takes as input an index of a key $i\in[N]$ and a message $M$ and returns $\Enc(M,k_i)$, and where $\oracle_0(k_1,\dots,k_N,\cdot)$ takes the same input
but returns $\Enc(0,k_i)$. 
An encryption scheme $(\Gen,\Enc,\Dec)$ is {\em secure} if no computationally efficient adversary can tell whether it is interacting with $\oracle_0$ or with $\oracle_1$. Formally,

\begin{definition}
Let $m:\R\rightarrow\R$ be a %
function. 
An encryption scheme $(\Gen,\Enc,\Dec)$ is $m$-\emph{secure} if for every $N = \poly(m(\kappa))$, and every $\poly(m(\kappa))$-time adversary $\BBB$, the following holds.
\begin{equation*}
\left| 
\Pr_{\substack{k_1,\dots,k_N \\ \BBB,\Enc }}\left[  \BBB^{\oracle_0(k_1,\dots,k_N,\cdot)}=1  \right] 
- 
\Pr_{\substack{k_1,\dots,k_N \\ \BBB,\Enc }}\left[  \BBB^{\oracle_1(k_1,\dots,k_N,\cdot)}=1  \right] 
\right| = \negl(\kappa),
\end{equation*}
where the probabilities are over sampling $k_1,\dots,k_N\leftarrow\Gen(1^{\kappa})$ and over the randomness of $\BBB$ and $\Enc$.
\end{definition}

\begin{remark}
When $m$ is the identity function we simply say that $(\Gen,\Enc,\Dec)$ is {\em secure}. Note that in this case, security holds against all adversaries with runtime polynomial in the security parameter $\kappa$. We will further assume the existence of a {\em sub-exponentially secure} encryption scheme. By that we mean that there exist a constant $\tau>0$ such that $(\Gen,\Enc,\Dec)$ is $m$-secure for $m(\kappa)=2^{\kappa^\tau}$. That is, we assume the existence of an encryption scheme in which security holds agains all adversaries with runtime polynomial in $2^{\kappa^\tau}$.
\end{remark}

To prove Lemma~\ref{lem:computational} we construct an adversary $\BBB$ for the scheme $(\Gen,\Enc,\Dec)$ such that its advantage in breaking the the security of $(\Gen,\Enc,\Dec)$ is exactly the difference in the probability of event $E$ between the execution with \texttt{AnswerQueries2} or with \texttt{AnswerQueries2Natural}. This implies that the difference between these two probabilities is negligible.

\begin{algorithm*}[t!]
\caption{\bf An adversary ${\boldsymbol{\BBB}}$ for the encryption scheme}\label{alg:B}

{\bf Algorithm used:} An adversarially robust streaming algorithm $\AAA$ for the $(d,m,\kappa,\gamma)$-SADA2 problem with $(\alpha,\beta)$-accuracy for streams of length $m$. 

{\bf Algorithm used:} A data analyst $\mathbb{A}$ that outputs queries and obtains answers. 

{\bf Algorithm used:} Encryption scheme $(\Gen,\Enc,\Dec)$.

{\bf Oracle access:} $\oracle_b(k_1,\dots,k_N,\cdot)$ where $b\in\{0,1\}$ and where $N=2^d$ and $k_1,\dots,k_N\leftarrow\Gen(1^{\kappa})$. 

\begin{enumerate}[leftmargin=15pt,rightmargin=10pt,itemsep=1pt,topsep=3pt]

\item Let $P$ be a multiset containing $n$ uniform samples from $\{0,1\}^d$.

\item For every $p\in P$ sample $\bar{k}_p\leftarrow\Gen(1^{\kappa})$ independently.

\item Instantiate algorithm $\AAA$.

\item For every $p\in P$, feed the update $(0,p,\bar{k}_p)$ to $\AAA$.

\item Instantiate the data analyst $\mathbb{A}$.

\item For $j=1$ to $\ell \triangleq\frac{m-n}{2^d}$ do

\begin{enumerate}
	\item Obtain the next query $q_j:\{0,1\}^d\rightarrow\{0,1\}$ from the data analyst $\mathbb{A}$.
	
	\item For every $p\in\{0,1\}^d$ do
	
	\begin{enumerate}
		\item If $p\in P$ then let $c_p=\Enc(q_j(p) , \bar{k}_p)$. Otherwise let $c_p\leftarrow \oracle_b(p,q_j(p))$.
		\item Feed the update $(1,p,j,c_p)$ to $\AAA$.
	\end{enumerate}
	
	\item Obtain an answer $z$ from $\AAA$, and give $z$ to $\mathbb{A}$.
\end{enumerate}

\item Output 1 if and only if event $E$ occurs.
\end{enumerate}
\end{algorithm*}

\begin{proof}[Proof of Lemma~\ref{lem:computational}]
Let $\mathbb{A}$ be a data analyst with running time $\poly(m)$, and consider algorithm $\BBB$. First observe that if $\AAA$ and $\mathbb{A}$ are computationally efficient (run in time $\poly(m)$) then so is algorithm $\BBB$.

Now observe that when the oracle is $\oracle_1$ and when $k_1,\dots,k_N$ are chosen randomly from $\Gen(1^{\kappa})$ then  $\BBB^{\oracle_1(k_1,\dots,k_N,\cdot)}$ simulates the interaction between $\mathbb{A}$ and \texttt{AnswerQueries2} on a uniformly sampled database $P$. Similarly, when   the oracle is $\oracle_0$ and when $k_1,\dots,k_N$ are chosen randomly from $\Gen(1^{\kappa})$ then $\BBB^{\oracle_0(k_1,\dots,k_N,\cdot)}$ simulates the interaction between $\mathbb{A}$ and \texttt{AnswerQueries2Natural} on a uniformly sampled database $P$. Thus,
\begin{align*}
&\left|
\Pr_{P,\mathbb{A},\texttt{AnswerQueries2}(P)}[E]
-
\Pr_{P,\mathbb{A},\texttt{AnswerQueries2Natural}(P)}[E]
\right|\\[1em]
&=\left| 
\Pr_{\substack{k_1,\dots,k_N \\ \BBB,\Enc }}\left[  \BBB^{\oracle_1(k_1,\dots,k_N,\cdot)}=1  \right] 
- 
\Pr_{\substack{k_1,\dots,k_N \\ \BBB,\Enc }}\left[  \BBB^{\oracle_0(k_1,\dots,k_N,\cdot)}=1  \right] 
\right| = \negl(\kappa).
\end{align*}
\end{proof}

So, algorithm \texttt{AnswerQueries2Natural} is natural, and when $\AAA$ and $\mathbb{A}$ are computationally efficient, then the probability that \texttt{AnswerQueries2Natural} fails to be statistically-accurate is similar to the probability that \texttt{AnswerQueries2} fails, which is small. We therefore get the following lemma.

\begin{lemma}\label{lem:AQ2Nfinal}
Algorithm \texttt{AnswerQueries2Natural} is natural. In addition, 
if $(\Gen,\Enc,\Dec)$ is an $m$-secure private-key encryption scheme with key length $\kappa=\kappa(m)$, and if $\AAA$ is an adversarially robust streaming algorithm for the $(d,m,\kappa,\gamma)$-SADA2 problem with space $w$ and runtime $\poly(m)$, then \texttt{AnswerQueries2Natural} is
$\left(\tilde{\alpha},\tilde{\beta}\right)$-statistically-accurate for $\ell=\frac{m-n}{2^d}$ queries w.r.t.\ the uniform distribution over $\{0,1\}^d$, and w.r.t.\ a data analyst $\mathbb{A}$ with running time $\poly(m)$, where
$$
\tilde{\alpha}=
O\left( \alpha+\frac{\gamma\cdot2^d}{n} + \frac{n}{2^d}+  \sqrt{\frac{w+\ln(\frac{\ell}{\beta'})}{n}}\right)
\qquad\text{and}\qquad
\tilde{\beta}=O\left( \beta+\beta'+\exp\left(-\frac{n^2}{3\cdot 2^d}\right) +\negl(\kappa) \right).
$$
\end{lemma}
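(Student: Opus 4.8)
The plan is to obtain the statement by combining three facts already established above: Observation~\ref{obs:natural2} (naturalness), the statistical-accuracy lemma for \texttt{AnswerQueries2}, and Lemma~\ref{lem:computational} (the change to the ``Natural'' variant costs only $\negl(\kappa)$). The naturalness half of the lemma requires nothing new: it is exactly Observation~\ref{obs:natural2}, since in \texttt{AnswerQueries2Natural} the values of a query outside the input sample $P$ are overwritten by encryptions of $0$ and therefore cannot influence the answers of $\AAA$.

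For the accuracy half, I would fix a $\poly(m)$-time data analyst $\mathbb{A}$ and let $E$ denote the event that, during the interaction, some answer $z_i$ violates $|z_i-q_i(\DDD)|\le\tilde\alpha$ for $\DDD$ the uniform distribution over $\{0,1\}^d$, with $\tilde\alpha$ as in the statement. The statistical-accuracy lemma for \texttt{AnswerQueries2} gives, for this very choice of $\tilde\alpha$, that $\Pr_{P,\mathbb{A},\texttt{AnswerQueries2}(P)}[E]\le O\!\left(\beta+\beta'+\exp(-n^2/(3\cdot 2^d))\right)$. Lemma~\ref{lem:computational} then yields
$\bigl|\Pr_{P,\mathbb{A},\texttt{AnswerQueries2}(P)}[E]-\Pr_{P,\mathbb{A},\texttt{AnswerQueries2Natural}(P)}[E]\bigr|\le\negl(\kappa)$,
so the probability of $E$ under \texttt{AnswerQueries2Natural} is at most $O\!\left(\beta+\beta'+\exp(-n^2/(3\cdot 2^d))+\negl(\kappa)\right)=\tilde\beta$. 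Taking the complement gives the claimed $(\tilde\alpha,\tilde\beta)$-statistical accuracy against $\poly(m)$-time analysts.

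The one place that needs care — and the closest thing to an obstacle — is checking that the polynomial-time hypothesis of Lemma~\ref{lem:computational} genuinely holds here, i.e.\ that the whole experiment together with the decision of whether $E$ occurred is computable in $\poly(m)$ time. The streaming algorithm $\AAA$ runs in time $\poly(m)$ by assumption, the analyst $\mathbb{A}$ in time $\poly(m)$, and the encryption bookkeeping of \texttt{AnswerQueries2}/\texttt{AnswerQueries2Natural} costs $\poly(\kappa)=\poly(m)$ per inner step over $\ell\cdot 2^d\le m$ inner steps. The subtle point is deciding $E$: this requires comparing each $z_i$ to $q_i(\DDD)=\frac{1}{2^d}\sum_{p\in\{0,1\}^d}q_i(p)$, which is computable in time $\poly(2^d)\cdot\poly(m)=\poly(m)$ precisely because $2^d\le m$ and each $q_i$ is evaluable in $\poly(m)$ time. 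Hence the adversary $\BBB$ in Lemma~\ref{lem:computational} is $\poly(m)$-time, the lemma's hypotheses are met, and the two bounds combine as above to finish the proof.
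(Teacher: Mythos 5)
Your proof is correct and takes exactly the route the paper intends: the lemma is derived by combining Observation~\ref{obs:natural2}, the statistical-accuracy lemma for \texttt{AnswerQueries2}, and Lemma~\ref{lem:computational} via a triangle-inequality on the failure probability. The paper states the result without an explicit proof, and your argument (including the useful check that deciding the failure event $E$ is $\poly(m)$-time because $2^d\le m$, so the adversary $\BBB$ of Lemma~\ref{lem:computational} is indeed efficient) fills in precisely what the paper leaves implicit.
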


We now restate Theorem~\ref{thm:adaNegative}, in which we simplified the results of Steinke and Ullman. In this section we use the stronger formulation of their results, given as follows.

\begin{theorem}[\cite{SU15}]\label{thm:adaNegative2}
There exists a constant $c>0$ such that no natural algorithm is $(c,c)$-statistically-accurate for $O(n^2)$ adaptively chosen queries given $n$ samples over a domain of size $\Omega(n)$. Furthermore, this holds even when assuming that the data analyst is computationally efficient (runs in time $\poly(n^2)$) and even when the underlying distribution is the uniform distribution.
\end{theorem}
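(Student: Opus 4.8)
The plan is to follow Steinke and Ullman~\cite{SU15} and prove this by a reduction from \emph{interactive fingerprinting codes}. An interactive fingerprinting code over a user set $[T]$ is a protocol between a generator, a pirate, and a tracer: over $\ell$ rounds the generator reveals a column $c^{(t)}\in\{0,1\}^T$ (one bit per user), possibly depending on the pirate's earlier outputs; the pirate, seeing only the bits of a coalition $C\subseteq[T]$, outputs a bit $b^{(t)}$ subject to the \emph{marking condition} (whenever the coalition is unanimous in round $t$, $b^{(t)}$ equals that common bit); and the tracer, given all columns and all the $b^{(t)}$'s, accuses a set of users, every one of which (up to small error) lies in $C$ and which, with constant probability, has size $\Omega(|C|)$. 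The first step is to invoke the main construction of~\cite{SU15}: such codes exist with coalition size $\Theta(n)$, length $\ell=O(n^2)$, constant error, and an efficiently computable generator and tracer. This near-quadratic length is precisely what produces the $O(n^2)$ bound in the statement, and it is the \emph{interactive} nature of the code that lets a single run trace a constant fraction of the coalition rather than a single member (the latter already costs $\Omega(n^2)$ rounds for a Tardos-style code).

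For the reduction, take the domain to be $[T]$ with $T$ a large enough constant multiple of $n$, let $\DDD$ be uniform on $[T]$, and note that a sample $P\sim\DDD^n$ is (as a set) a random subset $C\subseteq[T]$ of size $\Theta(n)$, with collisions absorbed by the slack in $T$. The key preliminary observation is that a \emph{natural} mechanism that is $(c,c)$-statistically accurate is automatically $O(c)$-\emph{empirically} accurate with respect to $C$: any query $q$ agrees on the sample with a query $q'$ whose value on $\DDD$ equals $q(C)$ --- obtained by padding $q$ with non-sample domain elements, which is possible since $T>|C|$ --- and by naturalness the mechanism answers $q$ and $q'$ identically, so its answer to $q$ is within $O(c)$ of $q(C)$. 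Now the data analyst runs the generator: the column $c^{(t)}$ becomes the query $q_t(x)=c^{(t)}_x$, the mechanism's answer $z_t$ is rounded to $b^{(t)}:=\1[z_t>1/2]$, and these $b^{(t)}$'s are fed back to the generator and ultimately to the tracer. Empirical accuracy forces the $b^{(t)}$'s to obey the marking condition (when $q_t$ is unanimous on the sample, $z_t\approx q_t(C)$ is the common bit), so the tracer's guarantee applies: in a single run of the $O(n^2)$-round code it recovers a constant fraction $P'\subseteq P$ of the sample, with soundness ensuring every recovered point really is a sample point.

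The contradiction is extracted with one further query $q^{*}=\1[\,\cdot\in P'\,]$. On the sample $q^{*}$ agrees with a query whose value on $\DDD$ is $|P'|/|P|$, so by naturalness and statistical accuracy the mechanism answers within $c$ of $|P'|/|P|$, a fixed constant; but $q^{*}(\DDD)=|P'|/T\approx(n/T)\cdot|P'|/|P|$ differs from this by a constant bounded away from $0$ because $T$ is a constant factor larger than $|P|$. For $c$ small enough this violates $(c,c)$-statistical accuracy, giving the desired contradiction. The two strengthenings in the statement come for free: the analyst only runs the efficient generator and tracer of~\cite{SU15} together with elementary bookkeeping, hence runs in time $\poly(n^2)=\poly(n)$; and the construction already operates with i.i.d.\ uniform samples.

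The main obstacle, and the technical heart, is the construction of the interactive fingerprinting code itself --- getting length $O(n^2)$ (the interactive analogue of Tardos's optimal non-interactive codes) with constant tracing error against \emph{adaptive} pirates and with an efficiently computable generator and tracer, so that the efficient-analyst claim goes through. A secondary point requiring care is the passage from naturalness to empirical accuracy in the presence of sample collisions, together with the bookkeeping of constants (the domain blow-up $T/n$, the traced fraction $|P'|/|P|$, and the accuracy constant $c$) so that the final query yields a genuine quantitative gap.
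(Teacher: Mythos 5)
The paper does not prove this statement; it cites it directly, with the ``furthermore'' strengthenings, as a result of Steinke and Ullman \cite{SU15}, so there is no in-paper argument to compare against. Your sketch is a faithful high-level reconstruction of the actual proof in \cite{SU15}: the reduction from interactive fingerprinting codes of length $O(n^2)$, the identification of a natural mechanism's view of a query with the pirate's restricted view of a coalition column, the padding trick that turns statistical accuracy plus naturalness into empirical accuracy on the sample, rounding the answers at $1/2$ to enforce the marking condition, and the closing query $q^{*}=\1[\cdot\in P']$ whose empirical and population values are separated by a constant because the domain is a constant factor larger than $n$; the efficiency and uniform-distribution strengthenings then follow because the generator and tracer are efficient and the construction is already set up over the uniform distribution. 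The two places your sketch is thinnest are real but are handled in \cite{SU15} and you correctly flag them: the substitution $q\mapsto q'$ cannot be done query-by-query in isolation but requires a coupling argument over the whole adaptive transcript (the answers to $q_i$ and $q_i'$ are identically distributed, so by induction over rounds the entire transcript is), and the marking condition is only guaranteed with probability $1-O(c)$ rather than always, so one must use that the \cite{SU15} tracer tolerates a small rate of marking violations. Given these caveats, I would call your proposal essentially correct and aligned with the source proof.
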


Combining Lemma~\ref{lem:AQ2Nfinal} with Theorem~\ref{thm:adaNegative2} we obtain the following result.

\begin{theorem}
Assume the existence of a sub-exponentially secure private-key encryption scheme. 
Then, the $\left(d{=}\Theta(\log m),m,\kappa{=}\polylog(m),\gamma{=}\Theta(1)\right)$-SADA2 problem can be solved in the oblivious setting to within constant accuracy using space $\polylog(m)$ and using $\polylog(m)$ runtime (per update). 
In contrast, every adversarially robust algorithm for this problem with $\poly(m)$ runtime per update must use space $\poly(m)$.
\end{theorem}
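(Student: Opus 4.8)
The plan is to establish both directions in a single parameter regime, combining the oblivious algorithm \texttt{ObliviousSADA2} with the reduction chain \texttt{AnswerQueries2} $\to$ \texttt{AnswerQueries2Natural}, and then contradicting the strengthened Steinke--Ullman bound (Theorem~\ref{thm:adaNegative2}). Concretely, I would fix a sufficiently small constant $\delta>0$ and set $\gamma=\delta^2=\Theta(1)$, $2^d=n/\delta$ (so $d=\Theta(\log n)$), and $m=\Theta(n^3)$; then $\ell=\tfrac{m-n}{2^d}=\Theta(n^2)$, the domain size is $2^d=\Theta(n)=\Omega(n)$, and $d=\Theta(\log m)$. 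I would also take $\kappa=\polylog(m)$ large enough that the assumed sub-exponentially secure scheme (secure against $2^{\kappa^\tau}$-time adversaries) resists all $\poly(m)$-time adversaries; then $\psi=\poly(\kappa)=\polylog(m)$ and $\negl(\kappa)$ lies below any fixed constant for $m$ large.

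For the oblivious upper bound I would invoke the accuracy theorem for \texttt{ObliviousSADA2} (its hypotheses $2^d=\Omega(\tfrac1\gamma\ln(\tfrac m\beta))$ and $|D|\ge\Omega(\tfrac1{\alpha^2\gamma^2}\ln(\tfrac m\beta))$ hold in this regime for constant $\alpha,\beta,\gamma$) together with Observation~\ref{obs:oblivious2Space}: the space is $\tilde O(\log m\cdot\log|X|)$ plus the workspace of $\Dec$, and $\log|X|=1+d+\log m+\psi=\polylog(m)$, so the total is $\polylog(m)$. The per-update runtime is also $\polylog(m)$, since each update triggers at most $|D_X|=O(\log m)$ bitwise-AND updates on $\polylog(m)$-bit strings and $O(\log m)$ calls to $\Dec$ on $\polylog(m)$-bit ciphertexts.

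For the adversarial lower bound I would argue by contradiction: assume $\AAA$ is an adversarially robust streaming algorithm for the $(d,m,\kappa,\gamma)$-SADA2 problem with $\poly(m)$ runtime per update and space $w<\epsilon n$ for a small enough constant $\epsilon>0$. By Lemma~\ref{lem:AQ2Nfinal} (whose cryptographic hypothesis is met by our choice of $\kappa$), algorithm \texttt{AnswerQueries2Natural} is natural and $(\tilde\alpha,\tilde\beta)$-statistically accurate for $\ell=\Theta(n^2)$ adaptive queries under the uniform distribution against a $\poly(m)$-time analyst, where in our regime $\tfrac{\gamma 2^d}{n}=\delta$, $\tfrac{n}{2^d}=\delta$, $\sqrt{(w+\ln(\ell/\beta'))/n}=o(1)$ (as $w<\epsilon n$ and $\ln(\ell/\beta')=O(\log m)=o(n)$), and $\exp(-\tfrac{n^2}{3\cdot2^d})=\exp(-\Theta(n))$; choosing $\alpha,\beta,\beta'$ small enough then makes $\tilde\alpha<c$ and $\tilde\beta<c$ for the universal constant $c$ of Theorem~\ref{thm:adaNegative2}. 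Since this yields a natural, efficient-analyst, $(c,c)$-accurate mechanism for $\Omega(n^2)$ adaptive queries over a domain of size $\Omega(n)$ under the uniform distribution, it contradicts Theorem~\ref{thm:adaNegative2}. Hence $w\ge\epsilon n=\Omega(m^{1/3})=m^{\Omega(1)}$, i.e.\ any adversarially robust algorithm with $\poly(m)$ runtime must use polynomially large space.

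The step I expect to require the most care is the joint tuning of the constants $\delta$ (equivalently $2^d/n$), $\gamma$, $\alpha$, and $\beta'$ so that $\tilde\alpha$ in Lemma~\ref{lem:AQ2Nfinal} sits strictly below the Steinke--Ullman constant $c$ while $\gamma$ remains a fixed constant (as the theorem demands) and the domain stays $\Theta(n)$ so that the set-versus-multiset correction $\tfrac{n}{2^d}$ stays small; since all these are genuine constants this is a finite check, but it is where the argument has the least slack. A secondary point is the cryptographic bookkeeping: $\kappa=\polylog(m)$ must be large enough that the sub-exponentially secure scheme defeats the $\poly(m)$-time reduction $\BBB$ of Lemma~\ref{lem:computational}, yet small enough that $\psi=\poly(\kappa)$ keeps the oblivious space at $\polylog(m)$; and one should read the theorem's phrase ``space $\poly(m)$'' as asserting $w=m^{\Omega(1)}$, which is exactly what the bound $w=\Omega(n)=\Omega(m^{1/3})$ delivers.
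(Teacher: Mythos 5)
Your proposal is correct and takes essentially the same approach as the paper intends: the paper itself gives no explicit proof of this theorem (it just says the result follows by combining Lemma~\ref{lem:AQ2Nfinal} with Theorem~\ref{thm:adaNegative2}), and your parameter choices ($\gamma=\delta^2$, $2^d=n/\delta$, $m=\Theta(n^3)$, $\kappa=\polylog(m)$ large enough for sub-exponential security) together with the accuracy/space bounds for \texttt{ObliviousSADA2} and the reduction to the Steinke--Ullman lower bound are exactly the missing details. The joint-tuning-of-constants check you flag and the $w=\Omega(m^{1/3})=m^{\Omega(1)}$ reading of ``space $\poly(m)$'' are both handled correctly.
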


\bibliographystyle{abbrv}

\end{document}